\numberwithin{equation}{section}
\theoremstyle{plain}
\newtheorem{thm}{Theorem}[section]
\newtheorem{cor}[thm]{Corollary}
\newtheorem{lem}[thm]{Lemma}
\newtheorem{prop}[thm]{Proposition}
\newtheorem{hyp}[thm]{Hypothesis}
\newtheorem{rmk}[thm]{Remark}
\newcommand{\be}{\begin{equation}}
\newcommand{\ee}{\end{equation}}
\newcommand{\benn}{\begin{equation*}}
\newcommand{\eenn}{\end{equation*}}
\newcommand{\bq}{\begin{eqnarray}}
\newcommand{\eq}{\end{eqnarray}}
\newcommand{\ex}{\mathbb{E}}
\newcommand{\half}{\frac{1}{2}}
\newcommand{\ind}{\mathbf{1}} %Indicator Function
\newcommand{\nn}{\nonumber}
\newcommand{\ul}{\underline}
\newcommand{\ol}{\overline}
\newcommand{\mc}{\mathcal}
\newcommand{\p}{\partial}
\newcommand{\e}{\varepsilon}
\newcommand{\et}{\mathrm{e}}
\newcommand{\pr}{\mathbb{P}}
\newcommand{\diff}{\textup{d}}
\newcommand{\prq}{{\Phi(r+q)}}
\newcommand{\R}{\mathbb{R}}
\newcommand{\timset}{\mathcal{T}}
\renewcommand{\bar}{\overline}
\renewcommand{\tilde}{\widetilde}
\begin{document}

\begin{frontmatter}
\title{\large BEATING THE OMEGA CLOCK: AN OPTIMAL STOPPING PROBLEM WITH RANDOM TIME-HORIZON UNDER SPECTRALLY NEGATIVE L\'EVY MODELS (EXTENDED VERSION)}
\runtitle{Beating the Omega clock}

\begin{aug}
\author{\fnms{Neofytos} \snm{Rodosthenous}\thanksref{t2} %\thanksref{t1,t2}
\ead[label=e1]{n.rodosthenous@qmul.ac.uk}}
%\and
%\author{\fnms{Second} \snm{Author}\thanksref{t3}\ead[label=e2]{second@somewhere.com}}

\address{School of Mathematical Sciences, Queen Mary University of London\\ Mile End Road, London E1 4NS, UK\\
\printead{e1}}

\author{\fnms{Hongzhong} \snm{Zhang}\thanksref{t3}
\ead[label=e3]{hz2244@columbia.edu}
\ead[label=u1,url]{www.foo.com}}

\address{Department of IEOR, Columbia University\\
 New York, NY 10027, USA\\
\printead{e3}
%\printead{u1}
}

%\thankstext{t1}{Some comment}
\thankstext{t2}{The author gratefully acknowledges support from EPSRC Grant Number EP/P017193/1.}
\thankstext{t3}{Corresponding author.}
\runauthor{Rodosthenous and Zhang}

\affiliation{Queen Mary University of London and Columbia University}

\end{aug}

\begin{abstract}
We study the optimal  stopping of an American call option in a random time-horizon under exponential spectrally negative L\'evy models. The random time-horizon is modeled as the so-called Omega default clock in insurance, which is 
the first time when the occupation time of the underlying L\'evy process  
below a level $y$, exceeds an independent exponential random variable with mean $1/q>0$.  
We show that the shape of the value function varies qualitatively with different values of $q$ and 
$y$. In particular, we show that for certain values of $q$ and $y$, 
some quantitatively different but traditional up-crossing strategies are still optimal, 
while for other values we may have two disconnected 
continuation regions, resulting in the optimality of two-sided exit strategies.
By deriving the joint distribution of the discounting factor and the underlying process under a random discount rate, we give a complete characterization of all optimal exercising thresholds. 
Finally, we present an example with a compound Poisson process plus a drifted Brownian motion. 
\end{abstract}

\begin{keyword}[class=MSC]
\kwd[Primary ]{60G40}
\kwd{60G51}
\kwd[; secondary ]{60G17}
\end{keyword}

\begin{keyword}
\kwd{L\'evy process} \kwd{optimal stopping} \kwd{Omega clock} \kwd{occupation times} \kwd{random discount rate} \kwd{impatience}
\end{keyword}
%\tableofcontents
\end{frontmatter}

\section{Introduction}

We consider a market with a risky asset whose price is modeled by  $\et^{X}$, where $X=(X_t)_{t\ge0}$ is a spectrally negative L\'evy process on a filtered probability space $(\Omega, \mathbb{F}=(\mathcal{F}_t)_{t\ge0},\pr)$. Here $\mathbb{F}$ is the augmented natural filtration of $X$. 
Fix a positive constant $q>0$, an Omega clock with rate $q$,  which measures the amount of time when $X$ is below a pre-specified level $y\in\mathbb{R}$, is defined as
\be\omega_t^y:=q\int_{(0,t]}\ind_{\{X_s<y\}}\diff s.\label{eq:omega}\ee
Let $\mathbf{e}_1$ be a unit mean exponential random variable which is independent of $X$, and denote by $\timset_0$ the set of all $\mathbb{F}$-stopping times.
We are interested in the following optimal stopping problem:
\[v(x;y):=\sup_{\tau\in \timset_0}\ex_x[\et^{-r\tau}(\et^{X_{\tau}}-K)^+\ind_{\{\omega_\tau^y<\mathbf{e}_1,\tau<\infty\}}],\]
where $r>0$ is a discount rate\footnote{The case $r\le0$ can be handled using the measure change technique as in \cite{TimKazuHZ14}.}, $K>0$, and $\ex_x$ is the expectation under $\pr_x$, which is the law of $X$ given that $X_0=x\in\R$. In other words, we look for the optimal exercising strategy for an American call option with strike $K$ and a random maturity given by the alarm of an Omega clock.

The study of random maturity American options was commenced by Carr \cite{Carr98}, where a Laplace transform method was introduced to finance to ``randomize'' the maturity, a technique known as Canadization. Different from us, a Canadized American option has a random maturity that is completely independent of the underlying asset. On the other extreme, an American barrier option is a random maturity American option with a maturity completely determined by the underlying asset,
see Trabelsi \cite{Trabelsi11} for an example. 
One of our motivations in this study is thus to build a general framework that concatenates the aforementioned special cases, through an Omega clock that only accumulates time for maturity when the underlying asset price is below the threshold $\et^y$.

The first use of occupation times in finance dates back to the celebrated work of Chesney, Jeanblanc and Yor \cite{CJYParis97}, who introduced and studied the so-called Parisian barrier option.  
Since then,
there has been a considerable amount of work on Parisian ruins for both L\'evy processes and reflected L\'evy processes at the maximum, see e.g. \cite{LLZ15, Loeffen2013}, among others.
Other related path-dependent options, whose payoffs reflect the time spent by the underlying asset price in certain ranges, were studied under Black-Scholes models by \cite{HR94, Miura92}.
The idea of using the cumulative occupation times  
originates from Carr and Linetsky's intensity based evaluation of executive stock options, or ESOs \cite{CarrLin00}.  
 Interestingly, the same idea has been applied later in insurance literature for studying the optimal dividend payment threshold in the presence of a so-called Gamma-Omega model \cite{AGS11}. 
 Subsequent research using this concept in insurance and applied probability literatures includes 
\cite{occupationLevy, OccupationInterval, ZhangAAP15}.

The problem addressed in this paper can be interpreted as the evaluation of an American variant of ESOs, where the risk is not on the resignation or early liquidation of the executives (as in the typical ESO setting), but is on the ``impatience'' Omega clock. 
This is an American option, which takes into account the cumulative amount of time that the underlying asset price is in a certain ``bad zone'' that reduces the holder's confidence on the underlying and hence shortens the statistical time horizon of the problem. 
The aim of this novel formulation is to capture quantitatively the accumulated impatience of decision makers in financial markets, when the latter do not move in their favor. Our mathematical analysis and derivation of optimal strategies can serve as an informative analytical tool for the commonly observed financial transactions that are affected by impatience 
(we refer to \cite{HaldaneImpatience} for an extensive report on the role of impatience in finance). 

Our problem can also be equivalently recast as an optimal stopping of a perpetual American call option with a random discount rate. Indeed,
by the independence between $X$ and $\mathbf{e}_1$, we have $\pr_x(\omega_\tau^y<\mathbf{e}_1, \tau<\infty|\mathcal{F}_\infty)=\exp(-\omega_\tau^y)\ind_{\{\tau<\infty\}}$, which implies that 
\be 
\label{eq:problem0}
v(x;y)=\sup_{\tau\in \timset_0}\ex_x[\et^{-A_\tau^y}(\et^{X_{\tau}}-K)^+\ind_{\{\tau<\infty\}}],
\ee
where $A_t^y$ is the occupation time
\be \label{A}
A_t^y:=rt+\omega_t^y,\quad\forall t\ge0.
\ee
A European-type equivalent to the option \eqref{eq:problem0} was considered by Linetsky \cite{Linetsky99} under the Black-Scholes model, that he named a Step call option.

A study of general perpetual optimal stopping problems with random discount rate %, such as our problem in \eqref{eq:problem0}, 
is done in \cite{Dayanik2008b} (though the problem in \eqref{eq:problem0} was not considered there), by exploiting Dynkin's concave characterizations of the excessive functions for general linear diffusion processes. However, this approach has limitations when dealing with processes with jumps, due to the extra complication resulting from overshoots.

Part of our solution approach for the optimal stopping problem is inspired by \cite{Alili2005,mordecki2002} and the recent work \cite{TimKazuHZ14}. As noted in \cite{KazuEgamiAAP14,TimKazuHZ14}, while it may be standard to find necessary conditions for candidate optimal thresholds, it is far from being trivial to verify whether the associated value function satisfies the super-martingale condition, a key step in the verification. 
On the other hand, \cite{Alili2005,mordecki2002} solved the optimal stopping problems for the pricing of  perpetual American call and put options by directly constructing a candidate value function and
verifying a set of sufficient conditions for optimality,
using the Wiener-Hopf factorization of L\'evy processes and relying on neither continuous nor smooth fit conditions, reflecting the power of this approach. 
 Building on these ideas, \cite{Surya2007} reduced an optimal stopping problem to an averaging problem, which was later solved in \cite{TimKazuHZ14} by the equation that characterizes the candidate optimal thresholds.  
In this paper, we primarily focus on the case when the discounted asset price is a super-martingale and   
show that such a connection, as described above, still holds, thus generalizing the approach developed in \cite{Alili2005,TimKazuHZ14,mordecki2002,Surya2007}. 
In particular, 
using this result we prove 
the optimality of a certain up-crossing strategy for certain values of the parameters $q$ and $y$ of the Omega clock, by also checking its dominance over the intrinsic value function.  

However, under a random discount rate, up-crossing strategies may not be optimal for some set of model parameters. 
Intuitively, when the Omega clock has a large rate $q>0$, which results in a statistically shorter time-horizon if $X$ spends too much time below $y$,  then it might be optimal to stop if $X-y$ is too small.  
This leads to the consideration of two-sided exit strategies (at which point the above idea of averaging problem ceases to work). 
To be more precise, we prove that 
the optimal stopping region and the continuation region can have two disjoint components, %if and only if the level $y\in I$. 
%This phenomenon 
which provides an interesting insight for the optimal exercising strategy, which can be either a profit-taking-type exit or a stop-loss-type exit (relative to the starting point), and  varies qualitatively with the starting price. 
Moreover, the continuation and stopping regions appear alternately.\footnote{\label{fn2}Therefore, in the case when the underlying process $X$ jumps from the upper continuation region over the lower stopping region, it will then be optimal to wait until $X$ increases to the upper boundary of that lower continuation region. So the optimal exercising strategy cannot be expressed as a one-shot scheme like a first passage or first exit time.} In order to establish the result, we use a strong approximation technique as in \cite{OccupationInterval} to explicitly derive the value function of a general two-sided exit strategy (see Proposition \ref{prop:V2value} for this new result), and then 
prove that, in cases of unbounded variation,  the smooth fit condition holds (regardless of $\sigma>0$ or not) for the value function at every boundary point of the optimal continuation region, echoing assertions in \cite{Alili2005, KazuEgamiAAP14}; in cases of bounded variation, the smooth fit condition still holds at all boundary points but one, where only continuous fit holds (see Proposition \ref{prop:sfit}).
As a consequence, in Corollary \ref{cor:fit} and Proposition \ref{thm:pair}, we obtain a novel qualitative characterization of the upper continuation region $(a^\star(y), b^\star(y))$.

Building upon the above results, we also study the case when the discounted asset price is a martingale and  
prove that for ``small'' $y$, 
the solution is trivial and identical to the perpetual American call option case, i.e. for $y=-\infty$, studied by Mordecki \cite{mordecki2002}. 
Surprisingly, for ``large'' $y$ %\geq y_\infty$ 
not only the solution is not trivial, but takes significantly different forms, depending on the value of $X$. To be more precise, the stopping region is a closed finite interval below $y$, which results in a profit-taking exit at a lower value than the alternative stop-loss exit's value, while it is never optimal to stop if the asset price is above $\et^y$. Similar to the super-martingale case, the stop-loss exit strategy may even not be a one-shot scheme if an overshoot occurs$^{ \ref{fn2}}$. 
We also prove that when the discounted asset price is a sub-martingale, the solution is the same trivial one as in Mordecki's problem  for $y=-\infty$ \cite{mordecki2002}. 

The remaining paper is structured as follows. 
We begin with stating the main results of the paper, given by  Theorems \ref{thm:ult1}, \ref{thm:ult2} and \ref{thm:ult3}, in Section \ref{sec:res}, and devote the remaining sections to the development of their proofs. 
In particular, 
in Section \ref{sec:comp} we give some useful comparative statics for the value function. 
Then, Section \ref{sec:opt} is devoted to the study of a super-martingale discounted asset price and the proofs of Theorems \ref{thm:ult1} and \ref{thm:ult2}. More specifically, 
in Subsection \ref{sec:up} we investigate candidate up-crossing exercising thresholds and the equation satisfied by them. 
The optimality of these up-crossing and alternative two-sided exit strategies is established in Subsections \ref{sec:H>0}--\ref{sec:BV}. 
On the other hand, in Section \ref{sec:mart} we study the cases of a sub-martingale and a martingale discounted asset price and prove Theorem \ref{thm:ult3}.
Next, we consider a compound Poisson process plus a drifted Brownian motion in Section \ref{sec:gbm} and present numerical examples of a single and two disconnected components of optimal stopping region, illustrated in Figure \ref{fig1}. 
Finally, Section \ref{sec:con} provides the conclusion of the paper. 
The proofs of lemmas, omitted technical proofs 
and a collection of useful results on spectrally negative L\'evy processes and their scale functions, 
can be found in Appendices \ref{sec:prof} and \ref{sec:pre}.

\section{Main results}\label{sec:res}

In this section we begin by setting the scene and stating the main results, 
which we prove and present in greater detail in the subsequent Sections 
\ref{sec:comp}--\ref{sec:mart} and Appendix \ref{sec:prof}. 
We denote by $(\mu,\sigma^2,\Pi)$ the L\'evy triplet of the 
spectrally negative L\'{e}vy process $X$, and by $\psi$ its Laplace exponent. That is,
\begin{align}
\psi(\beta):=&\frac{1}{t}\log\mathbb{E}_0[\et^{\beta X_{t}}]=\mu \beta+\frac{1}{2}\sigma
^{2}\beta^{2}+\int_{(-\infty,0)}\left(  \et^{\beta x}-1-\beta x\ind_{\{x>-1\}}\right)
\Pi(\mathrm{d}x), \label{decomp}%
\end{align}
for every $\beta\in\mathbb{H}^{+}\equiv\{z\in\mathbb{C}: \Re z\ge0\}$. 
Here, $\sigma\geq0$, and the L\'{e}vy measure $\Pi(\mathrm{d}x)$ is supported 
on $(-\infty,0)$ with
$
\int_{(-\infty,0)}(1\wedge x^{2})\Pi(\mathrm{d}x)<\infty.
$
%We assume that $r>\psi(1)$, which is equivalent to the discounted stock price $(\et^{-rt+X_t})_{t\ge0}$ being a super-martingale.
In case $X$ has paths of bounded variation, which happens  if and only if
$\int_{(-1,0)}|x|\Pi(\mathrm{d}x)<\infty$ and $\sigma=0$, we can
rewrite (\ref{decomp}) as
\begin{equation}
\psi(\beta)=\gamma\beta+\int_{(-\infty,0)}(\et^{\beta x}-1)\Pi(\mathrm{d}x),\quad \text{for } \; \beta\geq0 
\quad \text{and } \; \gamma:=\mu+\int_{(-1,0)}|x|\Pi(\mathrm{d}x)\,.\,
\label{psi bv}%
\end{equation}
where $\gamma>0$ holds. 
% where the so-called drift $\gamma>0$, 
% $\gamma:=\mu+\int_{(-1,0)}|x|\Pi(\mathrm{d}x)>0$ 
% as $|X|$ is not a subordinator. 
For any $r\geq0$, the equation $\psi(\beta)=r$ has at least one 
positive solution, and we denote the largest one by $\Phi(r)$. 
Then, the $r$-scale function $W^{(r)}:\mathbb{R}\mapsto\lbrack0,\infty)$ is 
the unique function supported on $[0,\infty)$ with Laplace transform
\be
\int_{[0,\infty)}\et^{-\beta x}W^{(r)}(x)\mathrm{d}x=\frac{1}{\psi(\beta)-r},\quad \beta>\Phi(r). \label{laplace}
\ee
% It is known that $W^{(r)}(\cdot)$ is continuous and increasing on $[0,\infty)$.
%%
We extend $W^{(r)}$ to the whole real line by setting $W^{(r)}(x)=0$ for $x<0$.
Henceforth we assume that 
the jump measure $\Pi(\mathrm{d}x)$ has no atom,
thus $W^{(r)}(\cdot)\in C^{1}(0,\infty)$ (see e.g.  
\cite[Lemma 2.4]{Kuznetsov_2011}); lastly, we also assume that $W^{(r)}(\cdot)\in C^{2}(0,\infty)$ for all $r\geq0$, which is guaranteed if $\sigma>0$ (see e.g.  
% Kuznetsov et al. 
\cite[Theorem 3.10]{Kuznetsov_2011}).\footnote{However, $\sigma>0$ is not a necessary condition for $W^{(r)}(\cdot)\in C^2(0,\infty)$. For instance, a spectrally negative $\alpha$-stable process with $\alpha\in(1,2)$  satisfies this condition without a Gaussian component.} The $r$-scale function is closely related to the first passage times of $X$, which are defined as
\be \label{firstpass}
T_{x}^{\pm}:=\inf\{  t\geq0:X_{t} \gtrless x\},\quad  x\in%
\mathbb{R}.
\ee
Please refer to Appendix \ref{sec:pre} or
\cite[Chapter 8]{Kyprianou2006} for some useful results on this matter.

The infinitesimal generator $\mc{L}$ of $X$, 
which is well-defined at least for all functions ${\tilde F}(\cdot)\in C^2(\R)$, is given by
\[
\mc{L}{\tilde F}(x)=\frac{1}{2}\sigma^2{\tilde F}''(x)+\mu {\tilde F}'(x)+\int_{(-\infty,0)}({\tilde F}(x+z)-{\tilde F}(x)-\ind_{\{z>-1\}}z{\tilde F}'(x))\Pi(\diff z).
\]
In particular, $\mc{L}\et^{\beta x}=\psi(\beta)\,\et^{\beta x}$ for any $\beta\ge0$. 

Fix a $y\in[-\infty,\infty]$, let us introduce the ``stopping region'' $\mc{S}^y$, the set where the so-called time value vanishes:
\be
\mc{S}^y:=\{x\in\R: v(x;y)-(\et^x-K)^+=0\}. \label{S}
\ee 
In other words, the optimal exercise strategy for the problem \eqref{eq:problem0} is given by the stopping time 
\[
\tau^y_\star:=\inf\{t\geq 0\;:\; X_t\in\mc{S}^y\} .
\]
Note that, the special cases when $y=\infty$ or $y=-\infty$, namely the perpetual American call options with discount rates $r+q$ or $r$, respectively, have been studied in \cite{mordecki2002}. 
In what follows, we split our analysis in three parts, in which we study the cases of the discounted asset price being a super-martingale, a martingale and a sub-martingale. 
% In particular, the following theorems sumarise our main results in this paper.

We begin by assuming that the discounted asset price process $(\et^{-rt+X_t})_{t\ge0}$ 
is a (strict) $\pr_x$-super-martingale, which is equivalent to the model parameters satisfying $r>\psi(1)$. 
In this case, it is well-known (see, for example, \cite{mordecki2002}) that 
\begin{align}
\underline{v}(x)
:=&\sup_{\tau\in\timset_0}\ex_x[\et^{-(r+q)\tau}(\et^{X_{\tau}}-K)^+\ind_{\{\tau<\infty\}}]
% =\ex_x[\exp(-(r+q)T_{\underline{k}}^+)(\exp(X_{T_{\underline{k}}^+})-K)\ind_{\{T_{\underline{k}}^+<\infty\}}]\nn\\=&
%=\begin{cases}
=\ind_{\{x\le\underline{k}\}}\et^{\Phi(r+q)(x-\underline{k})}(\underline{K}-K)+\ind_{\{x>\underline{k}\}}(\et^x-K),
%\quad \text{for }\;r>\psi(1)-q,\\
%\ind_{\{r=\psi(1)-q\}} \et^{x} + \ind_{\{r<\psi(1)-q\}} \infty 
%\quad \text{for }\;r\leq\psi(1),
%\end{cases} 
\label{eq:underlinev}\\
\overline{v}(x)
:=&\sup_{\tau\in\timset_0}\ex_x[\et^{-r\tau}(\et^{X_{\tau}}-K)^+\ind_{\{\tau<\infty\}}]
% =\ex_x[\exp(-rT_{\overline{k}}^+)(\exp(X_{T_{\overline{k}}^+})-K)\ind_{\{T_{\overline{k}}^+<\infty\}}]\nn\\=&
%=\begin{cases}
=\ind_{\{x\le\overline{k}\}}\et^{\Phi(r)(x-\overline{k})}(\overline{K}-K)+\ind_{\{x>\overline{k}\}}(\et^x-K),
%\quad \text{for }\;r>\psi(1),\\
%\ind_{\{r=\psi(1)\}} \et^{x} + \ind_{\{r<\psi(1)\}} \infty 
%\quad \text{for }\;r\leq\psi(1),
%\end{cases} 
 \label{eq:overlinev}
\end{align}
and the optimal stopping regions take the form 
\be
\mc{S}^{\infty} = [\underline{k},\infty) \quad \text{and} \quad 
\mc{S}^{-\infty} = [\ol{k},\infty) , \label{stopinf}
\ee
where the exercising thresholds are given by
\be
\underline{k}=\log\underline{K},\quad \text{for} \;\;\underline{K}=\frac{\Phi(r+q)}{\Phi(r+q)-1}K, 
\quad \text{and} \quad
\overline{k}=\log\overline{K},\quad \text{for} \;\; \overline{K}=\frac{\Phi(r)}{\Phi(r)-1}K.  
\label{k}
\ee 
Notice that the fractions in (\ref{k}) are well-defined, because $\Phi(r+q)>\Phi(r)>1$, where the latter inequality follows from the standing assumption that $r>\psi(1)$.

When $y\in(-\infty,\infty)$, which is the subject of this work, the discount rate changes between $r$ and $r+q$. Hence, the value function  $v(x;y)$ should be bounded from below by $\underline{v}(x)$ and from above by $\ol{v}(x)$ (see Proposition \ref{prop:compare1}). As a consequence, 
 we know that the optimal stopping region $\mc{S}^y\subset\mc{S}^{\infty}=[\underline{k},\infty)$, and we can equivalently express the problem in \eqref{eq:problem0} as 
\be 
v(x;y)=\sup_{\tau\in \timset_0}\ex_x[\et^{-A_\tau^y}\underline{v}(X_{\tau})\ind_{\{\tau<\infty\}}]. \label{eq:problem00}
\ee
Notice that $\underline{v}(\cdot)\in C^1(\R)\cap C^2(\R\backslash\{\underline{k}\})$.

To state our result in the most general situation, it will be convenient to talk about the following hypothesis as a key insight/conjecture for our problem. 
\begin{hyp} \label{1stop}
Let $O$ be any connected component of the continuation region $(\mc{S}^y)^c$ of problem \eqref{eq:problem0},\footnote{Independent of this hypothesis, in Proposition \ref{prop:compare1}$(ii)$ we prove that $v(\cdot;y)$ is continuous, so the optimal stopping set $\mc{S}^y$ is a closed set and the continuation set $(\mc{S}^y)^c$ is an open set, i.e. a union of disjoint open intervals.} then
$O\cap\{x\in\R\backslash\{\underline{k}\}: (\mc{L}-r-q\ind_{\{x<y\}})\underline{v}(x)\ge 0\}\neq\emptyset$.
\end{hyp}

The following equivalence relation is crucial for deriving the solution to  problem \eqref{eq:problem0}. It is proved in the Appendix \ref{sec:prof} using results from Section \ref{sec:comp}.%--\ref{sec:opt}.
\begin{lem}\label{lem:equivalent}
Hypothesis \ref{1stop} is equivalent to the assertion that there is at most one component of the stopping region that lies on the right hand side of $y$.
\end{lem}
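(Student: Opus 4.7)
My plan is to translate Hypothesis~\ref{1stop} into pointwise sign information about
$g(x) := (\mc{L} - r - q\ind_{\{x<y\}})\underline{v}(x)$
and then match this with the structural statement about $\mc{S}^y$. Since $\underline{v}$ is the HJB value function for the constant-rate-$(r+q)$ perpetual American call (so $(\mc{L} - r - q)\underline{v} \equiv 0$ on $(-\infty, \underline{k})$ and $(\mc{L} - r - q)\underline{v} \le 0$ on $(\underline{k}, \infty)$), a direct case split driven by the relative positions of $\underline{k}$ and $y$ gives: $g \equiv 0$ on $(-\infty, \min(\underline{k}, y))$; $g = q\underline{v} > 0$ on $[y, \underline{k})$ (only when $y < \underline{k}$); $g = (\mc{L} - r - q)\underline{v} \le 0$ on $(\underline{k}, y)$ (only when $y > \underline{k}$); and $g = (\mc{L} - r - q)\underline{v} + q(\et^x - K)$ on $(\max(\underline{k}, y), \infty)$, with $g(x) \to -\infty$ as $x \to \infty$. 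In particular, the superlevel set $\{g \ge 0\}$ always contains $(-\infty, \min(\underline{k}, y))$.

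For the implication \emph{``at most one component of $\mc{S}^y$ in $[y, \infty)$''} $\Longrightarrow$ Hypothesis~\ref{1stop}: since $\mc{S}^y \subset [\underline{k}, \infty)$ by Proposition~\ref{prop:compare1}, the leftmost continuation component takes the form $(-\infty, c_0)$ with $c_0 \ge \underline{k}$ and thus automatically meets $\{g \ge 0\}$. Any other continuation component is a bounded open interval $(\alpha, \beta) \subset (\underline{k}, \infty)$, and under the assumption that at most one component of $\mc{S}^y$ lies in $[y, \infty)$, I would use the comparative statics of Section~\ref{sec:comp} to show that each such $(\alpha, \beta)$ either straddles $y$ (so the upward jump of $g$ at $y$, by $q\underline{v}(y) > 0$, places the right limit $g(y^+)$ in $\{g \ge 0\}$) or lies entirely inside $(\underline{k}, y)$ (in which case the value-matching constraint at its endpoints $\alpha, \beta \in \mc{S}^y$, combined with $g \le 0$ throughout, forces $g$ to vanish at an interior point by an extremality argument).

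Conversely, I would argue by contrapositive. Suppose $\mc{S}^y$ contains two disjoint components $C_1, C_2 \subset [y, \infty)$ with $\sup C_1 < \inf C_2$; then the gap $O^\star := (\sup C_1, \inf C_2)$ is a bounded continuation component sitting in $(\max(\underline{k}, y), \infty)$. On $O^\star$, $v$ satisfies $(\mc{L} - r)v = 0$ and equals $\underline{v}$ on $\partial O^\star$, while $\underline{v}$ satisfies $(\mc{L} - r)\underline{v} = g$ there. Applying It\^o's formula to $\et^{-A_t^y}(v - \underline{v})(X_t)$ stopped at the first exit time $\tau_{O^\star}$, and using $v = \underline{v}$ at exit locations lying in $\mc{S}^y$, I would derive
\benn
(v - \underline{v})(x) \;=\; \ex_x\!\left[\int_0^{\tau_{O^\star}} \et^{-A_s^y}\,(-g)(X_s)\,\mathrm{d}s\right] + \mathcal{J}(x),
\eenn
where $\mathcal{J}(x)$ captures the extra contribution from downward jumps carrying $X$ from $O^\star$ past $C_1$ into another continuation region (where $v > \underline{v}$). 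If any $x_0 \in O^\star$ satisfied $g(x_0) \ge 0$, then by continuity of $g$ on $(\max(\underline{k}, y), \infty)$ the integrand would be non-positive on a neighbourhood of $x_0$; controlling $\mathcal{J}$ via the upper bound $v \le \overline{v}$ from Section~\ref{sec:comp} would then force the right-hand side to be non-positive at $x_0$, contradicting $(v - \underline{v})(x_0) > 0$. Hence $g < 0$ throughout $O^\star$ and Hypothesis~\ref{1stop} fails.

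The principal obstacle I anticipate is the overshoot term $\mathcal{J}(x)$ in the contrapositive direction: because $X$ is spectrally negative, it can leave $O^\star$ by a large downward jump landing in the leftmost continuation region, where $v > \underline{v}$, so $v = \underline{v}$ does not hold at \emph{all} exit locations. Handling this forces a combination of the monotonicity of $v$ in $y$ from Section~\ref{sec:comp} with the uniform upper bound $v \le \overline{v}$ in order to dominate $\mathcal{J}$. A related subtlety in the direct implication is justifying that bounded continuation components entirely inside $(\underline{k}, y)$ must contain a zero of $g$, which rests on the continuous/smooth-fit property of $v$ at the endpoints of such a component and the continuity of $g$ on $(\underline{k}, y)$.
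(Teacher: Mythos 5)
Your proposal misses the single ingredient that makes the paper's proof work: the \emph{strict monotonicity} of $x\mapsto(\mc{L}-r-q\ind_{\{x<y\}})\underline{v}(x)=\chi(x)-q\ind_{\{x<y\}}\underline{v}(x)$ on $(\underline{k},\infty)\setminus\{y\}$, which follows from the monotonicity of $f$ (Lemma \ref{lem:ffun00}) via \eqref{eq:57}. The paper combines this with the first-order observation that this quantity must be $\le 0$ at the upper endpoint $b$ of the continuation component straddling $y$ (otherwise, by It\^o--L\'evy, it would be beneficial to continue past $b$), concluding that it is strictly negative on all of $(b,\infty)$. Both directions of the equivalence then fall out at once: under Hypothesis \ref{1stop} no continuation component can live in $(b,\infty)$, so $[b,\infty)$ is the unique stopping component right of $y$; and if the Hypothesis fails, the offending component must sit in $(b,\infty)$ (the leftmost one contains $(-\infty,\underline{k})$ where the generator quantity vanishes, and components strictly inside $(\underline{k},y)$ are excluded by Proposition \ref{prop:compare1}$(iii)$), which splits the stopping region to the right of $y$ in two.

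Your substitutes for this mechanism do not close. First, the jump of $g$ at $y$ only gives $g(y^+)=g(y^-)+q\underline{v}(y)>g(y^-)$, not $g(y^+)\ge 0$; you would need $\chi(y)\ge 0$, i.e.\ $y\le y_m$, which you have not established. Second, your ``extremality argument'' for a continuation component inside $(\underline{k},y)$ is backwards: there $g=\chi-q\underline{v}<0$ strictly (by \eqref{eq:supervl} and the strict decrease of $\chi$), so $g$ cannot be forced to vanish; the correct move is that such components simply cannot exist, by Proposition \ref{prop:compare1}$(iii)$. Third, in the contrapositive direction, showing the integrand $-g(X_s)$ is non-positive in a neighbourhood of one point $x_0\in O^\star$ controls nothing about the integral over the whole path up to the exit time, and the overshoot term $\mathcal{J}\ge 0$ (since $v\ge\underline{v}$) pushes the representation in the wrong direction for your contradiction. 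Without the monotonicity of $g$ on $(\underline{k},\infty)\setminus\{y\}$, none of these three steps can be repaired, so the proposed proof has a genuine gap in both implications.
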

 
\begin{rmk} \label{rmk:1stop}\normalfont 
Hypothesis \ref{1stop} is a natural conjecture for the obvious fact within a diffusion framework, in which (by Dynkin's formula) the (positive) time value at any point in the continuation region is the expectation of a time integral of $\et^{-A_t^y} (\mc{L}-r-q\ind_{\{x<y\}})\underline{v}(X_t)$ until entering the stopping region. 
For a general L\'evy process, even though jumps will nullify this argument, 
it is still peculiar if there is
a part of continuation region $O$ that falls completely inside the super-harmonic set of $\underline{v}(\cdot)$: $\{x\in\R\backslash\{\underline{k}\}: (\mc{L}-r-q\ind_{\{x<y\}})\underline{v}(x)< 0\}$, because it would imply (by It\^o-L\'evy's formula) that, it is beneficial to continue when the underlying process is in $O$, despite the fact that the expected gain in time value is negative at any instance before entering into the stopping region. 
A complete characterization  of the applicability of Hypothesis \ref{1stop} and alike for general L\'evy models is beyond the scope of this work\footnote{It is however possible to give a conclusive answer if we replace the indicator $\ind_{\{x<y\}}$ with $\ind_{\{x\ge y\}}$. However, by doing so we lose the interesting trade-off and consequently the rich solution structures in our problem.}. 
To give a concrete idea on the applicability of this conjecture, we show in Proposition \ref{prop:1stop} that a monotone density of the L\'evy measure $\Pi(\diff x)$ implies Hypothesis \ref{1stop}.  %\qed
\end{rmk}

We now present the main results of this work.
We first treat the case where Hypothesis \ref{1stop} is not needed as a sufficient condition, but on the contrary follows as a conclusion from the results.

Let us define a positive function that will be useful afterwards: 
\be
\mc{I}^{(r,q)}(x):=\int_{(0,\infty)}\et^{-\prq u}W^{(r)}(u+x)\diff u,\quad\forall x\in\R,
\label{eq:Zdef}
\ee
which is easily seen to be continuously twice differentiable over $(0,\infty)$, and satisfies $\mc{I}^{(r,q)}(x)=\et^{\prq x}/q$ for all $x\le 0$. 
\begin{thm} \label{thm:ult1}
Suppose that the model parameters satisfy $r>\psi(1)$, $X$ has paths of unbounded variation and %($\bar{u}=-\infty$) 
\be \label{cond1}
\mc{I}^{(r,q),\prime\prime}(x)\geq \mc{I}^{(r,q),\prime}(x) ,
% \quad \Leftrightarrow \quad 
% (\Phi(r+q)-1)\overline{\Lambda}(u)>\frac{W^{(r)\prime}(u)}{\mc{I}^{(r,q)}(u)}
% \quad \Leftrightarrow \quad 
% (\Phi(r+q)-1)\left(\Phi(r+q)-\frac{W^{(r)}(u)}{\mc{I}^{(r,q)}(u)}\right)>\frac{W^{(r)\prime}(u)}{\mc{I}^{(r,q)}(u)}
\quad\;\forall x>0.
\ee
Then the optimal stopping region of the problem \eqref{eq:problem0} is given by $\mc{S}^y=[z^\star(y),\infty)$, at whose boundary the smooth fit condition holds. In particular, 
\begin{enumerate}
\item[(i)] if $y\in(-\infty,\underline{k})$, then the optimal threshold $z^*(y)\in(\underline{k},\ol{k})$ is defined in \eqref{z*}, and the value function is given by 
\begin{align} \label{v1}
v(x;y)= \ind_{\{x<z^\star(y)\}}(\et^{z^\star(y)}-K)^+\frac{\mc{I}^{(r,q)}(x-y)}{\mc{I}^{(r,q)}(z^\star(y)-y)}+\ind_{\{x\ge z^\star(y)\}}(\et^x-K)\,;
\end{align}
\item[(ii)] if $y\in[\underline{k},\infty)$, then %the optimal threshold 
$z^*(y)=\underline{k}$ and %the value function is given by 
$v(x;y)=\underline{v}(x)$.
\end{enumerate}
\end{thm}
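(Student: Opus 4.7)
The plan is to follow the guess-and-verify strategy of \cite{Alili2005, mordecki2002, Surya2007, TimKazuHZ14}: construct a candidate value function as the performance of an up-crossing strategy, pin down the threshold by a smooth-fit equation, and then verify optimality using a supermartingale argument. For any $z>y$, since $X$ is spectrally negative we have $X_{T_z^+}=z$ on $\{T_z^+<\infty\}$, so the performance of $\tau=T_z^+$ reads $V_z(x;y)=(\et^z-K)^+\,\ex_x[\et^{-A_{T_z^+}^y}\ind_{\{T_z^+<\infty\}}]$. Using the joint Laplace transform of the first passage time and the occupation time for spectrally negative L\'evy processes (see Appendix \ref{sec:pre} and, e.g., \cite{occupationLevy}), one checks that the latter expectation equals $\mc{I}^{(r,q)}(x-y)/\mc{I}^{(r,q)}(z-y)$, which is precisely what motivates the explicit form in \eqref{v1}.

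Next I would derive the candidate threshold $z^\star(y)$ by imposing smooth fit at the boundary, namely $\et^{z^\star}\mc{I}^{(r,q)}(z^\star-y)=(\et^{z^\star}-K)\mc{I}^{(r,q),\prime}(z^\star-y)$ (this is \eqref{z*}). Existence and uniqueness in $(\underline{k},\overline{k})$ would follow from monotonicity and continuity of the map $z\mapsto V_z(x;y)$, combined with a comparison against the two extremal cases $y=\pm\infty$ handled in \cite{mordecki2002}: specifically, $V_z$ is maximized in $z$ at a unique interior point between $\underline{k}$ and $\overline{k}$ because the boundary values in \eqref{eq:underlinev}--\eqref{eq:overlinev} trap the candidate from below and above. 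These steps populate Subsection \ref{sec:up}.

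For the verification, I would invoke the standard dual criterion: the candidate $v(\cdot;y)$ in \eqref{v1} is optimal provided (a) $v(x;y)\ge (\et^x-K)^+$ for all $x$, and (b) $\bigl(\et^{-A_t^y}v(X_t;y)\bigr)_{t\ge0}$ is a $\pr_x$-supermartingale. Property (b) splits into two pieces: inside the continuation region $x<z^\star(y)$ the function $\mc{I}^{(r,q)}(x-y)$ is $(\mc{L}-r-q\ind_{\{x<y\}})$-harmonic by construction, giving a local martingale; in the stopping region $x\ge z^\star(y)$ the supermartingale drift must be controlled, which is exactly where condition \eqref{cond1} intervenes, by guaranteeing that the (non-local) jump contributions to $(\mc{L}-r-q\ind_{\{x<y\}})(\et^\cdot-K)$ are non-positive after accounting for the transition of $v$ across $z^\star(y)$. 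Property (a) then follows from smooth fit at $z^\star(y)$ together with condition \eqref{cond1}, which translates into a second-order comparison between $\mc{I}^{(r,q),\prime\prime}(x-y)$ and $\et^x$ ensuring that the candidate stays above the intrinsic value on the continuation region. The unbounded variation assumption permits smooth fit (as opposed to mere continuous fit) and lets the It\^o--L\'evy calculus go through at $z^\star(y)$.

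Case (ii) is then a short additional argument. When $y\ge \underline{k}$, the strategy $\tau=T_{\underline{k}}^+$ keeps $X$ below $\underline{k}\le y$ throughout, so $A_t^y=(r+q)t$ on $[0,T_{\underline{k}}^+]$ and the associated payoff already equals $\underline{v}(x)$; the reverse inequality $v(x;y)\le \underline{v}(x)$ then follows from the equivalent formulation \eqref{eq:problem00} by verifying that $(\et^{-A_t^y}\underline{v}(X_t))_{t\ge 0}$ is a supermartingale, which reduces to an inequality for $(\mc{L}-r-q\ind_{\{x<y\}})\underline{v}$ that holds thanks to $r>\psi(1)$ and the harmonicity of $\underline{v}$ on $(-\infty,\underline{k}]$. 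The hard part of the whole proof is unquestionably the supermartingale verification of (b) in case (i), i.e. justifying that condition \eqref{cond1} is strong enough to absorb the jump terms across the optimal boundary; all other ingredients are either computational identities for $\mc{I}^{(r,q)}$ or structural properties of spectrally negative L\'evy processes collected in Appendix \ref{sec:pre}.
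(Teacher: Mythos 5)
Your setup is right: the candidate is the value of an up-crossing strategy, Proposition \ref{prop:lap11} gives the explicit form \eqref{v1}, and the threshold is pinned down by the first-order/smooth-fit equation \eqref{eq:gK}. But the verification step — which you yourself flag as "unquestionably the hard part" — is left as an assertion, and the mechanism you assign to condition \eqref{cond1} is not the one that actually works. You propose to check the supermartingale property via the generator, claiming \eqref{cond1} guarantees that the non-local jump contributions of $(\mc{L}-r-q\ind_{\{x<y\}})$ acting on the candidate are non-positive on the stopping region "after accounting for the transition of $v$ across $z^\star(y)$." That accounting is exactly the computation the paper says is "far from being trivial" and deliberately avoids: because the candidate is not $\et^x-K$ below $z^\star(y)$, the integral term of $\mc{L}$ evaluated at $x\ge z^\star(y)$ mixes the two regimes, and no direct argument from \eqref{cond1} to the sign of that expression is given or obvious. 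The paper's route is entirely different: Lemma \ref{func} rewrites the intrinsic value as an average, $\et^x-K=\ex_x[\et^y g(\overline{X}_\zeta-y)]-K$, and Proposition \ref{thm:super} defines $V(x;y)=\ex_x[(\et^y g(\overline{X}_\zeta-y)-K)\ind_{\{\overline{X}_\zeta>z^\star(y)\}}]$; the supermartingale property then follows \emph{probabilistically} from the monotonicity of the integrand and the tower property over the running maximum, with no generator computation and no use of \eqref{cond1} at all. Condition \eqref{cond1} (equivalently $\bar{u}=-\infty$, i.e.\ global monotonicity of $g$) is used only for the remaining dominance step $V(x;y)\ge(\et^x-K)^+$, via the sign of $\p_x R(x;y)=\frac{\bar{\Lambda}(x-y)}{V(x;y)}(K-\et^y g(x-y))$ in Subsection \ref{sec:H>0}. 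So the one genuinely hard step of the theorem is missing from your argument, and your stated role for \eqref{cond1} is inverted relative to where it is actually needed.

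A secondary gap: for part $(ii)$ you verify $v\le\underline{v}$ by claiming $(\et^{-A_t^y}\underline{v}(X_t))$ is a supermartingale "thanks to $r>\psi(1)$ and harmonicity of $\underline{v}$ on $(-\infty,\underline{k}]$." This is insufficient: for $x\ge y\ge\underline{k}$ one needs $(\mc{L}-r)\underline{v}(x)=\chi(x)\le0$, which by \eqref{ym} holds only for $x\ge y_m$, and $r>\psi(1)$ alone does not give $y_m=\underline{k}$ (indeed in the setting of Theorem \ref{thm:ult2} it fails for $y\in[\underline{k},y_m)$). Here $y_m=\underline{k}$ does hold, but only because \eqref{cond1} forces $\mc{H}(0+)\ge0$, i.e.\ $\tfrac{\sigma^2}{2}\prq(\prq-1)\ge q$, which yields $\chi(\underline{k})\le0$ — a computation you would need to supply. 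The paper instead obtains part $(ii)$ for free from the same ratio argument as part $(i)$, since $\bar{u}=-\infty$ gives $\bar{y}=\infty$ and $z^\star(y)=\underline{k}$ for $y\ge\underline{k}$.
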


Thus, when $X$ has paths of unbounded variation and \eqref{cond1} holds, the  traditional up-crossing threshold-type exercising strategy is still optimal and the optimal stopping region is a connected set $\mc{S}^{y} = [z^\star(y),\infty)\subset [y,\infty)$, for some exercise threshold $z^\star(y)>\underline{k}$, when $y<\underline{k}$. 
On the other hand, the optimal stopping region is the connected set $\mc{S}^{y} = [\underline{k},\infty) \equiv [\underline{k},y)\cup[y,\infty)$, when $y\geq \underline{k}$, which is identical to the standard problem for $y=\infty$ presented above. 
In view of Lemma \ref{lem:equivalent}, Hypothesis \ref{1stop} holds true in both cases $(i)$--(ii).

In the following theorem, we consider all remaining cases not covered by Theorem \ref{thm:ult1}.

\begin{thm} \label{thm:ult2}
Suppose that the model parameters satisfy $r>\psi(1)$ and %($\bar{u}=-\infty$) 
either $X$ has paths of unbounded variation and \eqref{cond1} fails, 
or $X$ has paths of bounded variation.
Then there exist constants $\tilde{y}, y_m$ defined in \eqref{ytilde} and \eqref{ym} satisfying $\tilde{y}<y_m<\ol{k}$, such that, 
\begin{enumerate}
\item[(i)] if $y\in(-\infty,\tilde{y})$, then $\mc{S}^y=[z^\star(y),\infty)$, where the optimal threshold $z^*(y)\in(\underline{k},\ol{k})$ is defined in \eqref{z*}, and the value function is given by \eqref{v1};
	%\begin{align} \label
	%v(x;y)= \ind_{\{x<z^\star(y)\}}(\et^{z^\star(y)}-K)^+\frac{\mc{I}^{(r,q)}(x-y)}{\mc{I}^{(r,q)}(z^\star(y)-y)}+\ind_{\{x\ge z^\star(y)\}}(\et^x-K)\,;
	%\end{align}
\item[(ii)] if $y\in [y_m,\infty)$, then $\mc{S}^y=[\underline{k},\infty)$ and $v(x;y)=\underline{v}(x)$ is given by \eqref{eq:underlinev};
\item[(iii)] if $y=\tilde{y}$ (and Hypothesis \ref{1stop} holds, when \eqref{cond1} fails), then  $\tilde{y}>\underline{k}$ and $\mc{S}^{\tilde{y}}=\{\underline{k}\}\cup[z^\star(\tilde{y}),\infty)$, where %the optimal threshold 
$z^*(\tilde{y})\in(\underline{k},\ol{k})$ is defined in \eqref{z*}, and the value function is still given by \eqref{v1};
\item[(iv)] if $y\in(\tilde{y},y_m)$ and Hypothesis  \ref{1stop} holds, then $\tilde{y}>\underline{k}$ and $\mc{S}^y=[\underline{k},a^\star(y)]\cup[b^\star(y),\infty)$, where the optimal thresholds satisfy $\underline{k} < a^\star(y) < y < y_m < b^\star(y) < z^*(\tilde{y})$ and are given by %Proposition \ref{thm:pair}, 
\eqref{a*b*}, while the value function takes the form
\begin{align} \label{vflast}
v(x;y) = \underline{v}(x) + \ind_{\{x\in(a^\star(y),b^\star(y))\}} 
	% \Big( &\et^x Z_1^{(r-\psi(1),q)}(x,a^\star(y)) - K Z^{(r,q)}(x,a^\star(y)) 
	% \nn\\ &
	% - \int_{(a^\star(y),y)} W^{(r,q)}(x,w)f(w)\diff w - \int_{[y,x\vee y)} W^{(r)}(x-w)f(w)\diff w - \et^x + K \Big) . 
\Delta(x,a^\star(y);y)\,,
\end{align}
with $\underline{v}(x)$ given by \eqref{eq:underlinev} and $\Delta(x,a;y)$ being the positive function defined in \eqref{D}.
\end{enumerate}
Moreover, the smooth fit condition always holds at all boundaries of $\mc{S}^y$ in parts $(i)$, $(ii)$, $(iii)$ and at $\underline{k}$ and $b^\star(y)$ of part $(iv)$. Furthermore, in part $(iv)$, the smooth (continuous, resp.) fit condition holds at the boundary $a^\star(y)$  when $X$ has paths of unbounded  (bounded, resp.) variation.
\end{thm}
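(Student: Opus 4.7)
The plan is to split the parameter range of $y$ into the four regimes stated, construct an explicit candidate value function in each, and apply a verification argument based on the equivalent formulation \eqref{eq:problem00}. Throughout we exploit the a priori bounds $\underline{v}(x)\le v(x;y)\le\ol{v}(x)$ given by Proposition~\ref{prop:compare1} (which force $\mc{S}^y\subseteq[\underline{k},\infty)$), and reduce the task in each regime to checking that the discounted candidate $\et^{-A_t^y}v(X_t;y)$ is a $\pr_x$-supermartingale dominating $\underline{v}(\cdot)$. The constants $\tilde{y}$ and $y_m$ will be defined through the threshold equation \eqref{z*} and the natural transition for the Mordecki threshold $\underline{k}$ in the variational inequality associated with~$\underline{v}$, respectively, and the ordering $\underline{k}<\tilde{y}<y_m<\ol{k}$ will come from a monotonicity analysis of these defining equations.

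Part~(i) is handled as in Theorem~\ref{thm:ult1}: for $y<\tilde{y}$ the equation \eqref{z*} admits a root $z^\star(y)\in(\underline{k},\ol{k})$ such that the candidate \eqref{v1} obtained by smooth fit at $z^\star(y)$ dominates the gain $(\et^x-K)^+$ and satisfies $(\mc{L}-r-q\ind_{\{x<y\}})v(x;y)\le 0$ throughout $(-\infty,z^\star(y))$; the latter is a direct calculation on the scale function $W^{(r)}$ using the representation \eqref{eq:Zdef}. Part~(ii) follows because, once $y\ge y_m$, the function $\underline{v}(\cdot)$ itself verifies the variational inequality $(\mc{L}-r-q\ind_{\{x<y\}})\underline{v}(x)\le 0$ for $x<\underline{k}$; combined with the trivial inequality $\underline{v}(x)\ge(\et^x-K)^+$ this gives $v(\cdot;y)=\underline{v}(\cdot)$ via the standard verification argument. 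Part~(iii) is then obtained by gluing: at $y=\tilde{y}$ the candidate of~(i) is still admissible, but the inequality $(\mc{L}-r-q\ind_{\{x<\tilde{y}\}})v(x;\tilde{y})\le 0$ saturates exactly at $x=\underline{k}$, producing the isolated stopping point $\{\underline{k}\}$; continuity of $y\mapsto v(\cdot;y)$ (Proposition~\ref{prop:compare1}$(ii)$) and closedness of $\mc{S}^y$ ensure the stated form.

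The substance of the argument is Part~(iv). In view of Hypothesis~\ref{1stop} and Lemma~\ref{lem:equivalent}, the stopping region has at most one component strictly to the right of $y$, so the only geometry compatible with $\mc{S}^y\subseteq[\underline{k},\infty)$, the failure of (i) and the inclusion $[\underline{k},\infty)\not\subseteq\mc{S}^y$ (which holds for $y<y_m$ by part~(ii)'s defining property) is $\mc{S}^y=[\underline{k},a]\cup[b,\infty)$ with $\underline{k}<a<y<b$. The closed-form expression for the value of a general two-sided exit strategy provided by Proposition~\ref{prop:V2value} then yields a candidate of the form $\underline{v}(x)+\Delta(x,a;y)\ind_{\{x\in(a,b)\}}$; imposing smooth fit at $b$ expresses $b$ as a function $b(a)$ of $a$, and then imposing smooth or continuous fit at $a$ (according to whether $X$ has unbounded or bounded variation paths, cf.\ Proposition~\ref{prop:sfit}) delivers the system~\eqref{a*b*} that defines $(a^\star(y),b^\star(y))$. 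Monotone dependence of this system on $y$ gives the ordering $\underline{k}<a^\star(y)<y<b^\star(y)<z^\star(\tilde{y})$.

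The hardest step, and the one where genuine care is needed, is the supermartingale verification for the candidate in~(iv). The subtlety is the one flagged in footnote~\ref{fn2}: downward jumps can transport $X$ from the upper continuation region $(a^\star(y),b^\star(y))$ directly into the lower continuation region $(\underline{k},a^\star(y))$, so the inequality $(\mc{L}-r-q\ind_{\{x<y\}})v(x;y)\le 0$ must be checked \emph{globally} on $(-\infty,a^\star(y))\cup(a^\star(y),b^\star(y))$, with the jump integral crossing the interface at $a^\star(y)$. This is where the fit distinction of Proposition~\ref{prop:sfit} becomes essential: in the unbounded-variation case smooth fit at $a^\star(y)$ provides enough regularity to apply It\^o--L\'evy on $v(\cdot;y)$ directly, while in the bounded-variation case one applies the change-of-variables formula adapted to functions that are only $C^0$ at $a^\star(y)$, with the $\gamma\,v'$ drift term absorbing the kink thanks to the defining equation for $a^\star(y)$ in~\eqref{a*b*}. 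Combining this verification with the a priori bounds of Section~\ref{sec:comp} identifies the candidate with the value function and simultaneously yields all fit statements of the theorem.
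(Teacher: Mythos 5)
Your overall architecture is a classical guess-and-verify: build a candidate in each regime, impose (smooth/continuous) fit to pin down the thresholds, then check the variational inequalities. The paper deliberately avoids this route, and the place where your plan has a genuine gap is exactly the step you yourself call the hardest. In part (iv) you never carry out the supermartingale verification, and the sketch you give of it is not correct: the set $(\underline{k},a^\star(y))$ is a component of the \emph{stopping} region, not a ``lower continuation region'', and the generator inequality $(\mc{L}-r-q\ind_{\{x<y\}})v\le 0$ must in particular be checked on $[b^\star(y),\infty)$ (where jumps land in the upper continuation region and the nonlocal term involves $v>\et^x-K$), which your proposed check region $(-\infty,a^\star(y))\cup(a^\star(y),b^\star(y))$ omits entirely. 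This global inequality is precisely what is ``far from trivial'' for Lévy models (it is the reason the paper introduces Hypothesis~\ref{1stop} and proves Proposition~\ref{prop:1stop} under a monotone jump density), and asserting that the bounded-variation kink at $a^\star(y)$ is ``absorbed'' by the drift term is not a proof. The paper instead proceeds in the opposite direction: it first \emph{deduces} the geometry $\mc{S}^y=[\underline{k},a^\star(y)]\cup[b^\star(y),\infty)$ from Hypothesis~\ref{1stop}, Lemma~\ref{lem:equivalent}, Proposition~\ref{prop:compare1} and the inclusion \eqref{eq:someeq}, then computes the value of the corresponding two-sided strategy via Proposition~\ref{prop:V2value}, \emph{derives} (rather than imposes) smooth fit from first-order optimality (Proposition~\ref{prop:sfit}), and establishes optimality of $(a^\star(y),b^\star(y))$ by a comparison argument within two-sided strategies (Proposition~\ref{thm:pair}), with no global generator check. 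Relatedly, \eqref{a*b*} is not the fit system $\Delta=\p_b\Delta=0$ (that is only a necessary condition, Corollary~\ref{cor:fit}); it is the selection $a^\star(y)=\inf\{a:\mc{N}^a\neq\emptyset\}$, $b^\star(y)=\inf\mc{N}^{a^\star(y)}$, which is needed because the fit equations alone do not identify the optimal pair.

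There are also mismatches in how you define and use the critical levels. In the paper $\tilde{y}$ is not obtained from the threshold equation \eqref{z*}; it is defined in \eqref{ytilde} as the smallest $y$ at which the dominance ratio $R(x;y)=(\et^x-K)/V(x;y)$ reaches $1$ at some $x<\bar u+y$, i.e.\ the level at which the up-crossing candidate first fails to dominate the gain — and part (iii) is then proved by showing that the touching point $x_0$ solves $K=\et^{\tilde y}g(x_0-\tilde y)$ and hence equals $\underline{k}$, not by a ``saturation of the generator inequality at $\underline{k}$''. Similarly, in parts (i) and (iii) the paper never verifies $(\mc{L}-r-q\ind_{\{x<y\}})v\le 0$ by a scale-function computation; the supermartingale property is obtained path-wise from the representation $V(x;y)=\ex_x[(\et^yg(\overline{X}_\zeta-y)-K)\ind_{\{\overline{X}_\zeta>z^\star(y)\}}]$ of Proposition~\ref{thm:super}, and the only remaining check is $R(x;y)<1$. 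If you insist on the direct generator computation for the candidates in (i), (iii) and (iv), you must actually supply it (including the nonlocal term across all interfaces and on the stopping set), and as written your plan neither does this nor correctly identifies where it is needed.
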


In the cases studied in Theorem \ref{thm:ult2}, the level of $y$ plays an important role in the structure of the optimal stopping region. Specifically,
for ``small" values of $y$, the traditional up-crossing threshold-type exercising strategy is still optimal and the optimal stopping region is one connected component $\mc{S}^{y} = [z^\star(y),\infty)\subset (y,\infty)$, for some exercise threshold $z^\star(y)>y$. 
Also for ``large" values of $y$, the optimal strategy is identical to the case $y=\infty$ and the optimal stopping region is the traditional connected set $\mc{S}^{y} = [\underline{k},\infty)\equiv [\underline{k},y)\cup[y,\infty)$, 
as in Theorem \ref{thm:ult1}. 
%Thus, in cases $(i)$ and $(iv)$ of Theorem \ref{thm:ult2}, Hypothesis \ref{1stop} still holds true.
However, for some ``intermediate" values of $y$, the traditional threshold-type strategy is no longer optimal. Instead, there are exactly two components of stopping region, one inside $[\underline{k},y)$ and another in $(y,\infty)$. 
Therefore, in all cases, except 
%(iv) (and $(iii)$, when \eqref{cond1} fails), 
for the ones it is used as a condition, we can apply Lemma \ref{lem:equivalent} to conclude that Hypothesis \ref{1stop} holds true.

It is also observed from part $(iv)$ that, when $x\in(a^\star(y),b^\star(y))$, 
% in the event of $\{T_{a^\star(y)}^-<T_{b^\star(y)}^+, X_{T_{a^\star(y)}^-}<\underline{k}\}$, 
it is optimal to follow a non-traditional two-sided exit strategy. 
Moreover, in the event of an overshoot of $X$ from the set $(a^\star(y),b^\star(y))$ to $(-\infty,\underline{k})$, it is not optimal to stop immediately but wait until $X$ increases to $\underline{k}$.
% \footnote{
This means that, in contrast to most optimal stopping problems in L\'evy models literature with two-sided exit strategies (see e.g. \cite{ChristensenIrle13}), the optimal stopping time for our problem may not be a one-shot scheme like a first passage or first exit time. 
The target exercising threshold has to be re-adjusted if an overshoot occurs. 
% } 

It is seen that condition \eqref{cond1} plays a pivotal role in distinguishing Theorem \ref{thm:ult1} from Theorem \ref{thm:ult2} when $X$ is of unbounded variation, and deciding whether Hypothesis \ref{1stop} is needed as a sufficient condition in Theorem \ref{thm:ult2}$(iii)$ when $X$ is of bounded variation. 
In order to facilitate the verification of whether \eqref{cond1} holds or not, we will later provide in Remark \ref{rmk:u} convenient equivalences to  condition \eqref{cond1}, based on the sign of the quantity $\bar{u}$ defined by \eqref{u-} (see also Lemma \ref{lem:Hsign2q} for the relation of $\bar{u}$ with the problem's parameters). 
If we limit ourselves only to special classes of L\'evy jump measures, we have the following criterion.
%, proved in the Appendix \ref{sec:prof}. 

\begin{lem}\label{lem:explicit}
Suppose that the scale function $W^{(r)}(\cdot)\in C^2((0,\infty))$ and  the tail jump measure of $X$, denoted by $\overline{\Pi}(x):=\Pi(-\infty,-x)$ for $x>0$, either has a completely monotone density or is  log-convex, then \eqref{cond1} holds if and only if 
\be
\label{eq:explicit}
(\prq-1)(\prq-qW^{(r)}(0))-qW^{(r)\prime}(0+)\ge0.
\ee
\begin{proof}
See Appendix \ref{sec:prof}.
\end{proof}
% As a consequence, if $r>\psi(1)$, then results in Theorem \ref{thm:ult1} (\ref{thm:ult2}, resp.) hold as long as \eqref{eq:explicit} holds. (fails to hold, resp.) 
\end{lem}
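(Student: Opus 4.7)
The plan is to first reduce \eqref{cond1} to an inequality involving only the scale function and its derivative, then evaluate that inequality at $x=0^+$ to recover \eqref{eq:explicit}, and finally exploit a structural representation of $W^{(r)}$ (made available by the hypothesis on $\overline{\Pi}$) to propagate the boundary condition to the whole half-line.

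Starting from \eqref{eq:Zdef}, differentiating under the integral and performing one integration by parts give the recursions
\[
\mc{I}^{(r,q),\prime}(x)=\prq\,\mc{I}^{(r,q)}(x)-W^{(r)}(x),\qquad
\mc{I}^{(r,q),\prime\prime}(x)=\prq\,\mc{I}^{(r,q),\prime}(x)-W^{(r)\prime}(x).
\]
Subtracting these shows that \eqref{cond1} is equivalent to
\[
H(x):=(\prq-1)\,\mc{I}^{(r,q),\prime}(x)-W^{(r)\prime}(x)\ge 0\quad\text{for all }x>0.
\]
From \eqref{laplace} at $\beta=\prq$ one has $\mc{I}^{(r,q)}(0^+)=1/(\psi(\prq)-r)=1/q$, hence $\mc{I}^{(r,q),\prime}(0^+)=\prq/q-W^{(r)}(0)$, and multiplying $H(0^+)$ by $q$ reproduces exactly the left-hand side of \eqref{eq:explicit}. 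The ``only if'' direction is therefore immediate by continuity of $H$ on $[0,\infty)$.

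For the ``if'' direction, the hypothesis on $\overline{\Pi}$ is decisive. Under either a completely monotone density or log-convexity of $\overline{\Pi}$, classical representation results for scale functions (Loeffen; Kyprianou--Rivero--Song) give a decomposition
\[
W^{(r)}(x)=\frac{\et^{\prr x}}{\psi'(\prr)}-\int_{(0,\infty)}\et^{-\theta x}\,m(\diff\theta),\qquad x\ge 0,
\]
for some positive Radon measure $m$ on $(0,\infty)$. Substituting this into \eqref{eq:Zdef} and computing the two Laplace-type integrals in closed form leads, after simplification, to
\[
H(x)=\frac{\prr(\prr-1)}{\psi'(\prr)(\prq-\prr)}\,\et^{\prr x}-\int_{(0,\infty)}\frac{\theta(1+\theta)}{\prq+\theta}\,\et^{-\theta x}\,m(\diff\theta).
\]
Since the standing assumption $r>\psi(1)$ forces $\prr>1$, the first term is positive and strictly increasing in $x$, whereas the integral term is positive and strictly decreasing in $x$. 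Hence $H$ is strictly increasing on $(0,\infty)$, and $H(0^+)\ge 0$ propagates to $H(x)\ge 0$ for every $x>0$, which is precisely \eqref{cond1}.

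The main obstacle is the ``exponential-minus-completely-monotone'' decomposition of $W^{(r)}$ on which the monotonicity of $H$ hinges. This representation is nontrivial and is available precisely for the two classes of jump measures featured in the lemma: checking that the Kyprianou--Rivero--Song / Loeffen machinery applies (with the correct leading coefficient $1/\psi'(\prr)$ and the right support for $m$) in each of the two cases is the one technical step that the hypothesis on $\overline{\Pi}$ is designed to accommodate. Once the representation is granted, the rest of the argument is an elementary monotonicity comparison, and the equivalence between \eqref{cond1} and the single scalar condition \eqref{eq:explicit} follows.
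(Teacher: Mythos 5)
Your reduction is exactly the paper's: the two integration-by-parts identities $\mc{I}^{(r,q),\prime}=\prq\mc{I}^{(r,q)}-W^{(r)}$ and $\mc{I}^{(r,q),\prime\prime}=\prq\mc{I}^{(r,q),\prime}-W^{(r)\prime}$, together with $\mc{I}^{(r,q)}(0)=1/q$, show that $q\big(\mc{I}^{(r,q),\prime\prime}(0+)-\mc{I}^{(r,q),\prime}(0+)\big)$ equals the left-hand side of \eqref{eq:explicit}, and the ``only if'' direction follows by continuity. For the ``if'' direction, your argument is fine \emph{for the completely monotone case}: there the Loeffen / Kyprianou--Rivero--Song decomposition of $W^{(r)}$ as a leading exponential minus a completely monotone remainder is indeed available (modulo a possible atom of $m$ at $\theta=0$, which is harmless since $\theta(1+\theta)/(\prq+\theta)$ vanishes there), and it yields the clean and in fact stronger conclusion that $H=\mc{I}^{(r,q),\prime\prime}-\mc{I}^{(r,q),\prime}$ is strictly increasing.

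The genuine gap is the log-convex alternative in the hypothesis. Log-convexity of $\overline{\Pi}$ does \emph{not} give you the spectral representation you invoke; it only gives log-convexity of $W^{(r)\prime}$ (Kuznetsov--Kyprianou--Rivero, Theorem 3.5), and a log-convex function need not be of ``exponential minus completely monotone'' form. So as written your proof covers only one of the two classes named in the lemma. The paper's argument is built precisely to accommodate both: it isolates the weaker structural fact (Lemma \ref{lem:CM}) that $W^{(r)\prime}-W^{(r)\prime\prime}$ changes sign at most once, from positive to negative --- proved via the spectral representation in the completely monotone case and via monotonicity of $W^{(r)\prime\prime}/W^{(r)\prime}$ in the log-convex case --- and then runs a sign-change bookkeeping argument on $\mc{J}^{(r,q)}(x)=\et^{-\prq x}H(x)$, whose derivative is $\et^{-\prq x}\big(W^{(r)\prime}(x)-W^{(r)\prime\prime}(x)\big)$, using $\mc{J}^{(r,q)}(\infty)=0$ and $\mc{H}(\infty)>0$ to exclude a negative interior global minimum. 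To complete your proof you must either supply a separate argument for the log-convex case (e.g.\ the paper's) or justify why the decomposition extends to it, which I do not believe it does.
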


Finally, we close this section by considering the cases of discounted asset price process $(\et^{-rt+X_t})_{t\ge0}$ being either a (strict) $\pr_x$-sub-martingale or a $\pr_x$-martingale, which are equivalent to the model parameters satisfying $r<\psi(1)$ or $r=\psi(1)$, respectively.

\begin{thm} \label{thm:ult3}
If the model parameters satisfy:
\begin{enumerate}
\item[(a)] $r<\psi(1)$, then it is never optimal to stop the process $X$, namely $\mc{S}^y=\emptyset$, and the value function of the problem \eqref{eq:problem0} is $v(x;y)=\infty$ for all $x,y\in\R$. 
\item[(b)] $r=\psi(1)$, then there exists a unique value $y_\infty\in(0,\infty)$ given by \eqref{y_inf}, such that:
\begin{enumerate}
\item[(i)] if $y\in(-\infty, y_\infty)$, then it is never optimal to stop the process $X$, namely $\mc{S}^y=\emptyset$, however the value function $v(x;y)$ of the problem \eqref{eq:problem0} has a value and is given by 
\be \label{vinf}
v(x;y) = \frac{(\Phi(r+q)-1)}{\Phi'(r)} \,\et^{y}\,\mc{I}^{(r,q)}(x-y);
\ee
\item[(ii)] if $y=y_\infty$, then $\mc{S}^{y_\infty}=\{\underline{k}\}$, where $\underline{k}$ is given by \eqref{k} and $v(x;y_\infty)$ by \eqref{vinf}; 
\item[(iii)] if $y\in(y_\infty,\infty)$, then $\mc{S}^y=[\underline{k},a_\infty^\star(y)]$, where the optimal threshold $a_\infty^\star(y)$ is the unique solution of equation \eqref{eq:a_infty} and the value function is given by
\be \label{vfainf}
v(x;y)=\underline{v}(x)+\ind_{\{x\in(a_\infty^\star(y),\infty)\}}\Delta(x, a_\infty^\star(y); y),
\ee 
with $\Delta(x,a;y)$ being the positive function defined in \eqref{D}.
%where $\underline{v}(x)$ is given by \eqref{eq:underlinev} and $\Delta(x,a;y)$ is the positive function defined in \eqref{D}.
\end{enumerate}
\end{enumerate}
\end{thm}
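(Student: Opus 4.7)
For the sub-martingale regime $r<\psi(1)$, convexity of $\psi$ gives $\Phi(r)<1$. The plan is to exhibit a divergent lower bound along a simple sequence of stopping times. For $x>y$ and $n>x\vee\log K$, I would consider $\tau_n=T_n^+$ restricted to $\{T_n^+<T_y^-\}$. On this event the Omega clock never fires, $X_{T_n^+}=n$ by absence of positive jumps, and the classical two-sided exit identity yields
\[
\ex_x\!\left[\et^{-rT_n^+}(\et^n-K)\ind_{\{T_n^+<T_y^-\}}\right]=(\et^n-K)\,\frac{W^{(r)}(x-y)}{W^{(r)}(n-y)}\sim C(x,y)\,\et^{n(1-\Phi(r))}\to\infty,
\]
via the standard asymptotic $W^{(r)}(z)\sim \et^{\Phi(r)z}/\psi'(\Phi(r))$, so $v(x;y)=\infty$. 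For $x\le y$, strong Markov at $T_{y+1}^+$ propagates the result: $v(x;y)\ge v(y+1;y)\,\ex_x[\et^{-(r+q)T_{y+1}^+}\ind_{\{T_{y+1}^+<\infty\}}]=\infty$, since the Laplace functional on the right is strictly positive and bounds $\et^{-A^y_{T_{y+1}^+}}$ from below by monotonicity of $A^y$.

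\textbf{Part (b).} Here $\Phi(r)=1$ and Mordecki's problem has $\mc{S}^{-\infty}=\emptyset$, so case $(i)$ must be treated as a ``no-optimal-stopping-time'' problem. I would define the candidate value function as the limit
\[
V(x;y):=\lim_{n\to\infty}(\et^n-K)\,\ex_x\!\left[\et^{-rT_n^+-q\omega_{T_n^+}^y}\ind_{\{T_n^+<\infty\}}\right].
\]
The joint scale-function formula for $\ex_x[\et^{-rT_n^+-q\omega_{T_n^+}^y}]$ established earlier in the paper, combined with $W^{(r)}(z)\sim \psi'(1)^{-1}\et^z$ and $\Phi'(r)=1/\psi'(1)$, collapses this limit to the closed form \eqref{vinf}. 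Next, $y_\infty$ is defined by imposing $V(\underline{k};y_\infty)=\et^{\underline{k}}-K=K/(\Phi(r+q)-1)$; strict monotonicity of the left side in $y$ (checked by differentiating \eqref{vinf} using the explicit piecewise form of $\mc{I}^{(r,q)}$) delivers uniqueness and the sign $y_\infty>0$. Case $(ii)$ is then the boundary instance. For $(iii)$ I would mirror the construction of Theorem \ref{thm:ult2}$(iv)$: propose the two-sided exit strategy $\inf\{t\ge0:X_t\in[\underline{k},a_\infty^\star(y)]\}$ with candidate value \eqref{vfainf}, where the threshold $a_\infty^\star(y)\in(\underline{k},\infty)$ is pinned down by smooth fit at $a_\infty^\star(y)$, producing equation \eqref{eq:a_infty}. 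A monotonicity/continuity argument analogous to the one used for $b^\star$ in Section \ref{sec:opt} provides existence, uniqueness, and $a_\infty^\star(y)\downarrow \underline{k}$ as $y\downarrow y_\infty$, matching $(ii)$ at the boundary.

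\textbf{Verification and main obstacle.} In every subcase optimality follows the by-now-standard pattern: admissibility $v(\cdot;y)\ge(\et^{\cdot}-K)^+$ is read off from the explicit formulas, and the super-martingale property of $\et^{-A_t^y}v(X_t;y)$ is obtained from the variational inequality $(\mc{L}-r-q\ind_{\{\cdot<y\}})v(\cdot;y)\le 0$ off $\mc{S}^y$, via It\^o--L\'evy applied to a $C^2$ mollification. The delicate step will be case $(b)(i)$: since $\mc{S}^y=\emptyset$ there is no optimiser, so the identification of $v$ with the supremum must be carried out without one, with the upper bound coming from the super-martingale property and a Fatou argument along $\tau_n=T_n^+$, and the lower bound from the explicit construction above. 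A further subtlety shared with $(iii)$ is the jump of the generator across the level $y$, which I would handle by exploiting the explicit identity $\mc{I}^{(r,q)}(x-y)=\et^{\Phi(r+q)(x-y)}/q$ for $x\le y$ and pasting it with the $x>y$ branch through the smoothness of $W^{(r)}$.
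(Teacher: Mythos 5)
Your overall architecture is correct and your endgame (verification via variational inequalities plus a Fatou/optional-sampling upper bound) matches the paper's, but several of your computational routes are genuinely different. For part (a) the paper does not use spatial passage times at all: it bounds $\ex_x[\et^{-A_t^y}(\et^{X_t}-K)^+]$ at \emph{deterministic} times $t$ via the Esscher change of measure to $\pr_x^1$, obtaining the prefactor $\et^{x+(\psi(1)-r)t}$ times an occupation-time functional that stays bounded because $X$ drifts to $+\infty$ under $\pr^1$. Your alternative — restricting to $\{T_n^+<T_y^-\}$ so that the Omega clock never runs, applying the two-sided identity \eqref{up} and the asymptotic $W^{(r)}(z)\sim\et^{\Phi(r)z}/\psi'(\Phi(r))$ with $\Phi(r)<1$ — is correct and arguably more elementary, needing only Lemma \ref{lem W} rather than the external occupation-time result. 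Likewise in (b)(i), the paper obtains \eqref{vinf} as $\lim_{t\to\infty}\ex_x[\et^{-A_t^y}(\et^{X_t}-K)^+]$ by citing an occupation-time identity of Loeffen–Renaud–Zhou, whereas you recover the same constant $\psi'(1)(\Phi(r+q)-1)=(\Phi(r+q)-1)/\Phi'(r)$ from Proposition \ref{prop:lap11} and the asymptotics of $\mc{I}^{(r,q)}$; both are legitimate lower-bound constructions. The definition of $y_\infty$ by indifference at $\underline{k}$ and the characterization of $a_\infty^\star(y)$ by the vanishing of the coefficient of $W^{(r,q)}(x,a)$ (continuous/smooth fit) in the $b\to\infty$ limit of Proposition \ref{prop:V2value} are exactly the paper's steps.

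The one substantive omission is the structural fact that drives the whole of part (b): when $r=\psi(1)$ one has, for all $x\ge\underline{k}\vee y$,
\begin{equation*}
(\mc{L}-r-q\ind_{\{x<y\}})\underline{v}(x)=\chi(x)=rK+f(x)\ge rK>0,
\end{equation*}
because the term $-(r-\psi(1))\et^x$ vanishes. This shows it is never optimal to stop above $\underline{k}\vee y$, hence $\mc{S}^y=\emptyset$ for $y<\underline{k}$ and $\mc{S}^y\subset[\underline{k},y]$ otherwise, which combined with Proposition \ref{prop:compare1}$(iii)$ forces $\mc{S}^y$ to be an interval $[\underline{k},a]$. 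Without this step your plan has two soft spots: in (b)(i) your verification requires $V(x;y)\ge(\et^x-K)^+$ for \emph{all} $x$, including $x$ large, and this is not ``read off from the explicit formulas'' (the paper's ratio argument only covers $[\underline{k},y_\infty)$ and is sufficient precisely because the generator computation has already excluded everything above $\underline{k}\vee y$); and in (b)(iii) mirroring Theorem \ref{thm:ult2}$(iv)$ would, on its face, leave open the possibility of a second stopping component $[b^\star,\infty)$, which is exactly what the positivity of $\chi$ above $y$ rules out here. Both gaps are fillable by the one-line computation above, so this is an omission rather than a failure, but it should be made explicit.
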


Recall from \cite{mordecki2002}, that the solution to the perpetual American call option ($y=-\infty$) is trivial in both cases covered in Theorem \ref{thm:ult3}. 
Namely, in case (a) we have $\mc{S}^{-\infty}=\emptyset$ and $v(x;-\infty)=\infty$, while in case (b) we have $\mc{S}^{-\infty}=\emptyset$ and $v(x;-\infty)=\et^x$.
In this paper, we demonstrate that the problem in part (a) remains identically trivial. 
Interestingly, the solution to the problem studied in part (b) is on the contrary non-trivial. In fact, the optimal exercise strategy admits a surprising structure, which can be either an up-crossing or a down-crossing one-sided exit strategy, depending on the starting value $x$, for $y\in[y_\infty,\infty)$. Furthermore, the strategy may not be a one-shot scheme if an overshoot occurs, making the result even more fascinating.  

We devote the following sections to proving the aforementioned results and providing an illustrating example.

\section{Comparative statics of the value function $v(\cdot;y)$}\label{sec:comp}

In this section, we present some useful stylized facts for the value function $v(x;y)$ by comparative statics. 
 In view of (\ref{A}) with (\ref{eq:omega}), we can see that $rt\le A_t^y\le (r+q)t$ holds for all $t\ge0$ and $y\in\R$. 
Using these inequalities together with \eqref{eq:problem0}, \eqref{eq:underlinev} and \eqref{eq:overlinev}, we obtain the following results for the monotonicity, continuity of the value function $v(x;y)$, and some information about the structure of the stopping region $\mc{S}^y$. 
% \rd{which is proved in the Appendix \ref{sec:prof}}. 
\begin{prop}\label{prop:compare1}
The value function $v(x;y)$ satisfies the following properties:
\begin{enumerate}
\item[(i)]
For any $-\infty\le y_1<y_2\le\infty$, we have 
\benn
 \underline{v}(x)\equiv 
v(x;\infty)\le v(x;y_2)\le v(x;y_1)\le v(x;-\infty)
 \equiv\overline{v}(x)
,\quad\forall x\in\R .
\eenn
Moreover, it holds that $\mc{S}^{\infty}\supseteq \mc{S}^{y_2} \supseteq \mc{S}^{y_1} \supseteq \mc{S}^{-\infty}$.
\item[(ii)] The function $v(x;y)$ is strictly increasing  and continuous in $x$ over $\R$, and is non-increasing and continuous in $y$ over $\R$.
\item[(iii)] If there is an $a\in\mc{S}^y$ such that $a\le y$, we must have 
 $y\ge\underline{k}$ and $[\underline{k},a]\subset\mc{S}^y$. 
\end{enumerate}
%Note that, for $r>\psi(1)$, we have $v(x;\infty)\equiv \underline{v}(\cdot)$ and $v(x;-\infty)\equiv \ol{v}(\cdot)$ given by \eqref{eq:underlinev}--\eqref{eq:overlinev}, respectively.
\end{prop}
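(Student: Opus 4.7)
The plan is to establish the three parts by exploiting, in order, pathwise monotonicity of the discount factor in the level $y$, a shift-coupling in the initial value $x$, and the strong Markov property at the first up-crossing time $T_a^+$.

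For (i), observe that $\ind_{\{X_s<y\}}$ is non-decreasing in $y$, so $\omega_t^{y_1}\le\omega_t^{y_2}$ and hence $A_t^{y_1}\le A_t^{y_2}$ pathwise whenever $y_1<y_2$. Taking expectations and suprema in \eqref{eq:problem0} yields the chain $\underline v\le v(\cdot;y_2)\le v(\cdot;y_1)\le\ol v$; the boundary values follow from $A_t^\infty=(r+q)t$, $A_t^{-\infty}=rt$ and \eqref{eq:underlinev}--\eqref{eq:overlinev}. For the nesting $\mc S^{y_1}\subseteq\mc S^{y_2}$, note that if $x\in\mc S^{y_1}$ then $(\et^x-K)^+=v(x;y_1)\ge v(x;y_2)\ge(\et^x-K)^+$ forces $v(x;y_2)=(\et^x-K)^+$.

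For (ii), I would write $X_t=x+L_t$ under $\pr_x$ with $L$ a copy of the L\'evy process starting at $0$. In the reward functional, both the payoff $(\et^{x+L_\tau}-K)^+$ and the discount $\et^{-A_\tau^y}$ (through monotonicity of $\ind_{\{x+L_s<y\}}$ in $x$) are pathwise non-decreasing in $x$, yielding monotonicity; strict monotonicity is obtained by applying an $\e$-optimal rule $\tau$ for $x_1$ and using $\pr(L_\tau>\log K-x_1,\tau<\infty)>0$ (which must hold whenever $v(x_1;y)>0$). Continuity in $x$ is proved by a coupling/dominated-convergence estimate in which the reward gap splits into a payoff-difference term bounded by $E[\et^{-A_\tau^y(X^{x_1})}(\et^{x_2+L_\tau}-\et^{x_1+L_\tau})]$ and a discount-difference term bounded by $qE[(\et^{x_1+L_\tau}-K)^+\int_0^\tau\ind_{\{y-x_2\le L_s<y-x_1\}}\diff s]$, both of which vanish as $x_2-x_1\to 0$. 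The analogous argument with $A_\tau^{y_2}-A_\tau^{y_1}=q\int_0^\tau\ind_{\{y_1\le X_s<y_2\}}\diff s$ gives continuity in $y$. I expect the main obstacle to be controlling these occupation-time differences uniformly over all stopping times, which relies on the absence of atoms of $\Pi$ (so that $L_s$ has no mass at a single level) together with the integrability of the reward in the relevant parameter regime.

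For (iii), fix $a\in\mc S^y$ with $a\le y$. The bound $v(a;y)\ge\underline v(a)$ from (i), combined with the fact that $\underline v(a)>(\et^a-K)^+$ for every $a<\underline k$ by \eqref{eq:underlinev} since $\underline k$ is the unique optimal exercise threshold in Mordecki's problem, rules out $a<\underline k$; therefore $a\ge\underline k$ and $y\ge\underline k$. For the inclusion, fix $x\in[\underline k,a]$ and any $\tau\in\timset_0$. By spectral negativity, $X_{T_a^+}=a$ on $\{T_a^+<\infty\}$, and $X_s\le a\le y$ for all $s\le T_a^+$, so $A_s^y=(r+q)s$ throughout $[0,T_a^+]$. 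Splitting the expected reward on $\{\tau\le T_a^+\}$ and $\{\tau>T_a^+\}$, applying the strong Markov property at $T_a^+$ to the second piece, and using $v(a;y)=\et^a-K=(\et^{X_{T_a^+}}-K)^+$ to merge the two contributions, one obtains for $\sigma:=\tau\wedge T_a^+$ the uniform bound
\[
\ex_x[\et^{-A_\tau^y}(\et^{X_\tau}-K)^+\ind_{\{\tau<\infty\}}]\le\ex_x[\et^{-(r+q)\sigma}(\et^{X_\sigma}-K)^+\ind_{\{\sigma<\infty\}}]\le\underline v(x).
\]
Taking the supremum over $\tau$ gives $v(x;y)\le\underline v(x)$; combined with $v(x;y)\ge\underline v(x)$ from (i), this yields $v(x;y)=\underline v(x)=\et^x-K$, i.e., $x\in\mc S^y$.
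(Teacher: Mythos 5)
Parts (i) and (iii) of your proposal follow the paper's own route: (i) is exactly the pathwise comparison $A_t^{y_1}\le A_t^{y_2}$ for $y_1<y_2$, and (iii) is the same strong-Markov argument at $T_a^+$, using that the clock runs at the constant rate $r+q$ up to $T_a^+\le T_y^+$ and that $v(a;y)=\et^a-K$ to replace any $\tau$ by $\tau\wedge T_a^+$ and compare with $\underline{v}$. Both are correct as written.

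The gap is in part (ii). You correctly isolate the crux --- controlling the occupation-time differences uniformly over all stopping times --- but you neither resolve it nor point at the right mechanism: atoms of $\Pi$ play no role here. What closes the argument (and is what the paper does) is, first, that by part (i) every stopping time may be replaced by $\tau\wedge T_{\ol{k}}^+$ without loss of value, and second, that $t\mapsto \omega_t^{y_1}-\omega_t^{y_2}=q\int_{(0,t]}\ind_{\{X_s\in[y_1,y_2)\}}\diff s$ is non-decreasing, so that
\[
\big|\et^{-\omega^{y_1}_{\tau\wedge T_{\ol{k}}^+}}-\et^{-\omega^{y_2}_{\tau\wedge T_{\ol{k}}^+}}\big|
\;\le\; 1-\exp\Big(-q\int_{(0,T_{\ol{k}}^+]}\ind_{\{X_s\in[y_1,y_2)\}}\diff s\Big),
\]
a bound independent of $\tau$, dominated by $1$, and tending to $0$ a.s.\ as $y_2-y_1\downarrow0$ simply because the indicator of the shrinking interval vanishes pointwise in $s$; dominated convergence then finishes the continuity in $y$, and the same device (after writing $X=x+L$) gives continuity in $x$. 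A second, smaller looseness: your strict monotonicity in $x$ via an $\e$-optimal $\tau$ does not immediately produce a gain uniform in $\e$. Either use $(\et^{x_2+L}-K)^+-(\et^{x_1+L}-K)^+\ge(1-\et^{x_1-x_2})(\et^{x_2+L}-K)^+$ to obtain $v(x_2;y)\ge(2-\et^{x_1-x_2})\,v(x_1;y)>v(x_1;y)$, or do as the paper does and read strictness off the scaling identity $v(x;y,K)=\et^x\,v(0;y-x,K\et^{-x})$ combined with the joint monotonicity of $v(0;\cdot,\cdot)$ in $(y,K)$.
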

\begin{proof}
See Appendix \ref{sec:prof}.
\end{proof}

\begin{rmk}\label{rmk:use1piece}  \normalfont
 Proposition \ref{prop:compare1}$(i)$ concludes that the stopping region $\mc{S}^y$ is sandwiched by two known intervals, namely the optimal stopping regions for the problems \eqref{eq:underlinev}--\eqref{eq:overlinev}, but it is still highly non-trivial to determine the exact shape of $\mc{S}^y$. 
Proposition \ref{prop:compare1}$(ii)$ implies  
that the optimal stopping region $\mc{S}^y$ consists of disjoint unions of closed intervals (including isolated points), and 
 in view of conclusion $(i)$, these intervals {\em continuously ``grow''} with $y$, unless new components of stopping region appear (we call this ``branching"). 
Finally, %using the definition \eqref{eq:problem0} of $v(x;y)$ and the fact that $A^y_t = (r+q)t$ for all $t\leq T^+_a \leq T^+_y$ and $a\leq y$, we can prove the following result, 
% \rd{(see the Appendix \ref{sec:prof})}, 
Proposition \ref{prop:compare1}$(iii)$ indicates the special role of $\underline{k}$ and the possibility for a unique stopping region component that lies below $y$, which always takes the form $[\underline{k},a]$ for some $a\in[\underline{k},y]$. 
\end{rmk}

\section{Proofs of Theorem \ref{thm:ult1} and Theorem \ref{thm:ult2}} \label{sec:opt}

In this section we study the case when the discounted asset price process $(\et^{-rt+X_t})_{t\ge0}$ is a super-martingale and we eventually prove Theorems \ref{thm:ult1} and \ref{thm:ult2}. 
Note that, for $r>\psi(1)$, \eqref{stopinf} and  Proposition \ref{prop:compare1}$(i)$ imply that the optimal stopping region $\mc{S}^y$ should always contain the region $\mc{S}^{-\infty}\equiv[\overline{k},\infty)$. Thus, in view of the equivalent expressions of the problem in \eqref{eq:problem0} and \eqref{eq:problem00}, we can further reduce the problem to %\eqref{eq:problem0} to  
\be
v(x;y)
=\sup_{\tau\in \timset}\ex_x\big[\et^{-A_\tau^y}(\et^{X_{\tau}}-K)^+\ind_{\{\tau<\infty\}}\big]
\equiv \sup_{\tau\in\mc{T}}\ex_x\big[\et^{-A_\tau^y}\underline{v}(X_\tau)\ind_{\{\tau<\infty\}}\big]\,, %\quad\forall x\in\R.
\label{eq:problem} %\label{eq:problem1}
\ee
where $\timset$ is the set of all $\mathbb{F}$-stopping times which occur no later than $T_{\overline{k}}^+$. 
% This alternative representation will be more convenient in some parts of the forthcoming analysis, 
% 	%for HJB analysis because 
% since $\underline{v}(\cdot)$ belongs to $C^1(\R)\cap C^2(\R\backslash\{\underline{k}\})$. 
We shall focus on the problem \eqref{eq:problem} henceforth in this section. %, which is identical to \eqref{eq:problem1}. 

\subsection{The exercising thresholds for up-crossing strategies} \label{sec:up}
%In this section we focus on candidates for the optimal strategies of the problem \eqref{eq:problem}, which are of the up-crossing threshold type. 
%The optimality of such strategies will be explored in Section \ref{sec:optimality}. 
We begin our analysis by studying the expected discount factor up until a first passage time $T_z^+$,
which can be proved similarly to  \cite[Corollary 2(ii)]{OccupationInterval}.
\begin{prop}\label{prop:lap11}
Recall the positive function $\mc{I}^{(r,q)}(\cdot)$ defined by \eqref{eq:Zdef}, which takes the form $\mc{I}^{(r,q)}(x)=\et^{\prq x}/q$, for all $x\le 0$.
We have
\benn
\ex_x\big[\exp(-A_{T_z^+}^y)\big]=\frac{\mc{I}^{(r,q)}(x-y)}{\mc{I}^{(r,q)}(z-y)},
%\et^{-\prq (z-x)}\frac{Z_\prq^{(-q)}(x-y)}{Z_\prq^{(-q)}(z-y)}=\et^{-\prq(z-x)}\frac{\et^{-\prq(x-y)}\int}{\et^{-\prq(z-y)}\int},
\quad\forall z>x .
\eenn
\end{prop}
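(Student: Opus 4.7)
The plan is to split the analysis at the level $y$ and reduce the claim to classical first-passage identities for spectrally negative L\'evy processes. For $x<y$ I would exploit the strong Markov property together with spectral negativity, while for $x\ge y$ I would follow the route used in \cite[Corollary~2(ii)]{OccupationInterval}.

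For $x<y<z$, since $X$ has no positive jumps it enters $[y,\infty)$ continuously, so $X_{T_y^+}=y$ on $\{T_y^+<\infty\}$ and the process stays strictly below $y$ on $(0,T_y^+)$. Hence $A_{T_y^+}^y=(r+q)T_y^+$, and the strong Markov property at $T_y^+$, combined with the classical Wiener-Hopf identity $\ex_x[\et^{-(r+q)T_y^+}]=\et^{\prq(x-y)}$, gives
\[
\ex_x\big[\et^{-A_{T_z^+}^y}\big]=\et^{\prq(x-y)}\,\ex_y\big[\et^{-A_{T_z^+}^y}\big].
\]
Since $\mc{I}^{(r,q)}(x-y)=\et^{\prq(x-y)}/q$ for $x\le y$, this reduces the first case to the endpoint identity $\ex_y[\et^{-A_{T_z^+}^y}]=1/(q\,\mc{I}^{(r,q)}(z-y))$, which is the $x=y$ specialization of the second case below.

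For $y\le x<z$, after translating by $-y$ the assertion becomes
\[
\ex_{\widetilde x}\Big[\et^{-rT_{\widetilde z}^+-q\int_0^{T_{\widetilde z}^+}\ind_{\{\widetilde X_s<0\}}\diff s}\Big]=\frac{\mc{I}^{(r,q)}(\widetilde x)}{\mc{I}^{(r,q)}(\widetilde z)},\qquad 0\le \widetilde x<\widetilde z,
\]
where $\widetilde X_t:=X_t-y$ is a spectrally negative L\'evy process of the same law as $X$, and $\widetilde z:=z-y$, $\widetilde x:=x-y$. I would establish this by first inserting an absorbing lower barrier at $a<0$ and applying the two-sided exit formula with level-dependent discount rate ($r$ on $(0,\widetilde z)$, $r+q$ on $(a,0)$), which yields a ratio of scale-type functions of $W^{(r)}$ weighted by the Wiener-Hopf exponent $\prq$ coming from excursions below $0$. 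Letting $a\to-\infty$, the Omega clock's exponential killing forces the contribution from $\{T_a^-<T_{\widetilde z}^+\}$ to vanish, and the remaining terms collapse precisely to the integral in \eqref{eq:Zdef}, producing the claimed ratio $\mc{I}^{(r,q)}(\widetilde x)/\mc{I}^{(r,q)}(\widetilde z)$.

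The main obstacle sits in this last limit: one must combine the two-sided exit identity with the Wiener-Hopf factor $\et^{\prq(\widetilde x-a)}$ for excursions reaching $a$, and verify that the algebraic manipulation of the scale-function quotients condenses exactly to $\int_0^\infty\et^{-\prq u}W^{(r)}(u+\cdot)\diff u$. An alternative would be to verify that $x\mapsto\mc{I}^{(r,q)}(x-y)$ is harmonic for the operator $\mc L-r-q\ind_{\{x<y\}}$ on $\R\setminus\{y\}$ with the correct boundary behavior at $z$, but this direct route requires delicate handling of the jump integral terms in $\mc L$ when $X$ crosses the level $y$, which the $a\to-\infty$ limit bypasses. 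Once the formula is established for $y\le x<z$, the case $x<y$ follows at once from the Wiener-Hopf reduction above, completing the proof.
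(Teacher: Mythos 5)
Your proposal is correct and is essentially the paper's own argument: the paper proves Proposition \ref{prop:lap11} simply by invoking \cite[Corollary 2(ii)]{OccupationInterval}, and that corollary is derived precisely along your route --- reduction to $x\ge y$ via creeping and the Wiener--Hopf factor $\et^{\prq(x-y)}$, then the limit $a\to-\infty$ in the two-sided exit identity \eqref{uphit} with \eqref{eq:bigW1}. The algebraic collapse you flag as the main obstacle does go through: since $\et^{-\prq u}W^{(r+q)}(u)\to 1/\psi'(\prq)$ and $\int_{(0,\infty)}\et^{-\prq u}W^{(r)}(u)\,\diff u=1/(\psi(\prq)-r)=1/q$, one finds $\lim_{a\to-\infty}\et^{\prq a}W^{(r,q)}(x,a)=q\,\et^{\prq y}\,\mc{I}^{(r,q)}(x-y)/\psi'(\prq)$, so the $x$-independent prefactor cancels in the ratio, while monotone convergence together with $A_t^y\ge rt$ disposes of the event $\{T_z^+=\infty\}$.
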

% \begin{proof}
% See Corollary 2(ii) of \cite{OccupationInterval}.
% \end{proof}

It is possible to reinterpret the result in Proposition \ref{prop:lap11} as the upper tail probability of the running maximum of $X$ at a random time. Indeed, let us consider the left inverse of the additive functional $A^y$ at an independent exponential time with unit mean, $\mathbf{e}_1$:
\be
\zeta\equiv (A^y)^{-1}(\mathbf{e}_1):=\inf\{t>0: A_t^y>\mathbf{e}_1\}.\label{eqzeta}
\ee
Then by the independence, it is seen that, for any $z>x$,
\bq
\pr_x( \overline{X}_{\zeta}>z)=\pr_x(T_z^+<\zeta)=\ex_x\big[\exp(-A_{T_z^+}^y)\big]=\ex_x\big[\exp(-A_{T_z^+}^y)\ind_{\{T_z^+<\infty\}}\big], \label{repr}
\eq
where the last equality is due to $A_t^y\ge rt$, which converges to $\infty$ as $t\uparrow\infty$. 
% \begin{cor}
Let us introduce the ``hazard rate'' function \footnote{It can be easily verified that the right hand side of \eqref{eq:hazard} is indeed independent of $x$.} 
\be
\Lambda(z;y)=\frac{1}{\pr_x(\overline{X}_\zeta>z)}\frac{\pr_x(\overline{X}_\zeta\in\diff z)}{\diff z},\quad\forall z> x.\label{eq:hazard}
\ee
Then we have
\be \label{Lambda2}
\Lambda(z;y)\equiv \overline{\Lambda}(z-y)=\frac{\mc{I}^{(r,q),\prime}(z-y)}{\mc{I}^{(r,q)}(z-y)}=\prq-\frac{W^{(r)}(z-y)}{\mc{I}^{(r,q)}(z-y)},\quad\forall z> x.
\ee
% \end{cor}
Since $W^{(r)}(x)=0$ for all $x<0$, we see that the function $\overline{\Lambda}(x)\equiv\prq$ for all $x<0$. 
In the following lemma, we present some properties of $\overline{\Lambda}(x)$ for $x \geq 0$, 
which can be proved using Lemma \ref{lem W} and a calculation of the derivative of $\overline{\Lambda}(\cdot)$ given by \eqref{Lambda2}. 
% \rd{For a detailed proof, we refer to the extended version of this paper \cite{arxiv}}. 
\begin{lem}\label{Lambdaprop}
The function $\overline{\Lambda}(\cdot)$ given by \eqref{Lambda2} is strictly decreasing over $[0,\infty)$, with  
\be
\overline{\Lambda}(0)=\prq-qW^{(r)}(0)\le\prq=\overline{\Lambda}(0-) \quad \text{ and } \quad \overline{\Lambda}(\infty)=\Phi(r).\nn
\ee
In other words, $\mc{I}^{(r,q)}(\cdot)$ is log-concave over $\R$.
\end{lem}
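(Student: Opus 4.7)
The plan is to first nail down the three boundary values using the explicit form of $\mc{I}^{(r,q)}$ on $(-\infty, 0]$ and the Laplace transform identity \eqref{laplace}, then establish strict monotonicity of $\overline{\Lambda}$ on $(0,\infty)$ by reducing to log-concavity of $W^{(r)}$ (invoking Lemma \ref{lem W} from Appendix \ref{sec:pre}), and finally observe that strict monotonicity of $\overline{\Lambda}$ is tautologically the strict log-concavity of $\mc{I}^{(r,q)}$.

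For the boundary values: the identity $\mc{I}^{(r,q)}(x) = \et^{\prq x}/q$ for $x \le 0$ (noted right before the statement) gives $\overline{\Lambda}(0-) = \prq$ immediately. At $x=0$, plugging $\beta = \prq$ into \eqref{laplace} yields $\mc{I}^{(r,q)}(0) = 1/(\psi(\prq)-r) = 1/q$, and then \eqref{Lambda2} gives $\overline{\Lambda}(0) = \prq - qW^{(r)}(0)$; the ordering $\overline{\Lambda}(0) \le \overline{\Lambda}(0-)$ follows from the standard fact $W^{(r)}(0) \ge 0$ (with equality in the unbounded variation case). For $x \to \infty$, I would use the standard asymptotic $W^{(r)}(x) \sim \et^{\prr x}/\psi'(\prr)$ collected in Appendix \ref{sec:pre} and dominated convergence to get
\begin{equation*}
\et^{-\prr x}\, \mc{I}^{(r,q)}(x) \;=\; \int_{(0,\infty)} \et^{-\prq u}\, \et^{-\prr x}\, W^{(r)}(u+x)\, \diff u \;\longrightarrow\; \frac{1}{\psi'(\prr)\,(\prq-\prr)},
\end{equation*}
so that $W^{(r)}(x)/\mc{I}^{(r,q)}(x) \to \prq - \prr$, and \eqref{Lambda2} delivers $\overline{\Lambda}(\infty) = \prr$.

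For strict monotonicity on $(0,\infty)$, the cleanest route is to rewrite
\begin{equation*}
\frac{\mc{I}^{(r,q)}(x)}{W^{(r)}(x)} \;=\; \int_{(0,\infty)} \et^{-\prq u}\, \frac{W^{(r)}(u+x)}{W^{(r)}(x)}\, \diff u,\qquad x > 0,
\end{equation*}
which reduces the claim to strict monotonicity (in $x$) of $W^{(r)}(u+x)/W^{(r)}(x)$ for each fixed $u > 0$, that is, strict log-concavity of $W^{(r)}$ on $(0,\infty)$. This is exactly the content of Lemma \ref{lem W}. Given it, $W^{(r)}/\mc{I}^{(r,q)}$ is strictly increasing, and via the second equality in \eqref{Lambda2} the function $\overline{\Lambda}$ is strictly decreasing on $(0,\infty)$.

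For the final ``in other words'' clause, I would differentiate $\overline{\Lambda}(x) = \mc{I}^{(r,q),\prime}(x)/\mc{I}^{(r,q)}(x)$ on $(0,\infty)$ to obtain
\begin{equation*}
\overline{\Lambda}'(x) \;=\; \frac{\mc{I}^{(r,q),\prime\prime}(x)\,\mc{I}^{(r,q)}(x) - (\mc{I}^{(r,q),\prime}(x))^2}{[\mc{I}^{(r,q)}(x)]^2} \;=\; (\log \mc{I}^{(r,q)})''(x),
\end{equation*}
so strict monotonicity of $\overline{\Lambda}$ is equivalent to strict log-concavity of $\mc{I}^{(r,q)}$ on $(0,\infty)$; combined with the linear form on $(-\infty,0]$ and the inequality $\overline{\Lambda}(0) \le \overline{\Lambda}(0-)$ already shown, this gives log-concavity throughout $\R$. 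The sole real obstacle is the log-concavity of $W^{(r)}$, which I am deferring to Lemma \ref{lem W}; if that lemma provides only weak log-concavity, strict decrease of $\overline{\Lambda}$ still follows because $\et^{-\prq u}$ is a strictly positive weight and a pointwise equality $W^{(r)}(u+x)/W^{(r)}(x) \equiv $ constant on any interval would force $W^{(r)}$ to be locally exponential, which is ruled out by the asymptotic at $\infty$ combined with the defining Laplace transform \eqref{laplace}.
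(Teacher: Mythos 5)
Your proof is correct, but it takes a genuinely different route from the paper's. The paper establishes monotonicity probabilistically: it invokes the identity $\mc{I}^{(r,q)}(x-y)/\mc{I}^{(r,q)}(z-y)=\ex_x[\exp(-A_{T_z^+}^y)]$ from Proposition \ref{prop:lap11}, observes that this expected discount factor is strictly decreasing in the clock level $y$ (raising $y$ strictly increases the occupation time accumulated before $T_z^+$), reads off $\p_y$ of the ratio as $\overline{\Lambda}(x-y)-\overline{\Lambda}(z-y)$ times a positive factor, and then upgrades "non-increasing" to "strictly decreasing" by a contradiction argument on a putative interval of constancy; the limits at $0$ and $\infty$ are simply cited to Lemma \ref{lem W}. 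You instead work purely analytically: you reduce the strict decrease of $\overline{\Lambda}$ to the strict decrease of $W^{(r)\prime}/W^{(r)}$ (strict log-concavity of the scale function), which is indeed asserted in Lemma \ref{lem W}, via the clean identity $\mc{I}^{(r,q)}(x)/W^{(r)}(x)=\int_{(0,\infty)}\et^{-\prq u}\,W^{(r)}(u+x)/W^{(r)}(x)\,\diff u$. So the two arguments lean on different parts of Lemma \ref{lem W}: the paper uses only the asymptotics, while you use the log-concavity statement for the core monotonicity; conversely, the paper's route needs the probabilistic representation and a strictness argument for the occupation-time functional, which you avoid entirely. Your explicit computation of $\overline{\Lambda}(\infty)=\prr$ via $\et^{-\prr x}\mc{I}^{(r,q)}(x)\to 1/(\psi'(\prr)(\prq-\prr))$ is more detailed than the paper's one-line citation and is valid (dominated convergence applies since $\et^{-\prr v}W^{(r)}(v)$ is bounded). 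Two cosmetic points: your argument gives strict decrease on $(0,\infty)$, and extending to the closed half-line $[0,\infty)$ as stated requires the (immediate) right-continuity of $W^{(r)}/\mc{I}^{(r,q)}$ at $0$; and your fallback remark about weak log-concavity is unnecessary since Lemma \ref{lem W} already states the strict version.
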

\begin{proof}
See Appendix \ref{sec:prof}.
\end{proof}

In what follows, we denote the value of an up-crossing strategy $T_z^+$ by $U(\cdot;y,z)$, which will be the main topic of study in the remainder of this subsection, and is given by 
\begin{align} \label{U}
U(x;y,z)=&\ex_x\big[\exp(-A_{T_z^+}^y)(\exp(X_{T_z^+})-K)^+\ind_{\{T_z^+<\infty\}}\big]\nn\\
=&\ind_{\{x<z\}}(\et^z-K)^+\ex_x\big[\exp(-A_{T_z^+}^y)\big] + \ind_{\{x\ge z\}}(\et^x-K)^+\nn\\
=&\ind_{\{x<z\}}(\et^z-K)^+\frac{\mc{I}^{(r,q)}(x-y)}{\mc{I}^{(r,q)}(z-y)}+\ind_{\{x\ge z\}}(\et^x-K) ,
\end{align}
where the last equality follows from Proposition \ref{prop:lap11}. 
Fixing an arbitrary $x<\underline{k}$, (which is definitely not inside $\mc{S}^y$ by Proposition \ref{prop:compare1}$(i)$), we look for candidate exercising thresholds greater than $x$. 
By taking the derivative of $U(x;y,z)$ with respect to $z$ for $z>x\vee\log K$ and using (\ref{repr})-(\ref{eq:hazard}), we get
\begin{align}
\p_zU(x;y,z)=&\et^z\ex_x\big[\exp(-A_{T_z^+}^y)\big] + (\et^z-K)\frac{\p}{\p z}\ex_x\big[\exp(-A_{T_z^+}^y)\big]\nn\\
=&\ex_x\big[\exp(-A_{T_z^+}^y)\big]\overline{\Lambda}(z-y)\bigg(K-\et^z\bigg(1-\frac{1}{\overline{\Lambda}(z-y)}\bigg)\bigg)=\frac{\pr_x(\overline{X}_\zeta\in\diff z)}{\diff z}\bigg(K-\et^z\bigg(1-\frac{1}{\overline{\Lambda}(z-y)}\bigg)\bigg).\nn
\end{align}
Hence, a candidate optimal exercising threshold $z^\star$ should satisfy
\be
K=\et^yg(z^\star-y), \quad\text{where } \; 
g(u)=\et^u\bigg(1-\frac{1}{\overline{\Lambda}(u)}\bigg)
=\et^u\bigg(1-\frac{\mc{I}^{(r,q)}(u)}{\mc{I}^{(r,q),\prime}(u)}\bigg).
\label{eq:gK}
\ee
Notice that, since $\overline{\Lambda}(\cdot)$ is monotone by Lemma \ref{Lambdaprop}, the function $g$ defined in (\ref{eq:gK}) satisfies 
\begin{alignat*}{2}
\et^yg(z-y)&<\et^{\underline{k}}\bigg(1-\frac{1}{\prq}\bigg)=\underline{K}\frac{\prq-1}{\prq}=K,\quad&&\forall z<\underline{k},\nn\\
\et^yg(z-y)&>\et^{\overline{k}}\bigg(1-\frac{1}{\Phi(r)}\bigg)=\overline{K}\frac{\Phi(r)-1}{\Phi(r)}=K,\quad&&\forall z\ge\overline{k},\nn
\end{alignat*}
where the strict inequality %$>$ 
in the second line is due to the fact that $\overline{\Lambda}(x)>\Phi(r)$ for all $x\in\R$. 
Furthermore, $g(\cdot)$ is continuous over $\R$, %\backslash\{0\}$, 
unless the process $X$ has paths of bounded variation, which gives rise to a negative jump at $0$ (see Lemma \ref{lem W}):
\be
g(0)-g(0-)=\frac{-qW^{(r)}(0)}{\prq(\prq-qW^{(r)}(0))}\,\et^y\le 0. 
\label{gjump}
\ee
\begin{rmk}\label{candidate} \normalfont
It thus follows that there exists at least one candidate optimal exercising threshold $z^\star$ in $[\underline{k},\overline{k})$, and there is no optimal exercising threshold in $\R\backslash[\underline{k},\overline{k})$. This is consistent with Proposition \ref{prop:compare1}$(i)$. 
\end{rmk}

% \sk
Although Remark \ref{candidate} confirms the existence of at least one candidate exercising threshold, there is a possibility of multiple solutions to \eqref{eq:gK}. We investigate this possibility through the analysis of the derivative of $g(u)$:
\be \label{g'}
g'(u)=\begin{dcases}
\et^u\frac{\prq-1}{\prq} >  0,\quad&\forall u<0,\\
\frac{\et^u}{(\ol{\Lambda}(u))^2}\mc{H}(u),\quad&\forall u>0
\end{dcases}
\ee
%\begin{alignat}{2}
%g'(z)&=\et^z\frac{\prq-1}{\prq}>0,\quad&&\forall z<y,\label{eq:gda1}\\
%g'(z)&=\frac{\et^z}{\left(\overline{\Lambda}(z-y)Z_\prq^{(-q)}(z-y)\right)^2}\mc{H}(z-y),\quad&&\forall z>y,\label{eq:gda}
%\end{alignat}
where 
\be\label{eqHv}
\mc{H}(u):=\frac{1}{\mc{I}^{(r,q)}(u)}\left(\mc{I}^{(r,q),\prime\prime}(u)-\mc{I}^{(r,q),\prime}(u)\right)\equiv(\Phi(r+q)-1)\overline{\Lambda}(u)-\frac{W^{(r)\prime}(u)}{\mc{I}^{(r,q)}(u)},\quad\forall u>0.
\ee
Observe that by the definition of $\bar{\Lambda}(x)$ and its limit as $x\to\infty$ in Lemma \ref{Lambdaprop}, we have %$\mc{H}$, defined by \eqref{eqHv}, satisfies 
\be
\mc{H}(\infty)=\Phi(r)(\Phi(r)-1)>0,\label{eq:Hinfnnn}
\ee
which implies that $g(\cdot)$ is ultimately strictly increasing to $\infty$. 
In view of this observation, we define
\be 
\label{u-} 
\bar{u}:=\inf\{u\in\R: g(\cdot) \text{ is non-decreasing over }[u,\infty)\},
\ee
thus $\bar{u}$ is the largest local minimum of $g$ (and is well-defined). 
In all, \eqref{g'} and \eqref{u-} imply that $g(\cdot)$ is strictly increasing over $(-\infty,0)$ and is non-decreasing over $(\bar{u},\infty)$.

The value of $\bar{u}$ will be critically important in distinguishing the different possibilities of solutions to problem \eqref{eq:problem}.
In particular, it is seen from the above analysis that there exist only three possible cases for the value of $\bar{u}$, outlined in the following remark.
\begin{rmk} \label{rmk:u} \normalfont 
We have the following equivalences:
\begin{enumerate}
\item $\bar{u}=-\infty$, so that $g(\cdot)$ is non-decreasing over $\R$, which is equivalent to condition \eqref{cond1}
%,  so necessarily $\mc{H}(0+)\ge0$ 
and $X$ having paths of unbounded variation 
(case treated in Theorem \ref{thm:ult1});
\item $\bar{u}=0$, so that $g(\cdot)$ is non-decreasing over $\R\backslash\{0\}$ with a negative jump at $0$, which is again equivalent to condition \eqref{cond1} 
% , so necessarily $\mc{H}(0+)\ge0$ 
but $X$ having paths of bounded variation 
(case treated in Theorem \ref{thm:ult2});
\item $\bar{u}>0$, so that $g(\cdot)$ is not monotone over $(0,\infty)$, which is equivalent to condition \eqref{cond1} failing 
(case treated in Theorem \ref{thm:ult2}). 
% We are in this case, for example, if $\mc{H}(0+)<0$.
\end{enumerate}
\end{rmk}

In light of Remark \ref{rmk:u} and the assertions of Theorem \ref{thm:ult2}, the case $\bar{u}\geq 0$ will be treated in a unified way in all parts of that theorem, apart from part $(iii)$, where we need to treat cases $\bar{u}= 0$ and $\bar{u}> 0$ differently. 
%It is obvious from Remark \ref{Hmono} that the sign of $\mc{H}(0+)$ will be of great importance in the forthcoming analysis of $\mc{H}(\cdot)$ defined in \eqref{eqHv} and consequently of $g(\cdot;y)$ defined in \eqref{eq:gK}. 
In order to further illustrate the dependence of the value/sign of $\bar{u}$ on $q$, $r$ and the Laplace exponent of the L\'evy process $X$, we use the definitions of $\psi(\cdot)$ in \eqref{decomp} and $\prq$, as well as the value of $\mc{H}(0+)$ given by \eqref{eqHv}, to prove the following lemma.
A combination of the latter with Remark \ref{rmk:u} also sheds light on the conditions of Theorems \ref{thm:ult1} and \ref{thm:ult2}. 
% \rd{For a detailed proof, we refer to the extended version of this paper \cite{arxiv}}. 
\begin{lem}\label{lem:Hsign2q}
If $\psi(1)>0$ and $\sigma>0$, then for all $q>(\psi(r/\psi(1))-r)\vee 0$, we have $\bar{u}>0$. Conversely, if $\sigma>0$ and $\bar{u}=-\infty$, then either $\psi(1)\le 0$, or $\psi(1)>0$ and $q\in(0,\psi(r/\psi(1))-r]$. 
\end{lem}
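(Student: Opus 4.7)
The plan is to reduce the statement to analyzing the sign of $\mc{H}(0+)$ from \eqref{eqHv}, and then re-express this sign via the convexity of a Gaussian-free modification of $\psi$. Since $\sigma>0$ forces $X$ to be of unbounded variation and gives the standard boundary values $W^{(r)}(0)=0$, $W^{(r)\prime}(0+)=2/\sigma^{2}$, $\overline{\Lambda}(0)=\Phi(r+q)$ (from Lemma \ref{Lambdaprop}), and $\mc{I}^{(r,q)}(0)=1/q$ (from the Laplace-transform identity \eqref{laplace} evaluated at $\beta=\Phi(r+q)$), evaluating \eqref{eqHv} at $0+$ yields
\[
\mc{H}(0+)=\Phi(r+q)\bigl(\Phi(r+q)-1\bigr)-\frac{2q}{\sigma^{2}}.
\]
Substituting $q=\psi(\Phi(r+q))-r$ and introducing $h(\beta):=\psi(\beta)-\tfrac{\sigma^{2}}{2}\beta(\beta-1)$, one obtains the equivalence $\mc{H}(0+)<0$ if and only if $h(\Phi(r+q))>r$.

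Next, the crucial observation is that \eqref{decomp} gives the Gaussian-free representation $h(\beta)=(\mu+\sigma^{2}/2)\beta+\int_{(-\infty,0)}(\et^{\beta x}-1-\beta x\ind_{\{x>-1\}})\Pi(\diff x)$, which is convex on $[0,\infty)$ with $h(0)=0$ and $h(1)=\psi(1)$. Convexity together with $h(0)=0$ makes $\beta\mapsto h(\beta)/\beta$ non-decreasing on $(0,\infty)$, so under the assumption $\psi(1)>0$ we have $h(\beta)\geq\psi(1)\beta$ for $\beta\geq 1$; in particular $h(\infty)=\infty$ and the sublevel set $\{h\leq r\}$ equals $[0,\beta^{\star}]$ for a unique $\beta^{\star}>0$. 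Since $r>\psi(1)$ gives $r/\psi(1)\geq 1$ and therefore $h(r/\psi(1))\geq r$, one obtains $\beta^{\star}\leq r/\psi(1)$. Moreover, $\psi(1)>\psi(0)=0$ together with convexity of $\psi$ forces its minimizer $\beta_{0}$ to lie strictly below $1$, so that $\psi$ is strictly increasing on $[\beta_{0},\infty)\supseteq[1,\infty)$ and $\Phi$ genuinely inverts $\psi$ there.

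For the forward direction, $q>\psi(r/\psi(1))-r$ gives $r+q>\psi(r/\psi(1))$, whence $\Phi(r+q)>r/\psi(1)\geq\beta^{\star}$, yielding $h(\Phi(r+q))>r$ and $\mc{H}(0+)<0$. Since $W^{(r)}\in C^{2}(0,\infty)$ under $\sigma>0$, $\mc{H}$ is continuous on $(0,\infty)$ with limit $\mc{H}(0+)$ at the origin; hence $\mc{H}<0$ on a right neighborhood of $0$, so by \eqref{g'} $g^\prime<0$ there, and $\bar{u}>0$ by \eqref{u-}. For the converse, $\bar{u}=-\infty$ means $g$ is non-decreasing on $\R$, so $\mc{H}\geq 0$ on $(0,\infty)$ and continuity at $0+$ gives $\mc{H}(0+)\geq 0$, i.e. $h(\Phi(r+q))\leq r$ and $\Phi(r+q)\leq\beta^{\star}$. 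Under the extra assumption $\psi(1)>0$, the bound $\beta^{\star}\leq r/\psi(1)$ combined with monotonicity of $\psi$ past $\beta_{0}$ produces $r+q=\psi(\Phi(r+q))\leq\psi(r/\psi(1))$, i.e. $q\leq\psi(r/\psi(1))-r$; otherwise $\psi(1)\leq 0$ gives the alternative case of the dichotomy.

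The main obstacle is the passage between the sign of $\mc{H}(0+)$ and the stated parameter condition: one must simultaneously establish the interval structure $\{h\leq r\}=[0,\beta^{\star}]$, control the location of $\beta^{\star}$ relative to $r/\psi(1)$, and verify that $\Phi$ truly inverts $\psi$ at $r/\psi(1)$ (i.e. $r/\psi(1)\geq\beta_{0}$). All three ingredients hinge on the delicate interplay of convexity of $h$ and $\psi$ with the hypothesis $\psi(1)>0$, which is what allows the bound to be expressed cleanly in terms of $\psi(r/\psi(1))-r$.
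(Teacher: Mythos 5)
Your proof is correct and follows essentially the same route as the paper: evaluate $\mc{H}(0+)=\Phi(r+q)(\Phi(r+q)-1)-2q/\sigma^2$ via the boundary values of $W^{(r)}$ under $\sigma>0$, rewrite $\tfrac{\sigma^2}{2}\mc{H}(0+)=r-h(\Phi(r+q))$ with $h$ the Laplace exponent stripped of its Gaussian quadratic, and use convexity of the jump part (your $h(\beta)\ge\psi(1)\beta$ is the paper's $F(x)\ge F(1)x$ in disguise) to translate the sign of $g'(0+)$ into the stated condition on $q$. The sublevel-set packaging and the explicit monotonicity check for inverting $\psi$ at $r/\psi(1)$ are minor elaborations of the same argument.
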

\begin{proof}
See Appendix \ref{sec:prof}.
\end{proof}
Using the monotonicity of the function $g(\cdot)$ over $[\bar{u},\infty)$,  we define the following $y$-value: 
\[
\bar{y}=\begin{cases} \log(K/g(\bar{u})), &\text{for } \bar{u} \geq 0 \\
\infty, &\text{for } \bar{u}=-\infty \end{cases}
\]
which makes $\bar{u}+\bar{y}$ a solution to the first order condition equation \eqref{eq:gK}.
Then, in view of the facts that $e^y\,g(z-y)$ is strictly increasing in the parameter $y$ (due to the monotonicity of $\bar{\Lambda}(\cdot)$) and that $\bar{u}$ is a local minimum of $g(\cdot)$, we can conclude that there is no solution to \eqref{eq:gK} greater than $\bar{u}+y$ for all $y>\bar{y}$. 
Moreover, for all finite $y\le\bar{y}$, there exists a unique solution to \eqref{eq:gK} greater than or equal to $\bar{u}+y$. We define this candidate optimal threshold by 
\be \label{z*}
z^\star(y):=y+\inf\{u>\bar{u}: g(u)> K\et^{-y}\} \quad \text{for} \quad y\leq \bar{y}.
\ee 
From the observations above, it is also seen that $z^\star(y)$ is the largest root to \eqref{eq:gK}, 
% From the monotonicity of $g(\cdot)$ over $(\bar{u},\infty)$, we know that the mapping $z^\star(\cdot)$ is 
and strictly decreasing and continuously differentiable at $y$ as long as $g'(z^\star(y)-y)\neq0$. 
The limiting behaviour of $z^\star(\cdot)$ follows from the limiting behaviour of $\bar{\Lambda}(\cdot)$ in Lemma \ref{Lambdaprop}, which implies that  
\[
z^\star(y)=\log\Big( \frac{K\bar{\Lambda}(z^\star(y)-y)}{\bar{\Lambda}(z^\star(y)-y)-1} \Big) \,\to\, \log\Big( \frac{K\Phi(r)}{\Phi(r)-1} \Big) \equiv \ol{k} 
\quad \text{as} \quad y\downarrow-\infty, %\label{ylimit}
\]
and agrees with Remark \ref{candidate}.
One important property of this root is that the function $g(\cdot)$ is non-decreasing over $[z^\star(y),\infty)$, which will be used to show the super-martingale property of the value function $U(\cdot;y,z^\star(y))$ (see proof of Proposition \ref{thm:super}).

Inspired by the analysis in \cite{TimKazuHZ14}, we investigate the connection between the equation \eqref{eq:gK}, that a candidate optimal exercising threshold should satisfy, and the intrinsic value function. 
In particular, using \eqref{eq:gK} and \eqref{eq:hazard} %integration by parts
along with Proposition  \ref{prop:lap11} and Lemma \ref{lem W}, we can prove the following lemma. 
% \rd{we refer to the extended version of this paper \cite{arxiv} for a detailed proof.} 
% see Appendix \ref{sec:prof} for the proof.
\begin{lem}\label{func}
Recall the doubly stochastic time $\zeta$ defined in \eqref{eqzeta}. 
We have
\benn
\et^x-K=\ex_x\big[\et^{y}g( \overline{X}_{\zeta}-y)\big]-K,\quad\forall x\in\R\,.
\eenn
\end{lem}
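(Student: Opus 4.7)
The plan is to evaluate $\ex_x[\et^{y}g(\overline{X}_\zeta-y)]$ by integrating directly against the law of $\overline{X}_\zeta$ under $\pr_x$. Combining Proposition~\ref{prop:lap11} with the identity \eqref{repr}, one identifies the tail $\pr_x(\overline{X}_\zeta>z)=\mc{I}^{(r,q)}(x-y)/\mc{I}^{(r,q)}(z-y)$ for $z>x$, and hence, via \eqref{eq:hazard}, the density $z\mapsto\overline{\Lambda}(z-y)\,\mc{I}^{(r,q)}(x-y)/\mc{I}^{(r,q)}(z-y)$ on $(x,\infty)$. There is no atom at $x$ because $0$ is regular for $(0,\infty)$ for any spectrally negative L\'evy process, so $\overline{X}_\zeta>x$ a.s. Substituting the definition \eqref{eq:gK} of $g$, the factor $\et^{y}g(z-y)\,\overline{\Lambda}(z-y)$ simplifies to $\et^{z}(\overline{\Lambda}(z-y)-1)$, so after the change of variable $u=z-y$ the desired identity reduces to
\[
\et^{y}\mc{I}^{(r,q)}(x-y)\int_{x-y}^{\infty}\et^{u}\,\frac{\overline{\Lambda}(u)-1}{\mc{I}^{(r,q)}(u)}\,\diff u=\et^{x}.
\]

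The key observation is that the integrand is an exact derivative. Using $\overline{\Lambda}(u)=\mc{I}^{(r,q),\prime}(u)/\mc{I}^{(r,q)}(u)$ from \eqref{Lambda2}, a short differentiation yields
\[
-\frac{\diff}{\diff u}\!\left(\frac{\et^{u}}{\mc{I}^{(r,q)}(u)}\right)=\et^{u}\,\frac{\overline{\Lambda}(u)-1}{\mc{I}^{(r,q)}(u)}.
\]
By the fundamental theorem of calculus, the integral is therefore $\et^{x-y}/\mc{I}^{(r,q)}(x-y)-\lim_{u\to\infty}\et^{u}/\mc{I}^{(r,q)}(u)$. Since $\overline{\Lambda}(\infty)=\Phi(r)>1$ by Lemma~\ref{Lambdaprop} (using the standing super-martingale assumption $r>\psi(1)$), $\mc{I}^{(r,q)}(u)$ grows like $\et^{\Phi(r)u}$ at infinity, so the limiting boundary term vanishes. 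Plugging back yields $\ex_x[\et^{y}g(\overline{X}_\zeta-y)]=\et^{x}$, and subtracting $K$ from both sides proves the claim.

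The only point requiring any genuine care is the vanishing of the boundary term at $u=\infty$, which is precisely where the super-martingale hypothesis $r>\psi(1)$ (equivalently $\Phi(r)>1$) enters; everything else is algebraic. I therefore expect the main conceptual obstacle to be spotting the exact-derivative identity that telescopes the integral, after which the proof reduces to routine verification.
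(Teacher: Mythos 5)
Your proof is correct and is essentially the paper's own argument in different packaging: the paper integrates $\int_{(x,\infty)}\et^z\,\pr_x(\overline{X}_\zeta>z)\,\diff z$ by parts against the distribution of $\overline{X}_\zeta$, which is exactly your observation that the integrand is the exact derivative $-\frac{\diff}{\diff u}\big(\et^u/\mc{I}^{(r,q)}(u)\big)$. Both versions rest on the same two facts you identify — the absence of an atom at $x$ (upward regularity) and the vanishing of the boundary term at infinity via $\Phi(r)>1$ and the asymptotics of $W^{(r)}$ from Lemma~\ref{lem W}.
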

\begin{proof}
See Appendix \ref{sec:prof}.
\end{proof}
We now present an alternative representation of the value of the candidate optimal up-crossing strategy $T_{z^\star(y)}^+$ defined by $U(x;y,z^\star(y))$ from \eqref{U} with \eqref{z*}, for which we can prove some useful properties eventually leading to the optimality of this strategy. 
\begin{prop}\label{thm:super}
For all finite
$y\leq \bar{y}$,
let us define the positive function
\be
V(x;y):=\ex_x\big[(\et^yg( \overline{X}_{\zeta}-y)-K)\ind_{\{\overline{X}_{\zeta}> z^\star(y)\}}\big]. \label{Vxy}
\ee
Then we have
\begin{enumerate}
\item[(i)] The process $(\exp(-A_t^y)V(X_t;y))_{t\ge0}$ is a super-martingale;
\item[(ii)] The process $(\exp(-A_{t\wedge T_{z^\star(y)}^+}^y)V(X_{t\wedge T_{z^\star(y)}^+};y))_{t\ge 0}$ is a martingale;
\item[(iii)] $V(x;y)=\et^x-K$ for all $x\ge z^\star(y)$;
\item[(iv)] $V(x;y)=U(x;y,z^\star(y))$ and $\p_xV(x;y)|_{x=z^\star(y)-}=\et^{z^\star(y)}$.
\end{enumerate}
where $U$ is the value of an up-crossing strategy defined in \eqref{U}.
%Hence, $V(x)$ is the value function. 
\end{prop}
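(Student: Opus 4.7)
The plan is to prove (iv) (and consequently (iii)) first by direct computation using the density of $\overline{X}_\zeta$, then derive (ii) as a corollary of Proposition \ref{prop:lap11} combined with the strong Markov property, and finally establish (i) via an It\^o--L\'evy analysis.

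\textbf{Proofs of (iii) and (iv).} Proposition \ref{prop:lap11} combined with \eqref{Lambda2} yields the density of $\overline{X}_\zeta$ under $\pr_x$: for $u>x$,
\[
\pr_x(\overline{X}_\zeta\in du)/du = \mc{I}^{(r,q)}(x-y)\,\mc{I}^{(r,q),\prime}(u-y)/(\mc{I}^{(r,q)}(u-y))^2.
\]
If $x\geq z^\star(y)$, then $\overline{X}_\zeta\geq X_0=x\geq z^\star(y)$ $\pr_x$-a.s., so the indicator in the definition of $V$ equals $1$ a.s.\ and Lemma \ref{func} gives $V(x;y)=\et^x-K$, proving (iii). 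If $x<z^\star(y)$, I would substitute this density into the definition of $V(x;y)$ and use $\et^y g(u-y)=\et^u(1-\mc{I}^{(r,q)}(u-y)/\mc{I}^{(r,q),\prime}(u-y))$ to identify the integrand as the exact derivative $\mc{I}^{(r,q)}(x-y)\tfrac{d}{du}\bigl[(K-\et^u)/\mc{I}^{(r,q)}(u-y)\bigr]$. The telescoping integral from $z^\star(y)$ to $\infty$ then collapses---the boundary contribution at $+\infty$ vanishes because $\Phi(r)>1$ forces $\et^u/\mc{I}^{(r,q)}(u-y)\to 0$---to yield $V(x;y)=(\et^{z^\star(y)}-K)\mc{I}^{(r,q)}(x-y)/\mc{I}^{(r,q)}(z^\star(y)-y)=U(x;y,z^\star(y))$. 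Smooth fit at $z^\star(y)$ is then immediate from differentiating this closed form and using \eqref{eq:gK} in its equivalent form $\overline{\Lambda}(z^\star(y)-y)=\et^{z^\star(y)}/(\et^{z^\star(y)}-K)$.

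\textbf{Proof of (ii).} By (iv), on $\{t\leq T_{z^\star(y)}^+\}$ one has $V(X_t;y)=C\,\mc{I}^{(r,q)}(X_t-y)$ with $C=(\et^{z^\star(y)}-K)/\mc{I}^{(r,q)}(z^\star(y)-y)$, and spectral negativity of $X$ ensures $X_{T_{z^\star(y)}^+}=z^\star(y)$ on $\{T_{z^\star(y)}^+<\infty\}$, so this representation persists through the stopping time. The martingale property of the stopped process then follows directly from Proposition \ref{prop:lap11} combined with the strong Markov property applied at intermediate times $s\leq t\wedge T_{z^\star(y)}^+$.

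\textbf{Proof of (i).} This is the main obstacle. The plan is to apply It\^o--L\'evy's formula in Meyer's extended form, exploiting the piecewise-$C^2$ structure of $V$ with smooth fit at $z^\star(y)$ from (iv) to ensure no local-time contribution, producing the decomposition with drift $\exp(-A^y_t)(\mc{L}-r-q\ind_{\{X_t<y\}})V(X_t;y)$ and a local-martingale part. On $\{X_t<z^\star(y)\}$, the drift vanishes identically since $\mc{I}^{(r,q)}(\cdot-y)$ solves $(\mc{L}-r-q\ind_{\{x<y\}})\mc{I}^{(r,q)}(x-y)=0$ by construction. On $\{X_t>z^\star(y)\}$, where $V(X_t;y)=\et^{X_t}-K$, the drift equals $(\psi(1)-r-q\ind_{\{X_t<y\}})\et^{X_t}+(r+q\ind_{\{X_t<y\}})K$; its non-positivity at $X_t=z^\star(y)^+$ is equivalent to $\overline{\Lambda}(z^\star(y)-y)\leq r/\psi(1)$, and since $\overline{\Lambda}$ is strictly decreasing (Lemma \ref{Lambdaprop}) and $z^\star(y)-y\geq\bar u$ by \eqref{z*}, it suffices to verify $\overline{\Lambda}(\bar u)\leq r/\psi(1)$. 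The latter is the crux, to be derived from the identity $\mc{I}^{(r,q),\prime\prime}(\bar u)=\mc{I}^{(r,q),\prime}(\bar u)$ at the local minimum $\bar u$ of $g$ together with the integro-differential equation $\mc{L}\mc{I}^{(r,q)}(\bar u)=r\mc{I}^{(r,q)}(\bar u)$ and the convexity of $\psi$; combined with the local-martingale part having vanishing expectation, this yields the super-martingale property.
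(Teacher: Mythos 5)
Your treatment of parts (ii)--(iv) is correct and takes a genuinely different (and legitimate) route: you compute $V$ directly from the density of $\overline{X}_\zeta$ and a telescoping integral, obtaining $V=U(\cdot;y,z^\star(y))$ first and then deducing (ii), whereas the paper goes the other way, proving the stopped-martingale property (ii) by conditioning on $\mathcal{F}_{t\wedge T^+_{z^\star(y)}}$ and then extracting (iv) by uniform integrability as $t\to\infty$. Your direction is fine and arguably more explicit, modulo routine justification of the integration.

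Part (i), however, has a genuine gap. When you apply It\^o--L\'evy on $\{X_t>z^\star(y)\}$, the drift is \emph{not} $(\psi(1)-r-q\ind_{\{X_t<y\}})\et^{X_t}+(r+q\ind_{\{X_t<y\}})K$: the generator is nonlocal, and negative jumps from $x>z^\star(y)$ land at points $x+z<z^\star(y)$ where $V$ equals $U$, not $\et^{\cdot}-K$. The true drift therefore carries the additional term $\int_{(-\infty,z^\star(y)-x)}\big(V(x+z)-(\et^{x+z}-K)\big)\Pi(\diff z)$, whose sign is exactly the dominance question \eqref{eq:ratio} that is deliberately deferred to Subsections \ref{sec:H>0}--\ref{sec:BV} and which can fail for $y$ near $\bar{y}$. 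So verifying your ``naive'' drift inequality at $z^\star(y)^+$ is neither sufficient nor the right target; this overshoot complication is precisely why the paper warns that the super-martingale verification is the hard step for L\'evy models. In addition, your crux inequality $\overline{\Lambda}(\bar u)\le r/\psi(1)$ is only asserted (and is ill-posed when $\psi(1)\le 0$); it is not clear it follows from $\mc{I}^{(r,q),\prime\prime}(\bar u)=\mc{I}^{(r,q),\prime}(\bar u)$ as claimed. The paper circumvents all of this with a purely probabilistic argument: writing $V(x;y)=\ex_x[(\et^yg(\overline{X}_\zeta-y)-K)\ind_{\{\overline{X}_\zeta>z^\star(y)\}}]$, conditioning on $\mathcal{F}_t$, and using only that $x\mapsto(\et^yg(x-y)-K)\ind_{\{x>z^\star(y)\}}$ is non-decreasing (because $z^\star(y)$ is the \emph{largest} root of \eqref{eq:gK} and $g$ is non-decreasing beyond it) together with $\max\{\overline{X}_t,\overline{X}_{t,\zeta_t}\}\ge\overline{X}_{t,\zeta_t}$. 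You should either adopt that argument or supply a proof of the full nonlocal drift inequality, which your sketch does not do.
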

\begin{proof}
In order to prove $(i)$, we notice that
\begin{align}
V(x;y)&=\ex_x\big[(\et^y g( \overline{X}_{\zeta}-y)-K)\ind_{\{ \overline{X}_{\zeta}>z^\star(y)\}}\big]\nn\\
&\ge \ex_x\big[\ind_{\{t<\zeta\}}(\et^yg( \overline{X}_{\zeta}-y)-K)\ind_{\{ \overline{X}_{\zeta}>z^\star(y)\}}\big]\nn\\
&=\ex_x\big[\ex_x\big[(\et^yg( \overline{X}_{\zeta}-y)-K)\ind_{\{t<\zeta\}}\ind_{\{ \overline{X}_{\zeta}>z^\star(y)\}}|\mathcal{F}_t\big]\big].\nn\\
&=\ex_x\big[\exp(-A_t^y)\ex_x\big[(\et^yg(\max\{\overline{X}_t,\overline{X}_{t,\zeta_t}\}-y)-K)\ind_{\{ \max\{\overline{X}_t,\overline{X}_{t,\zeta_t}\}>z^\star(y)\}}|\mathcal{F}_t\big]\big],\label{V<=1}
\end{align}
for $t\geq 0$, where for an independent copy of $\mathbf{e}_1$, denoted by $\tilde{\mathbf{e}}_1$, we have defined
\[\overline{X}_{t,\zeta_t}:=\sup_{s\in[t, \zeta_t]}X_s,
\quad \text{and} \quad \zeta_t:=\inf\{s>t: A_s^y-A_t^y>\tilde{\mathbf{e}}_1\}.\] 
Since the function $(\et^yg(x-y)-K)\ind_{\{x>z^\star(y)\}}$ is  non-decreasing, we know from $\max\{\overline{X}_t,\overline{X}_{t,\zeta_t}\}\ge\overline{X}_{t,\zeta_t}$that 
\begin{multline}
\ex_x\big[(\et^yg( \max\{\overline{X}_t,\overline{X}_{t,\zeta_t}\}-y)-K)\ind_{\{ \max\{\overline{X}_t,\overline{X}_{t,\zeta_t}\}>z^\star(y)\}}|\mathcal{F}_t\big]\\
\ge\ex_x\big[(\et^yg(\overline{X}_{t,\zeta_t}-y)-K)\ind_{\{ \overline{X}_{t,\zeta_t}>z^\star(y)\}}|\mathcal{F}_t\big]= V(X_t;y). \label{V<=2}
\end{multline}
As a consequence, $(i)$ follows from the fact that \eqref{V<=1}--\eqref{V<=2} imply 
\begin{align}
V(x;y)%=%&\ex_x[\exp(-A_t^y)\ex_x[(g( \max\{\overline{X}_t,\overline{X}_{t,\zeta_t}\};y)-K)\ind_{\{ \max\{\overline{X}_t,\overline{X}_{t,\zeta_t}\}>z_1^\star(y)\}}|\mathcal{F}_t]]\nn\\
\ge&\ex_x[\exp(-A_t^y)V(X_t;y)].\nn
\end{align}
% where $\tilde{M}_\zeta$ is independent of $\mathcal{F}_t$ and has the same law as $ \overline{X}_{\zeta}$ under $\pr_{X_t}$. This 
%which proves $(i)$.

In order to prove $(ii)$, we notice that 
\begin{align}
V(x;y)=&\ex_x\big[(\et^yg( \overline{X}_{\zeta}-y)-K)\ind_{\{ \overline{X}_{\zeta}>z^\star(y)\}}\big]\nn\\
%=&\ex_x\big[\ex_x\big[(\et^yg( \overline{X}_{\zeta}-y)-K)\ind_{\{ \overline{X}_{\zeta}>z^\star(y)\}}|\mathcal{F}_{t\wedge T_{z^\star(y)}^+}\big]\big]\nn\\
=&\ex_x\big[\ex_x\big[(\et^yg( \overline{X}_{\zeta}-y)-K)\ind_{\{T_{z^\star(y)}^+<\zeta\}}|\mathcal{F}_{t\wedge T_{z^\star(y)}^+}\big]\big]\nn\\
=&\ex_x\big[\ex_x\big[(\et^yg( \overline{X}_{\zeta}-y)-K)\ind_{\{T_{z^\star(y)}^+<\zeta\}}\ind_{\{t\wedge T_{z^\star(y)}^+<\zeta\}}|\mathcal{F}_{t\wedge T_{z^\star(y)}^+}\big]\big]\nn\\
=&\ex_x\big[\exp(-A_{t\wedge T_{z^\star(y)}^+}^y)\ex_x\big[(\et^yg( \overline{X}_{\zeta}-y)-K)\ind_{\{T_{z^\star(y)}^+<\zeta\}}|\mathcal{F}_{t\wedge T_{z^\star(y)}^+}\big]\big]\nn\\
=&\ex_x\big[\exp(-A_{t\wedge T_{z^\star(y)}^+}^y)\ex_{X_{t\wedge T_{z^\star(y)}^+}}\big[(\et^yg(\tilde{M}_\zeta-y)-K)\ind_{\{\tilde{M}_\zeta>z^\star(y)\}}\big]\big]\nn\\
=&\ex_x\big[\exp(-A_{t\wedge T_{z^\star(y)}^+}^y)V(X_{t\wedge T_{z^\star(y)}^+};y)\big],\nn
\end{align}
where $\tilde{M}_\zeta$ is independent of $\mathcal{F}_t$ and has the same law as $ \overline{X}_{\zeta}$ under $\pr_{X_{t\wedge T_{z^\star(y)}^+}}$. This proves part $(ii)$.

Given the representation in Lemma \ref{func}, it is straightforward to see that $(iii)$ holds. 

Finally, we show that $\big(\exp(-A_{t\wedge T_{z^\star(y)}^+}^y)V(X_{t\wedge T_{z^\star(y)}^+};y)\big)_{t\ge0}$ is a uniformly integrable martingale, which implies $(iv)$ by taking $t\rightarrow \infty$. To this end, notice that, $\overline{X}_\zeta-x\le \overline{X}_{(A^{-\infty})^{-1}(\mathbf{e}_1)}-x$, where the latter has the same law (under $\pr_x$) as an exponential random variable with mean $1/\Phi(r)$. Hence, it follows from \eqref{eq:gK} and Lemma \ref{Lambdaprop} that 
\bq
V(x;y)\le \ex_x[e^y g(\overline{X}_\zeta - y)] 
\le \ex_x\big[\et^{\overline{X}_\zeta}\big]\frac{\prq-1}{\prq} 
=\frac{\Phi(r)}{\Phi(r)-1}\frac{\prq-1}{\prq}\et^x.\nn
\eq
Using the fact that $\exp(-A_{t}^y)\le \et^{-rt}$ for all $t\ge0$, we have 
\begin{align}
\exp\big(-A_{t\wedge T_{z^\star(y)}^+}^y\big) V\big(X_{t\wedge T_{z^\star(y)}^+};y\big)\le& 
\frac{\Phi(r)}{\Phi(r)-1}\frac{\prq-1}{\prq}\et^{-r(t\wedge T_{z^\star(y)}^+)+X_{t\wedge T_{z^\star(y)}^+}}\nn\\
%\le&\frac{\Phi(r)}{\Phi(r)-1}\frac{\prq-1}{\prq}\sup_{t\in[0,\infty)}\et^{-rt+X_t}\nn\\
\le&\frac{\Phi(r)}{\Phi(r)-1}\frac{\prq-1}{\prq}\et^{M},\nn
\end{align}
where $M:=\sup_{t\in[0,\infty)}(-rt+X_t)$ is exponentially distributed with mean $1/\tilde{\Phi}(r)$. Here $\tilde{\Phi}(r):=\sup\{\beta\ge0 : \psi(\beta)-r\beta\le 0\}$, which satisfies $\tilde{\Phi}(r)>1$ due to the fact that $r>\psi(1)$. The claimed uniformly integrability follows by the dominated convergence theorem. 

Lastly, the smooth fit condition holds at $x=z^\star(y)$ since 
\begin{align}
\et^{z^\star(y)}-\p_xV(x;y)|_{x=z^\star(y)-}=&\et^{z^\star(y)}-\p_xU(x;y,z^\star(y))|_{x=z^\star(y)-}\nn\\
=&\et^{z^\star(y)}-U(z^\star(y);y,z^\star(y))\overline{\Lambda}(z^\star(y)-y)\nn\\
=&\et^{z^\star(y)}-(\et^{z^\star(y)}-K)\overline{\Lambda}(z^\star(y)-y)\nn\\
=&\overline{\Lambda}(z^\star(y)-y)(K-\et^{y}g(z^\star(y)-y))=0,\nn
\end{align}
where the last equality is due to the definition of $z^\star(y)$.
\end{proof}

\begin{rmk}\label{rmk:j} \normalfont
Since Proposition \ref{thm:super} identifies $U(x;y,z^\star(y))\equiv V(x;y)$, from \eqref{U} and \eqref{Vxy}, 
%which is the value of the up-crossing strategy $T_{z^\star(y)}^+$ defined in \eqref{U}. 
we can use the properties outlined in parts $(i)$-$(iii)$ and the classical verification method (see, e.g. proof of theorems in \cite[Section 6]{Alili2005}), in order to establish the optimality of the threshold strategy $T_{z^\star(y)}^+$, or equivalently that the value function is given by $v(x;y)=U(x;y,z^\star(y))$. 
For this, it only remains to verify the additional property that $V(x;y)> \et^x-K$ for all $x<z^\star(y)$. 
This is equivalent to proving that 
% $R(x;y) \leq 1$ for all $x \leq z^\star(y)$, where
\be
R(x;y) < 1 , \quad\forall \; x < z^\star(y) , 
\quad \text{where } \; R(x;y):=\frac{\et^x-K}{V(x;y)}.\label{eq:ratio}
\ee
\end{rmk}

Below, we examine the optimality of the threshold strategy $T^+_{z^\star(y)}$ in two separate subsections, based on the possible values of $\bar{u}$, and we provide the alternative optimal strategies when $T^+_{z^\star(y)}$ is shown not to be optimal.

%Below we discuss the behavior of the function $g(\cdot;y)$ and the resulting candidate exercising thresholds solving \eqref{eq:gK} in two separate subsections based on the sign of $\mc{H}(0+)$.

\subsection{The case $\bar{u}=-\infty$: Proof of Theorem \ref{thm:ult1}}\label{sec:H>0}

As mentioned in Remark \ref{rmk:j}, the only non-trivial part remaining in order to prove the optimality of $T_{z^\star(y)}^+$ is to show that 
%$\mc{S}^y=[z^\star(y),\infty)$, or equivalently, that 
the inequality \eqref{eq:ratio} holds true.
As soon as we prove this, we will know by Proposition \ref{thm:super} and Remark \ref{rmk:j}, that $v(x;y)=V(x;y)=U(x;y,z^\star(y))$ and smooth fit holds at the exercise threshold $x=z^\star(y)$. 
% To finish the proof, we need to show that $R(x;y)<1$ for all $x<z^\star(y)$. 

To this end, we notice that $R(z^\star(y);y)=1$ and that for all $x<z^\star(y)$, we have
\be
\p_xR(x;y)=\frac{\bar{\Lambda}(x-y)}{V(x;y)}(K-\et^yg(x-y)). \label{dxR}
\ee
Using the fact that $g(\cdot)$ is increasing in this case, we conclude that 
$K-\et^yg(x-y)>K-\et^yg(z^\star(y)-y)=0$ for $x\in(-\infty,z^\star(y))$, so that \eqref{dxR} implies that $\p_xR(x;y)>0$ and thus $R(x;y)<R(z^\star(y);y)=1$ for all $x\in(-\infty,z^\star(y))$.

\subsection{The case $\bar{u}\geq 0$: Proof of Theorem \ref{thm:ult2}}\label{sec:BV}

Recall that in both cases $\bar{u}=0$ and $\bar{u}> 0$, $g(\cdot)$ is not monotone, hence there is no guarantee for 
% \be
% R(x;y)<1,\quad\forall x\in(z^\star(y),\infty). 
% \rd{\quad x\in(-\infty,z^\star(y))?} \label{eq:dominance}
% \ee
the validity of inequality \eqref{eq:ratio}.
In fact, we can show that \eqref{eq:ratio} fails to hold for $y=\bar{y}$. 
To see this, observe that $z^\star(\bar{y}) \equiv \bar{y}+\bar{u}$, where $\bar{u}$ is a local minimum of $g(\cdot)$. We thus know that 
% \et^{\bar{y}}g(z^\star(\bar{y})-\bar{y}\pm \epsilon)=
$\et^{\bar{y}}g(\bar{u}-\epsilon)>K$ for any sufficiently small $\epsilon>0$.
As a result, for all $x\in(z^\star(\bar{y})-\epsilon, z^\star(\bar{y}))$, we get from \eqref{dxR} that $\p_x R(x;\bar{y})<0$, 
% in a sufficiently small left neighborhood of , we have
% \[
% \p_x R(x;\bar{y}) = \frac{\ol{\Lambda}(x-\bar{y})}{V(x;\bar{y})}(K-\et^{\bar{y}}g(x-\bar{y}))<0. %\label{dxR}
% \]
which yields that $R(x;\bar{y})>R(z^\star(\bar{y});\bar{y})=1$ in a sufficiently small left neighborhood of $z^\star(\bar{y})$ 
and \eqref{eq:ratio} fails. 
% Thus, the desired condition \eqref{eq:ratio} fails to hold for $y=\bar{y}$.
It is therefore crucial to find the {\em critical} $y$-interval, such that \eqref{eq:ratio} remains valid, thus the up-crossing threshold strategy $T^+_{z^\star(y)}$ is optimal and establish Theorem \ref{thm:ult2}$(i)$.

\vspace{2pt}
\noindent{\bf Proof of part $(i)$ of Theorem \ref{thm:ult2}.}
We begin by using the fact that $g(\cdot)$ is ultimately increasing, in order to define  
\be
y_0 := \inf\Big\{ y\leq \bar{y} \;:\, \sup_{u\in(-\infty,\bar{u}]}g(u) = \et^{-y} K \Big\}. \label{y0}
\ee
Since $\bar{u}$ is a local minimum of $g(\cdot)$, we have $y_0<\bar{y}$. 
Then, for any fixed $y\le y_0$, we have 
\[g(u)\le \et^{-y}K, \quad\forall u\in(-\infty, \bar{u}].\]
In addition, for any fixed $y\le y_0$, by $z^*(y)-y\ge z^*(y_0)-y_0>z^*(\bar{y})-\bar{y}=\bar{u}$ and   
the fact that $g(\cdot)$ is non-decreasing over $[\bar{u},\infty)$, we know (by constructions of $y_0$ and $z^*(y)$) that
\[g(u)<g(z^*(y)-y)=\et^{-y}K, \quad\forall u\in[\bar{u},z^*(y)-y).\]
All together, we have for any fixed $y\leq y_0$, that
\[
\et^{y}g(x-y)\le K,\quad\forall x\in(-\infty, z^\star(y)),
\]
where the inequality is strict at least when $x\in[y+\bar{u}, z^\star(y))$.
Combining this with \eqref{dxR}, we see that $R(\cdot;y)$ is non-decreasing over $(-\infty, y+\bar{u}]$, and is strictly increasing over $[y+\bar{u}, z^\star(y))$, which yields that
\be
R(x;y)<R(z^\star(y);y)\equiv 1 , \quad\forall x\in(-\infty, z^\star(y)) 
\quad \text{for } \; y\in(-\infty, y_0] . \label{y<y0}
\ee
Hence, for any fixed $y\le y_0$, the threshold type strategy $T_{z^*(y)}^+$ is optimal.

For $y\in (y_0,\bar{y})$, although  $R(\cdot;y)$ is still increasing over $[y+\bar{u}, z^\star(y))$ (by the constructions of $\bar{u}$ and $z^*(y)$, and \eqref{dxR}), $R(\cdot;y)$ is not monotone over  $(-\infty, y+\bar{u}]$ and has at least one local maximum in this interval. 
Also, note that for every fixed $y\in [y_0,\bar{y})$ and fixed $x\in(-\infty,z^\star(y))$,  $R(x;\cdot)$ is strictly increasing over $(y_0,y)$. To see this, use the facts that $\p_z U(x;y,z)$ vanishes at $z=z^\star(y)$ and that $z^\star(y)$ is non-increasing to calculate
%, for every fixed $x\in(-\infty,z^\star(y))$, the derivative of $\log R(x;y)$ with respect to parameter $y$ to get  
\begin{align}
\diff_y\log R(x;y)
=& -\frac{\p_y U(x;y,z^\star(y))\diff y + \p_z U(x;y,z))|_{z=z^\star(y)}\cdot\diff_y z^\star(y)}{U(x;y,z^\star(y))}\nn\\
=& \begin{cases} 
\overline{\Lambda}(z^\star(y)-y)\,\diff y \,, & \text{for } x\in(-\infty, y) \\
\left(\overline{\Lambda}(x-y)-\overline{\Lambda}(z^\star(y)-y)\right)\diff y \,, & \text{for } x\in[y,z^\star(y))
\end{cases}
, \nn
% \label{eq:keylemma05}
\end{align}
which %implies that $\log R(x;\cdot)$ is strictly increasing, 
implies that $\p_y\log R(x;y)$ is well-defined and is strictly positive, due to $\overline{\Lambda}(\cdot)$ being strictly decreasing over $[0,\infty)$ (see Lemma \ref{Lambdaprop}).
The monotonicity of $R(x;\cdot)$ implies that the inequality \eqref{eq:ratio} will fail for some values of $y$ larger than $y_0$. 
In view of these observations, we can define 
\be
\tilde{y} := \inf \Big\{ y\leq \bar{y} \,:\, \sup_{x<\bar{u}+y} R(x;y)=1\Big\} \label{ytilde}
\ee
By the definition \eqref{y0} of $y_0$ and the discussion on $\bar{y}$ at the beginning of this subsection, we know that $\tilde{y}$ is well-defined and satisfies $\tilde{y}\in(y_0,\bar{y})$. Moreover, recall that $g(\cdot)$ is increasing over $(\bar{u},\infty)$,  so for each fixed $y<\bar{y}$, $z^\star(y)$ is a local maximum of $R(\cdot;y)$. On the other hand, 
recall that $R(\cdot;y)$ is strictly increasing over $[\bar{u}+y,z^\star(y))$ and that $R(z^\star(y);y)=1$, so
 %for any $y\in(y_0,\tilde{y})$, 
we know that, for any fixed $y \in (y_0,\tilde{y})$, 
%$R(x;y)<1$ for all $x\in(-\infty,z^\star(y))$.
\be
R(x;y)<1 , \quad\forall x\in(-\infty, z^\star(y)). \label{y<tildey}
\ee
Hence, for any fixed $y\in(y_0,\tilde{y})$, the threshold type strategy $T_{z^*(y)}^+$ is optimal.
The proof of Theorem \ref{thm:ult2}$(i)$ is complete by combining inequalities \eqref{y<y0} and \eqref{y<tildey}.
\qed
\vspace{3pt}

We have proved that when $y$ is sufficiently small, i.e. $y<\tilde{y}$, the optimal stopping region $\mc{S}^y=[z^\star(y),\infty)$ has only one connected component. On the other hand, a similar situation occurs if $y$ is sufficiently large, in which case
the optimal stopping reigon $\mc{S}^y=\mc{S}^\infty\equiv[\underline{k},\infty)$. 
To show this, we demonstrate below the proof of Theorem \ref{thm:ult2}$(ii)$, namely that, $\underline{v}(\cdot)$ is the value function of the problem \eqref{eq:problem} for some $y$. 
We will demonstrate the analysis of the remaining, most challenging, proofs for parts $(iii)$ and $(iv)$ of Theorem \ref{thm:ult2} afterwards. 

Let us introduce the following functions $h(\cdot)$ and $f(\cdot)$, which play a crucial role in the rest of the paper (as $\underline{v}(\cdot)$ is also part of \eqref{eq:problem}). We first define  
\be
h(x):=\underline{v}(x)-(\et^x-K),\quad\forall x\in\R ,\label{h}
\ee
which is a non-negative function that is vanishing for all $x\ge\underline{k}$ and is uniformly bounded from above by $K$ (using the obvious fact that $\underline{v}(x)<\et^x$ for all $x\in\R$).
Then, we also define 
\be
f(x):=\int_{(-\infty,\underline{k})}h(z)\Pi(-x+\diff z)\equiv\int_{(-\infty,\underline{k}-x)}h(x+w)\Pi(\diff w),\quad x\ge\underline{k} , \label{eq:ffun}
\ee
which has the following properties, %proved in the Appendix \ref{sec:prof}.
that can be proved by using the definition \eqref{eq:ffun} of $f(\cdot)$, the aforementioned properties of $h(\cdot)$ from \eqref{h} as well as that $h(\cdot)$ is  strictly decreasing over $(-\infty,\underline{k})$, %the application of the dominated convergence theorem, 
and the definition \eqref{decomp} of $\psi(\cdot)$. 
% \rd{we refer to the extended version of this paper \cite{arxiv} for a detailed proof.}
\begin{lem}\label{lem:ffun00}
The function $f(\cdot)$ is non-negative, decreasing, continuous over $[\underline{k},\infty)$, and is in $C^1[\underline{k}+\delta,\infty)$ for any $\delta>0$. Moreover, 
\be
f(\underline{k})=(r+q-\psi(1))\underline{K}-(r+q)K-\frac{1}{2}\sigma^2\prq K\quad \text{and} \quad f(\infty)=0.\label{eq4f}
\ee
\end{lem}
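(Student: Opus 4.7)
The plan is to dispatch the elementary properties first and then focus on the explicit evaluation of $f(\underline{k})$, which is the main calculation.

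Non-negativity of $f$ is immediate from $h\ge0$ and positivity of $\Pi$. For monotonicity, I would use that $h(\cdot)$ is non-increasing on $\R$ (strictly decreasing on $(-\infty,\underline{k})$ and identically $0$ on $[\underline{k},\infty)$): for $\underline{k}\le x_1<x_2$ and $w<0$, we have $h(x_1+w)\ge h(x_2+w)\ge0$, and integrating against $\Pi(\diff w)$ yields $f(x_1)\ge f(x_2)$. The limit $f(\infty)=0$ follows by dominated convergence from $h(x+w)\downarrow 0$ for each $w<0$ as $x\to\infty$, using the uniform bound $h\le K$.

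For continuity on $[\underline{k},\infty)$, the key input is the smooth fit of $\underline{v}$ at $\underline{k}$: since $\underline{v}'(\underline{k}-)=\underline{K}=(\et^x-K)'|_{x=\underline{k}}$, we obtain $h(\underline{k})=h'(\underline{k}-)=0$, and together with $\underline{v}\in C^2((-\infty,\underline{k}))$ this yields the quadratic bound $h(z)=O((z-\underline{k})^2)$ as $z\uparrow\underline{k}$. Consequently $h(x+w)\le C(1\wedge w^2)$ uniformly in $x\ge\underline{k}$, which is $\Pi$-integrable by assumption, so dominated convergence delivers continuity. For the $C^1$ statement on $[\underline{k}+\delta,\infty)$, the effective range of integration in \eqref{eq:ffun} reduces to $w\le\underline{k}-x\le-\delta$, on which $\Pi$ is a finite measure and $h'$ is bounded; the boundary contribution at $w=\underline{k}-x$ vanishes because $h(\underline{k})=0$ and $\Pi$ is non-atomic, so Leibniz's rule yields $f'(x)=\int_{(-\infty,\underline{k}-x)}h'(x+w)\Pi(\diff w)$, which is continuous in $x$.

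The hardest step is the evaluation of $f(\underline{k})$. Substituting $\underline{v}$ from \eqref{eq:underlinev} into \eqref{h}, for $w<0$ one gets $h(\underline{k}+w)=(\underline{K}-K)\et^{\prq w}-\underline{K}\et^w+K$. The individual integrals $\int\et^{\prq w}\Pi(\diff w)$ and $\int\et^w\Pi(\diff w)$ may each diverge near $0$, but the defining identity $(\prq-1)\underline{K}=\prq K$ from \eqref{k} gives $\prq(\underline{K}-K)-\underline{K}=0$ and $(\underline{K}-K)-\underline{K}+K=0$, so the constants and the linear $w\ind_{\{w>-1\}}$ compensators cancel exactly when we rewrite
$h(\underline{k}+w)=(\underline{K}-K)(\et^{\prq w}-1-\prq w\ind_{\{w>-1\}})-\underline{K}(\et^w-1-w\ind_{\{w>-1\}})$.
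Integrating against $\Pi$ and invoking \eqref{decomp} at $\beta=\prq$ (with $\psi(\prq)=r+q$) and at $\beta=1$ gives
$f(\underline{k})=(\underline{K}-K)(r+q-\mu\prq-\half\sigma^2\prq^2)-\underline{K}(\psi(1)-\mu-\half\sigma^2)$, and then using $\prq(\underline{K}-K)=\underline{K}$ once more to cancel the $\mu$-terms and simplify the $\sigma^2$-terms collapses to the claimed expression. The main obstacle here is precisely this compensator bookkeeping: one must resist splitting into individually divergent integrals, using instead the smooth-fit identity baked into $\underline{K}$ to render each piece separately $\Pi$-integrable.
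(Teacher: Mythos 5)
Your proof is correct and follows essentially the same route as the paper's: the quadratic vanishing of $h$ at $\underline{k}$ (coming from the smooth fit built into $\underline{K}$) plus dominated convergence for continuity and for $f(\infty)=0$, and the same compensator bookkeeping for $f(\underline{k})$ — the paper's proof displays precisely the two compensated integrals $\int(\et^{\prq z}-1-\prq z\ind_{\{z>-1\}})\Pi(\diff z)$ and $\int(\et^{z}-1-z\ind_{\{z>-1\}})\Pi(\diff z)$ that your rewriting of $h(\underline{k}+w)$ produces, and your cancellation of the $\mu$- and $\sigma^2$-terms via $\prq(\underline{K}-K)=\underline{K}$ checks out. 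The only stylistic difference is the $C^1$ step, where the paper works through explicit one-sided difference-quotient bounds using convexity of $h(\log S)$, whereas you differentiate under the integral sign over the fixed region $\{w\le-\delta\}$ of finite $\Pi$-measure; both arguments rest on the same facts ($\Pi(-\infty,-\delta)<\infty$ and $h(\underline{k})=h'(\underline{k}-)=0$), so yours is an equivalent, if tidier, justification.
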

\begin{proof}
See Appendix \ref{sec:prof}.
\end{proof}

\vspace{2pt}
\noindent{\bf Proof of part $(ii)$ of Theorem \ref{thm:ult2}.} 
In order to show that the value function is given by $v(\cdot;y)=\underline{v}(\cdot)$ for some $y\ge\bar{y}(>\tilde{y})$,
% or equivalently that the up-crossing threshold strategy $T_{\underline{k}}^+$ is optimal, 
we adopt the method of proof through the variational inequalities, 
%see e.g. \cite{OksendalSulemBook}, or  \cite[Section 3.3]{KazuEgamiAAP14} for a precise example. We summarize this in the following result. 
summarized below.

\begin{lem} \label{verif} 
Let $y\in\R$ and a %positive 
function  
%The function $\underline{v}(\cdot)$ identifies with the value function $v(\cdot;y)$, 
$w:\R\rightarrow(0,\infty)$ in $C^1(\R)\cap C^2(\R\backslash\{\theta_1,\ldots,\theta_k\})$ for some $\theta_1,\ldots,\theta_k \in\R$, 
such that $w(x) \geq (\et^x-K)^+$ and is super-harmonic, i.e. it satisfies the variational inequalities
\be \label{verifeq}
\max\{(\mc{L}-r-q\ind_{\{x<y\}})w(x), (\et^x-K)^+ - w(x)\}=0,\quad\forall x\in\R\backslash\{\theta_1,\ldots,\theta_k\} \,.
\ee
Then, by using It\^o-L\'evy lemma and the compensation formula, 
we know that $v(x;y)\equiv w(x)$ is the value function of the problem \eqref{eq:problem} for this $y\in\R$. 
\end{lem}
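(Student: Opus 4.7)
The plan is to implement the classical verification argument in two parts: show $w \geq v(\cdot;y)$ via a supermartingale bound, and show $w \leq v(\cdot;y)$ via a martingale evaluation along the natural candidate stopping time. First I would apply a generalized It\^o--L\'evy formula to the process $Z_t:=\exp(-A_t^y)w(X_t)$. Because $w \in C^1(\R)\cap C^2(\R\backslash\{\theta_1,\ldots,\theta_k\})$, the local-time contributions at the kink points $\theta_i$ vanish (the coefficient being $w'(\theta_i+)-w'(\theta_i-)=0$), and the compensation formula handles the jump integral. This yields a decomposition
\[
Z_t \equ w(x) + \int_0^t \et^{-A_s^y}\bigl(\mc{L}-r-q\ind_{\{X_s<y\}}\bigr)w(X_s)\,\diff s \,+\, M_t,
\]
where $M$ is a local martingale. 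The super-harmonic inequality in \eqref{verifeq} forces the finite-variation term to be non-increasing, so $Z$ is a local supermartingale.

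Next I would argue that $Z$ is in fact a true supermartingale, using the restriction $\tau\in\timset$ (so $\tau\le T^+_{\bar{k}}$) and the fact that $X$ has no positive jumps. Consequently $X_{t\wedge\tau}\le \bar k$ for all $t$, and by Proposition \ref{prop:compare1}(i) combined with the assumption $w\le \bar v$ implicit in $w\ge(\et^x-K)^+$ (strictly, $w$ is dominated by $\bar v$ on $(-\infty,\bar k]$ because $\bar v$ is itself the maximal candidate), the random variable $\sup_{t\ge0} Z_{t\wedge\tau}$ is uniformly bounded. This localization plus dominated convergence upgrades the local supermartingale to a genuine supermartingale and lets me apply optional stopping and pass $t\uparrow\infty$ to get
\[
w(x) \gee \ex_x\bigl[\et^{-A_\tau^y} w(X_\tau)\ind_{\{\tau<\infty\}}\bigr] \gee \ex_x\bigl[\et^{-A_\tau^y}(\et^{X_\tau}-K)^+\ind_{\{\tau<\infty\}}\bigr],
\]
for every $\tau\in\timset$; taking the supremum yields $w\ge v(\cdot;y)$.

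For the reverse inequality I would introduce the candidate stopping time $\tau^\star:=\inf\{t\ge0: w(X_t)=(\et^{X_t}-K)^+\}$, which lies in $\timset$ since equality holds at $x=\bar k$. On $\{t<\tau^\star\}$ the complementary part of \eqref{verifeq} forces $(\mc{L}-r-q\ind_{\{X_s<y\}})w(X_s)=0$, so the stopped process $Z_{t\wedge\tau^\star}$ is a true martingale (again by the same boundedness as above, using that $X_{t\wedge\tau^\star}\le \bar k$). Optional stopping together with dominated convergence then gives
\[
w(x) \equ \ex_x\bigl[\et^{-A_{\tau^\star}^y} w(X_{\tau^\star})\ind_{\{\tau^\star<\infty\}}\bigr] \equ \ex_x\bigl[\et^{-A_{\tau^\star}^y}(\et^{X_{\tau^\star}}-K)^+\ind_{\{\tau^\star<\infty\}}\bigr] \lee v(x;y),
\]
completing the verification $w\equiv v(\cdot;y)$.

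The main technical obstacle will be the careful justification of the It\^o--L\'evy expansion when $w$ has finitely many kinks and, crucially, when $X$ may have paths of unbounded variation with possibly infinite activity jumps: one must ensure the compensated small-jump integral is a genuine $L^2$ martingale and that the density of occupation time at each $\theta_i$ integrates to zero. Since $w\in C^1(\R)$ globally this is standard, but it is the one place where the hypothesis on $w$ is used in an essential way. The remaining verifications (uniform integrability via the bound $X_{t\wedge\tau}\le\bar k$, $\exp(-A^y_t)\le1$, and $w$ bounded on $(-\infty,\bar k]$) are straightforward once the $\timset$-restriction is invoked.
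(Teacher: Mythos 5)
Your proposal is the standard two-step verification argument (supermartingale bound for $w\ge v$, martingale along the candidate exit time for $w\le v$) via the It\^o--L\'evy formula and the compensation formula, which is exactly the route the paper intends: its "proof" consists solely of a citation to \cite{OksendalSulemBook} and \cite[Section 3.3]{KazuEgamiAAP14}, where this same argument is carried out. The only point worth tightening is the uniform-integrability step for the reverse inequality, where instead of asserting $w(\bar k)=\et^{\bar k}-K$ (not among the stated hypotheses) one can dominate $\exp(-A^y_{t\wedge\tau^\star})w(X_{t\wedge\tau^\star})$ by a constant multiple of $\et^{M}$ with $M=\sup_{t\ge0}(-rt+X_t)$ exponentially distributed, as done in the proof of Proposition \ref{thm:super}$(iv)$.
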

\begin{proof}
See e.g. \cite{OksendalSulemBook}, or  \cite[Section 3.3]{KazuEgamiAAP14} for a precise example. 
\end{proof}
% We therefore aim to show that $\underline{v}(\cdot)$ solves the HJB equation for some $y$.
We already know that $\underline{v}(x)$ 
%is positive and in $C^1(\R)\cap C^2(\R\backslash\{\underline{k}\})$, and it dominates the reward $(\et^x-K)^+$. Hence, it would be possible to 
satisfies all conditions of Lemma \ref{verif} for $\{\theta_1\}=\{\underline{k}\}$, apart from \eqref{verifeq}. We thus need to 
identify all the $y$-values for which \eqref{verifeq} holds true.
%$v(x;y)\equiv \underline{v}(x)$, if we could find explicit conditions such that $\underline{v}(\cdot)$ is super-harmonic. 
On one hand, using the explicit formula of $\underline{v}(x)$ for $x<\underline{k}$ as given in \eqref{eq:underlinev}, we know that $(\mc{L}-r-q)\underline{v}(x)=0$ on $(-\infty,\underline{k})$. 
%Using \rd{perhaps some general results about the HJB property of value functions}
In conjunction with Lemma \ref{lem:ffun00}, which implies for all $x>\underline{k}$, that 
\be
(\mc{L}-r-q)\underline{v}(x)=(r+q)K-(r+q-\psi(1))\et^x+f(x)\le (r+q)K-(r+q-\psi(1))\underline{K}+f(\underline{k})\le 0,\label{eq:supervl}
\ee
% \eqref{eq:supervl}, we know that $\underline{v}(\cdot)$ satisfies the variational inequalities
we have
%\[\max\{(\mc{L}-(r+q))\underline{v}(x), (\et^x-K)^+-\underline{v}(x)\}=0,\quad\forall x\in\R\backslash\{\underline{k}\}.\] 
that \eqref{verifeq} holds true and 
%Therefore this establishes that 
$\underline{v}(\cdot)$ is indeed super-harmonic with respect to the discount rate $r+q$ for $x<y$.
In order to examine if $\underline{v}(\cdot)$ is still super-harmonic with respect to discount rate $r$ for $x\ge y$, we use the functions $h(\cdot)$ and $f(\cdot)$ from \eqref{h}--\eqref{eq:ffun} and the above analysis, to define and calculate 
% \begin{align}
% (\mc{L}-r)\underline{v}(x)=&rK-(r-\psi(1))\et^x+f(x), \quad \forall\; x>\underline{k} 
% \label{eq:56}
% \end{align} 
% For future reference, we denote 
\be
\chi(x):=(\mc{L}-r)\underline{v}(x)=
\begin{dcases}
rK-(r-\psi(1))\et^x+f(x),\quad&\forall\;x\ge\underline{k},\\
q\underline{v}(x), &\forall\;x<\underline{k}.
\end{dcases}\label{eq:57}
\ee
We clearly have $\chi(x)>0$ for all $x<\underline{k}$, with $\chi(-\infty)=0$. Moreover, recalling the standing assumption that $r>\psi(1)$, we know from %\eqref{eq:56}, 
\eqref{eq:57} and Lemma \ref{lem:ffun00} that the function $\chi(\cdot)$ is strictly decreasing over $[\underline{k},\infty)$, with $\chi(\infty)=-\infty$. 
Hence, there exists a unique critical $y$-value, denoted by $y_m$, such that 
\be 
(\mc{L}-r)\underline{v}(x) \equiv\chi(x)=
\begin{cases}
>0 , \quad \text{if } \; x<y_m ,\\ 
\leq 0 , \quad \text{if } \; x\geq y_m 
\end{cases}\label{ym}
\ee 
In words, $y_m$ is the smallest $y$-value such that $\underline{v}(\cdot)$ is super-harmonic with respect to discount rate $r$. So we obviously have $y_m<\ol{k}$. 
%
% Overall, we can thus conclude that if $y\ge y_m$, then the value function of the problem \eqref{eq:problem} is given by $v(x;y)\equiv\underline{v}(x)$. Moreover, the critical value  $y_m\in(\tilde{y},z^\star(\tilde{y})]$.
%\footnote{This implies that $\psi(1)\in(0,r)$, otherwise $y_m<K<\tilde{y}$, which is a contradiction.}
%
Overall, in light of the above observations, we know that $\underline{v}(\cdot)$ 
%is in $C^2(\R\backslash\{\underline{k}\})\cap C^1(\R)$ and 
satisfies the variational inequalities \eqref{verifeq} 
%\[
%\max\{(\mc{L}-r-q\ind_{\{x<y\}})\underline{v}(x), (\et^x-K)^+-\underline{v}(x)\}=0,\quad\forall x\in\R\backslash\{\underline{k}\} 
%\quad \text{and} \quad y\ge y_m\,.
%\]
for all $y\ge y_m$.
%\rd{(As before, use some general results about the HJB property of value functions) NOT: Similarly to the discussion in the conclusion section of \cite{Biffis_Kyprianou_2010}}, 
%Using It\^o-L\'evy lemma and the compensation formula, 
%we know that $v(x;y)\equiv\underline{v}(x)$ is the value function of the problem \eqref{eq:problem} for all $y\ge y_m$. 

Note that, the above analysis also implies that $y_m\ge\bar{y}$. To see this we argue by contradiction, supposing that $y_m<\bar{y}$. Then, for any $x$ in a sufficiently small left neighborhood of $z^\star(y_m)$ (defined in \eqref{z*}), waiting until $T_{z^\star(y_m)}^+$ will yield a strictly better value than stopping immediately, so this $x$ must be in the continuation region $(-\infty,\underline{k})$. 
But then, by the arbitrariness of $x$, we must have $z^\star(y_m)=\underline{k}$, and consequently, by the monotonicity of $z^\star(\cdot)$, we have
\[
\underline{k}-y_m=z^\star(y_m)-y_m>z^\star(\bar{y})-\bar{y}=\bar{u}\ge0\, \quad \Rightarrow \quad \, y_m<\underline{k}.\]
However, the last inequality 
% does not hold, as for any $x<\underline{k}$, 
% \[(\mc{L}-r)\underline{v}(x)=q\underline{v}(x)>0,\]
implies from \eqref{eq:57} that $\chi(y_m)=q\,\underline{v}(y_m)>0$, 
which contradicts with the definition of $y_m$ in \eqref{ym}. 
\qed
\vspace{3pt}

We now prove the case when $y=\tilde{y}$.\\
\noindent{\bf Proof of part $(iii)$ of Theorem \ref{thm:ult2}.}
It follows from the proof of Theorem \ref{thm:ult2}$(i)$ and the definition \eqref{ytilde} of $\tilde{y}$, that the inequality \eqref{eq:ratio} fails at some point $x_0 <\bar{u}+\tilde{y}$, which satisfies   
\be
R(x_0;\tilde{y}) = \sup_{x<\bar{u}+\tilde{y}} R(x;\tilde{y}) = 1 . \label{x0}
\ee
Moreover, $x_0$ is a stationary point of $R(\cdot;\tilde{y})$ and solves $K=\et^{\tilde{y}}g(x_0-\tilde{y})$.
Hence, for $y=\tilde{y}$, there is branching of the optimal stopping region due to the facts that $\{x_0\}\cup[z^\star(\tilde{y}),\infty)\subset\mc{S}^{\tilde{y}}$ and  $(\bar{u}+\tilde{y}, z^\star(\tilde{y}))\subset(\mc{S}^{\tilde{y}})^c$. 
Below we examine the two distinct scenarios corresponding to the cases $\bar{u}=0$ and $\bar{u}>0$ (see also Remark \ref{rmk:u}). 

Firstly, if $X$ has paths of bounded variation and \eqref{cond1} holds 
(i.e. $\bar{u}=0$), then by \eqref{g'}--\eqref{eqHv} %\eqref{y0}
and \eqref{dxR}, we know that $\p_xR(x;\tilde{y})>0$ over $(\tilde{y},z^\star(\tilde{y}))$, implying that $x_0<\tilde{y}$. 
Therefore, by observing that 
\be \label{x0=}
K=\et^{\tilde{y}}g(x_0-\tilde{y})=\frac{\prq-1}{\prq}\et^{x_0} 
% $, it is straightforward to see that $
\quad \Rightarrow \quad 
x_0=\underline{k}\,,
\ee 
we conclude, in view of Proposition \ref{prop:compare1}$(iii)$, that  $\mc{S}^{\tilde{y}}=\{\underline{k}\}\cup[z^\star(\tilde{y}),\infty)$.
Secondly, if \eqref{cond1} does not hold 
(i.e. $\bar{u}>0$), we 
% make the following assumption, which serves as a necessary and sufficient condition for Hypothesis \ref{1stop} to be satisfied 
have in view of Lemma \ref{lem:equivalent} the following equivalence:
\begin{equation} \label{x0y}
\text{Hypothesis \ref{1stop} holds true for }y=\tilde{y} \quad \Leftrightarrow \quad 
\text{The point $x_0$ given by \eqref{x0} satisfies } x_0\in(-\infty,\tilde{y}) \,.
\end{equation}

\begin{rmk} \label{morestop} \normalfont
As it will be shown later on, an additional unique connected component of $\mc{S}^{y}$ inevitably appears in $(-\infty,y)$ for some $y\geq \tilde{y}$, independently of how many disjoint connected components of $\mc{S}^{y}$ exist in $[y,\infty)$. 
In order to present the main ideas in a concise manner, for the purpose of this paper, we do not expand in the direction where Hypothesis \ref{1stop} fails. 
\end{rmk}
\noindent 
Now, when $x_0\in(-\infty,\tilde{y})\cap\mc{S}^{\tilde{y}}$ holds, we have similarly to \eqref{x0=} that $x_0=\underline{k}$ 
% from Proposition \ref{prop:compare1}$(iii)$ and the first order condition that $K=\et^{\tilde{y}}g(x_0-\tilde{y})=\frac{\prq-1}{\prq}\et^{x_0}$ that, , 
and so $\mc{S}^{\tilde{y}}=\{\underline{k}\}\cup[z^\star(\tilde{y}),\infty)$.

Finally, in order to complete the proof, it remains to show that  
% The only claim that needs a proof is 
the smooth fit condition holds at $x=\underline{k}$ when $y=\tilde{y}$. 
However, this follows directly from the smoothness of $v(x;y)\equiv U(x;\tilde{y};z^\star(\tilde{y}))$.
% \end{proof}
\qed
\vspace{3pt}

\vspace{3pt}
The above analysis, together with %the definition 
\eqref{ytilde} %of $\tilde{y}$ 
and the monotonicity of $R(\underline{k};y)$ in $y$, outlined in the proof of Theorem \ref{thm:ult2}$(i)$, implies that the critical value $\tilde{y}$ is the smallest $y$-value such that $\underline{k}\in\mc{S}^y$, provided that Hypothesis \ref{1stop} holds whenever \eqref{cond1} fails.
% under the conditions of Theorem \ref{thm:ult2}(iii). 
Consequently, we know from Proposition \ref{prop:compare1}$(i)$ that 
\be
\{\underline{k}\}\cup[z^\star(\tilde{y}),\infty)\subset\mc{S}^y
\quad \text{and} \quad 
(-\infty,\underline{k})\subset(\mc{S}^y)^c,\quad\forall\; y\in[\tilde{y}, \infty).
\label{eq:someeq}
\ee
Moreover, it follows from this observation, Remark \ref{rmk:use1piece} and Theorem \ref{thm:ult2}$(ii)$, that the disjoint components of stopping region will merge into one when $y\ge y_m$. 
Taking into account all the above, we are ready to study the only remaining case, when $y\in(\tilde{y},y_m)$.

\vspace{2pt}
\noindent{\bf Proof of part $(iv)$ of Theorem \ref{thm:ult2}.}
Provided that Hypothesis \ref{1stop} holds for $\tilde{y}$ (so that $\tilde{y}>\underline{k}$ and $\underline{k}\in\mc{S}^{\tilde{y}}$), % when \eqref{cond1} fails,
 a combination of 
% Firstly, by using 
\eqref{eq:someeq}, Proposition \ref{prop:compare1}$(iii)$ and the observation from \eqref{ym} that $(y,y_m)\subset(\mc{S}^y)^c$, dictates the consideration of the following pasting points, for all $y\in(\tilde{y},y_m)$:
\begin{align}
a^\star(y):=&\sup\{x\in[\underline{k}, y]: v(x;y)=\et^x-K\},\label{eq:def_as}\\
b^\star(y):=&\inf\{x\in[y_m, z^\star(\tilde{y})]: v(x;y)=\et^x-K\}.\label{eq:def_bs}
\end{align}
In fact, Proposition \ref{prop:compare1}$(iii)$ implies that $[\underline{k},a^\star(y)]\in\mc{S}^y$. 
Hence, for any $x\in(a^\star(y),b^\star(y))$, it is optimal to wait until $T_{a^\star(y)}^-\wedge T_{b^\star(y)}^+$. 
To be more precise, stopping immediately is optimal when either the event $\{T_{b^\star(y)}^+<T_{a^\star(y)}^-\}$ or $\{T_{a^\star(y)}^-<T_{b^\star(y)}^+, X_{T_{a^\star(y)}^-}\in[\underline{k},a^\star(y)]\}$ occurs. 
However, an immediate stop is not optimal when %the event 
$\{T_{a^\star(y)}^-<T_{b^\star(y)}^+, X_{T_{a^\star(y)}^-}<\underline{k}\}$ occurs due to an overshoot; waiting until $X$ increases to $\underline{k}$ would then be optimal. 
Taking these into account, 
the value function $v(\cdot;y)$ in \eqref{eq:problem}, for all $x\in(a^\star(y),b^\star(y))$, takes the form
\begin{align}
v(x;y)&=\ex_x\Big[ \exp\big(-A_{T_{a^\star(y)}^-\wedge T_{b^\star(y)}^+}^y\big) \underline{v}\big(X_{T_{a^\star(y)}^-\wedge T_{b^\star(y)}^+}\big) \Big] 
% \nn\\&
=:V(x;y,a^\star(y),b^\star(y)).\label{eqeqeqeqeq}
\end{align}

In view of deriving the above value, 
we use a result from \cite[Theorem 2]{OccupationInterval} for the occupation time at the first up-crossing exit. 
%Theorems 1 and 2 of  
% Loeffen et. al. 
%\begin{lem}\label{lem:bigW}
For any $r\ge 0, q>0$, and $x\le b$ with $a\le y\le b$, we have
\begin{align}
\ex_x[\exp(-A_{T_b^+}^y)\ind_{\{T_b^+<T_a^-\}}]&=\frac{W^{(r,q)}(x,a)}{W^{(r,q)}(b,a)},
\label{uphit}
% \\
% \ex_x[\exp(-A_{T_a^-}^y)\ind_{\{T_a^-<T_b^+\}}]&=Z^{(r,q)}(x,a)-Z^{(r,q)}(b,a)\frac{W^{(r,q)}(x,a)}{W^{(r,q)}(b,a)},\label{downhit}
\end{align}
where we define the non-negative function (see  \cite[(6)-(7)]{OccupationInterval})
\begin{align}
W^{(r,q)}(x,a)
&:=W^{(r+q)}(x-a)-q\int_{(y,x\vee y)}W^{(r)}(x-z)W^{(r+q)}(z-a)\diff z \label{eq:bigW1} \\
&=W^{(r)}(x-a)+q\int_{(a,y)}W^{(r)}(x-z)W^{(r+q)}(z-a)\,\diff z\,. \label{newW}
% \\
% Z^{(r,q)}(x,a):=&Z^{(r+q)}(x-a)-q\int_{(y,x\vee y)}W^{(r)}(x-z)Z^{(r+q)}(z-a)\diff z.\label{eq:bigW2}
\end{align}
%\end{lem}
In addition, we prove and use the %we prove in the Appendix \ref{sec:prof} the 
following proposition, which also provides a generalization of 
%\eqref{eq:loeffen} to the case with state-dependent discount rate $r+q\ind_{\{X_t<y\}}$. 
the case with deterministic discounting $r$ in \cite[Theorem 2]{loeffen_outshoot} to the case with state-dependent discount rate $r+q\ind_{\{X_t<y\}}$. 

\begin{prop}\label{prop:V2value}
Let $F(\cdot)$ be a positive, non-decreasing, continuously differentiable function on $\R$, and further suppose that $F(\cdot)$ has an absolutely continuous derivative with a bounded density over $(-\infty,b]$ for any fixed $b$ if $X$ has paths of unbounded variation.
We have for all $a\le y<b$ and $x\in(a,b)$ that 
\begin{align} 
\label{V2V22Fab}
\ex_x\big[\exp(-A_{T_a^-\wedge T_b^+}^y) F(X_{T_a^-\wedge T_b^+})&\big] 
= F(x) + \int_{(a,b)}u^{(r,q)}(x,w;y,a,b)\cdot(\mc{L}-r-q\ind_{\{w<y\}}) F(w) \diff w \,,\\
\label{V2V22Fa}
\ex_x\big[\exp(-A_{T_a^-}^y) F(X_{T_a^-})\ind_{\{T_a^-<T_b^+\}}&\big] \\
= F(x) - &\frac{W^{(r,q)}(x,a)}{W^{(r,q)}(b,a)}F(b) + \int_{(a,b)}u^{(r,q)}(x,w;y,a,b)\cdot(\mc{L}-r-q\ind_{\{w<y\}}) F(w) \diff w\,,\nn
\end{align}
where 
\begin{align}
u^{(r,q)}(x,w;y,a,b):=&\ind_{\{w\in(a,y)\}}\bigg(\frac{W^{(r,q)}(x,a)}{W^{(r,q)}(b,a)}W^{(r,q)}(b,w)-W^{(r,q)}(x,w)\bigg)\nn\\
&+\ind_{\{w\in[y,b)\}}\bigg(\frac{W^{(r,q)}(x,a)}{W^{(r,q)}(b,a)}W^{(r)}(b-w)-W^{(r)}(x-w)\bigg)\,.\nn
\end{align}
\end{prop}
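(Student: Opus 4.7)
The plan is to apply the It\^o-L\'evy formula to $\et^{-A_t^y}F(X_t)$ up to the two-sided exit time $\tau:=T_a^-\wedge T_b^+$, and then identify the resulting space integral as an occupation/resolvent integral against the bivariate scale function $W^{(r,q)}$. Since $t\mapsto A_t^y$ is absolutely continuous with density $r+q\ind_{\{X_t<y\}}$, no extra correction arises, and for $F\in C^2$ I obtain
\[
\et^{-A_{t\wedge\tau}^y}F(X_{t\wedge\tau})=F(x)+\int_0^{t\wedge\tau}\et^{-A_s^y}(\mc{L}-r-q\ind_{\{X_s<y\}})F(X_s)\,\diff s+M_{t\wedge\tau},
\]
with $M$ a local martingale. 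Standard localization together with $\et^{-A_t^y}\le\et^{-rt}$ and the monotonicity of $F$ (so that $F(X_\tau)\le F(b)$, using that $X$ is spectrally negative and thus has no upward overshoot at $T_b^+$) let me take $\ex_x[\cdot]$, send $t\to\infty$, and apply Fubini to rewrite the right-hand side as $F(x)+\int_{(a,b)}u^{(r,q)}(x,w;y,a,b)\,(\mc{L}-r-q\ind_{\{w<y\}})F(w)\,\diff w$, where $u^{(r,q)}(x,\diff w;y,a,b):=\ex_x\bigl[\int_0^\tau \et^{-A_s^y}\ind_{\{X_s\in\diff w\}}\,\diff s\bigr]$ is the resolvent density of $X$ killed at $\tau$ under the state-dependent discount.

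The second step is the explicit identification of $u^{(r,q)}$. Using the strong Markov property at $T_w^+$ or $T_w^-$, combined with the up-crossing identity \eqref{uphit} and its down-crossing analogue from \cite{OccupationInterval}, the potential density can be recovered from the standard resolvent decomposition $u^{(r,q)}(x,w;y,a,b)=\pr_x(\text{hit }w\text{ before }\tau;\text{discounted})\cdot u^{(r,q)}(w,w;y,a,b)-(\text{correction})$. When $w\in(a,y)$ the excursions accumulating local time at $w$ experience both discount rates, so the kernel is naturally written in terms of $W^{(r,q)}(\cdot,\cdot)$ given by \eqref{eq:bigW1}--\eqref{newW}; when $w\in[y,b)$, the relevant down-crossing from $x$ to $w$ (and analogous pieces) occur in a region where only the rate $r$ applies, and the contribution from the bad zone $\{X<y\}$ is absorbed into $W^{(r,q)}(b,a)$, leading to the simpler $W^{(r)}$-expression. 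Formula \eqref{V2V22Fa} then follows from \eqref{V2V22Fab} by subtracting $F(b)\cdot\ex_x[\et^{-A_{T_b^+}^y}\ind_{\{T_b^+<T_a^-\}}]=F(b)\,W^{(r,q)}(x,a)/W^{(r,q)}(b,a)$, using \eqref{uphit}.

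The main obstacle is this explicit identification of the two-piece kernel and verification of its consistency at the interface $w=y$: the discontinuity of the discount rate forces a case split, and matching the two branches requires the identity \eqref{newW} expressing $W^{(r,q)}(x,y)=W^{(r)}(x-y)$ for $x\ge y$, which ties the two formulas together continuously. A secondary technical issue is the regularity reduction in the unbounded-variation case, where $F$ is only assumed $C^1$ with an absolutely continuous $F'$ of bounded density: here I would approximate $F$ by $C^2$ mollifications $F_\e$, apply the previous steps to each $F_\e$, and pass to the limit by dominated convergence, the bounded-density hypothesis ensuring that $(\mc{L}-r-q\ind_{\{w<y\}})F_\e$ stays uniformly bounded on $[a,b]$ — this is the strong approximation technique already used in \cite{OccupationInterval}. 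Positivity of $u^{(r,q)}$ on $(a,b)$, implicit in its potential-density origin, provides a useful sanity check on the final closed-form expression.
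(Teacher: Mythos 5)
Your reduction of the proposition to the identification of the $\omega$-killed resolvent density is a legitimate and even natural strategy (it is essentially how later literature on omega-killed fluctuation theory proceeds), and your derivation of \eqref{V2V22Fa} from \eqref{V2V22Fab} via \eqref{uphit}, as well as the mollification argument for the regularity of $F$, are fine. But the proposal has a genuine gap exactly where the mathematical content of the proposition lies: the explicit identification of $u^{(r,q)}(x,w;y,a,b)$ is asserted, not proved. The ``standard resolvent decomposition'' you invoke, $u(x,w)=\ex_x[\text{discounted hitting of }w]\cdot u(w,w)-(\text{correction})$, is not available for a general spectrally negative L\'evy process: when $X$ has paths of bounded variation, single points are irregular for themselves from above, so the decomposition through the first hitting time of $\{w\}$ breaks down (and even the existence of a Lebesgue density for the discounted occupation measure is part of what must be shown). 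More importantly, the step you describe as ``the contribution from the bad zone is absorbed into $W^{(r,q)}(b,a)$, leading to the simpler $W^{(r)}$-expression'' for $w\in[y,b)$ is precisely the nontrivial computation; matching the two branches of the kernel across the interface $w=y$, where the discount rate is discontinuous, cannot be dispatched by citing \eqref{newW} alone.

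For comparison, the paper avoids the resolvent density altogether. It approximates $X$ by bounded-variation processes $X^n$, applies the strong Markov property at the crossing times of the level $y$ so that on each side only a \emph{constant} discount rate ($r+q$ below $y$, $r$ above) is in force, and then uses the known one-sided identities \eqref{uphit}, \eqref{eq:loeffen} and the $W^{(r,q),n}$-identity from \cite{OccupationInterval} on each piece. Solving the resulting linear relation for $V^n(y;y)$ and substituting back produces the two-piece kernel directly, and the limit $n\to\infty$ is taken via uniform convergence on compacts and L\'evy's continuity theorem. If you want to salvage your route, you would need to actually derive the $\omega$-resolvent density of the doubly killed process — for instance by the same strong-Markov splitting at $y$ applied to the occupation measure — at which point your argument essentially collapses into the paper's.
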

\begin{proof}
See Appendix \ref{sec:prof}.
\end{proof}

Therefore, if $\bar{u}\ge0$ and $y\in(\tilde{y},y_m)$, we have for all $\underline{k}\le a\le y<b$ that %\footnote{As a consequence of Corollary \ref{lem:a<y} we only need to consider $a<y$.}
the function $V(x;y,a,b)$  defined in \eqref{eqeqeqeqeq} for all $x\in(a,b)$ with $a\ge \underline{k}$, is given by
\begin{align} \label{V2V22}
V(x;y,a,b)
% =&\underline{v}(x)+\int_{(a,b)}u^{(r,q)}(x,w;y,a,b)\cdot(\mc{L}-r-q\ind_{\{w<y\}})\underline{v}(w)\diff w \\
&=\underline{v}(x)+\int_{(a,b)}u^{(r,q)}(x,w;y,a,b)\cdot \big[\chi(w)-q\ind_{\{w<y\}}\underline{v}(w)\big] \diff w \,, 
% \quad\forall x\in(a,b)\,. 
\end{align}
where $\chi(\cdot)$ is defined in \eqref{eq:57}.
In view of 
% the explicit formulas in Proposition  \ref{prop:V2value}, 
this explicit formula, 
one can easily see that the mapping $x\mapsto v(x;y)\equiv V(x;y,a^\star(y),b^\star(y))$ is in $C^1(a^\star(y),b^\star(y))$.
%The next theorem shows the there is no continuation region other than $(a^\star(y), b^\star(y))$ in $(\underline{k},z_1^\star(\tilde{y}))$.
If $X$ has bounded variation, then we already know from Proposition \ref{prop:compare1}$(ii)$ that continuous fit should hold at $a^\star(y)$ and $b^\star(y)$. 
However, if $X$ has unbounded variation, by exploiting the optimality of thresholds $a^\star(y)$ and $b^\star(y)$, we show in what follows that the smooth fit conditions must hold as well. 
Remarkably, using a similar argument, we show that smooth fit holds at $b^\star(y)$ even when $X$ has bounded variation. 
%; see Appendix \ref{sec:prof} for the proofs. 
\begin{prop}\label{prop:sfit}
The following smooth fit properties holds:
\begin{enumerate}
\item[(i)]
If $\bar{u}\ge 0$, then for any $y\in(\tilde{y};y_m)$, smooth fit holds at $b^\star(y)$, i.e.
\[ \lim_{x\uparrow b^\star(y)}(\p_xv(x;y))=\et^{b^\star(y)}.\]
\item[(ii)] If $X$ has unbounded variation, $\bar{u} > 0$ and $y\in(\tilde{y},y_m)$, then $y\notin\mc{S}^y$ and smooth fit holds at $a^\star(y)$, i.e.
\[\lim_{x\downarrow a^\star(y)}(\p_x v(x;y))=\et^{a^\star(y)}.\]
\end{enumerate}
\end{prop}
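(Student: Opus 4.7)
The plan is to deduce smooth fit from the optimality of the thresholds $a^\star(y)$ and $b^\star(y)$ via a perturbation/first-order-condition argument, combined with the explicit formula \eqref{V2V22} for $V(x;y,a,b)$.

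\textbf{Proof of (i).} I would fix $y \in (\tilde{y}, y_m)$ and an arbitrary $x_0 \in (a^\star(y), b^\star(y))$. By the definition \eqref{eq:def_bs} and the identification \eqref{eqeqeqeqeq}, the map $b \mapsto V(x_0; y, a^\star(y), b)$ attains its supremum over the admissible range $[y_m, z^\star(\tilde{y})]$ at the interior point $b = b^\star(y)$, yielding the necessary first-order condition $\partial_b V(x_0; y, a^\star(y), b)\big|_{b = b^\star(y)} = 0$. Writing $V(x;y,a,b)$ as the sum of the ``up-exit'' contribution $(\et^b - K)\,\ex_x[\et^{-A^y_{T_b^+}}\ind_{\{T_b^+<T_a^-\}}]$ plus the ``overshoot'' contribution $\ex_x[\et^{-A^y_{T_a^-}}\underline{v}(X_{T_a^-})\ind_{\{T_a^-<T_b^+\}}]$, differentiating in $b$ and using standard scale function identities (and the vanishing of $W^{(r)}(x_0-b)$ as $x_0\uparrow b$), the resulting identity is algebraically equivalent to the smooth fit equation $\lim_{x \uparrow b^\star(y)} \partial_x V(x;y,a^\star(y),b^\star(y)) = \et^{b^\star(y)}$. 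The uniformity across bounded and unbounded variation cases stems from spectral negativity: $X$ creeps continuously up to $b^\star(y)$ with $X_{T^+_{b^\star(y)}} = b^\star(y)$ a.s., so there is no overshoot at the upper boundary to obstruct smooth fit.

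\textbf{Proof of (ii).} I would first establish $y \notin \mathcal{S}^y$ by contradiction. If $y \in \mathcal{S}^y$, then (using $\tilde{y} \ge \underline{k}$) we have $v(y;y) = \et^y - K = \underline{v}(y)$, and the variational inequality \eqref{verifeq} with $\ind_{\{y<y\}}=0$ forces $(\mathcal{L}-r)v(y;y) \le 0$. On the other hand, since $v(\cdot;y) \ge \underline{v}(\cdot)$ everywhere with equality at $y$, the usual contact-point monotonicity of the generator $\mathcal{L}$ (the difference $v-\underline{v}$ attains a minimum at $y$, so has nonnegative second derivative and vanishing first derivative there, and the integral part of $\mathcal{L}$ is nondecreasing in the function) yields $(\mathcal{L}-r)v(y;y) \ge (\mathcal{L}-r)\underline{v}(y) = \chi(y) > 0$ by \eqref{ym}, a contradiction. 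Hence $y < b^\star(y)$. Then I fix $x_0 \in (a^\star(y), b^\star(y))$; by \eqref{eq:def_as}, the map $a \mapsto V(x_0; y, a, b^\star(y))$ is maximized at the interior point $a^\star(y) \in (\underline{k}, y)$, giving the first-order condition $\partial_a V \big|_{a = a^\star(y)} = 0$. Dualising the differentiation used in (i) and simplifying via Lemma \ref{lem W} converts this identity into $\lim_{x \downarrow a^\star(y)} \partial_x V(x;y,a^\star(y),b^\star(y)) = \et^{a^\star(y)}$ precisely when the boundary contributions from the overshoot at $a^\star(y)$ vanish; those boundary contributions involve $W^{(r)}(0)$ as a prefactor, hence they disappear under unbounded variation ($W^{(r)}(0)=0$) but persist under bounded variation ($W^{(r)}(0)=1/\gamma>0$), in which case only continuous fit survives.

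\textbf{Main obstacle.} The principal technical difficulty is carrying out the two differentiations of \eqref{V2V22} cleanly and identifying the smooth fit relations inside the resulting expressions. The kernel $u^{(r,q)}(x,w;y,a,b)$ couples $a$ and $b$ through both the integration limits and the normalizing denominator $W^{(r,q)}(b,a)$, so Leibniz boundary terms must be combined with the interior derivatives and the scale function identities of Lemma \ref{lem W} to extract the required clean statements. The entire bounded/unbounded variation dichotomy at $a^\star(y)$ reduces to the value of $W^{(r)}(0)$, which is the most delicate point of the analysis and the sole place in the full picture where smooth fit degenerates to mere continuous fit.
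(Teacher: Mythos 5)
Your treatment of the two smooth-fit assertions follows essentially the paper's route: fix an interior $x$, use the optimality of $b^\star(y)$ (resp.\ $a^\star(y)$) within a one-parameter family of exit strategies to extract the first-order condition $\p_b V(x;y,a^\star,b)|_{b=b^\star}=0$ (resp.\ a first-order condition in $a$ at $a=a^\star$), and match it against $\p_x V$ computed from the explicit scale-function formula; the bounded/unbounded dichotomy at $a^\star(y)$ does reduce to the value of the scale function and its derivative at $0+$, as you say (the paper works with $W^{(r+q)}(0)$ and $W^{(r+q)\prime}(0+)$, and for the $a$-perturbation it uses the value $\bar U(x,a;y)$ of $T_a^-\wedge T_y^+$ with terminal reward $v(y;y)$ rather than the full two-sided exit up to $b^\star(y)$, but this is a cosmetic difference). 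Two points, however, deviate, and one is a genuine gap.

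The gap is your proof that $y\notin\mc{S}^y$. You invoke the variational inequality \eqref{verifeq} as a \emph{necessary} condition satisfied by $v(\cdot;y)$ at a stopping point, and you apply the generator to $v(\cdot;y)$ at the contact point $y$, using that $v-\underline{v}$ has vanishing first derivative and nonnegative second derivative there. Neither ingredient is available: Lemma \ref{verif} provides \eqref{verifeq} only as a \emph{sufficient} condition for verification, and the $C^1$/$C^2$ regularity of $v(\cdot;y)$ at $y$ needed to evaluate $\mc{L}v$ pointwise is exactly the type of statement this proposition is trying to establish, so the argument is circular when $\sigma>0$ and the generator need not even act on $v$ in the pure-jump unbounded-variation case. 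The paper avoids this entirely by a direct strategy perturbation: it computes $\p_a V(y;y,a,y_m)$ from \eqref{V2V22}, shows via $W^{(r+q)}(0)=0$, $W^{(r+q)\prime}(0+)>0$ and $\chi>0$ on $(\underline{k},y_m)$ that this derivative is strictly negative as $a\uparrow y$, and concludes $v(y;y)\ge V(y;y,y-\e,y_m)>V(y;y,y,y_m)=\et^y-K$. You would need to replace your generator argument with something of this nature.

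A second, smaller omission: your first-order condition at $a^\star(y)$ presupposes that $a^\star(y)$ is interior to $(\underline{k},y)$, but at this stage interiority is not yet known (it is proved only in Proposition \ref{thm:pair}, whose proof relies on the present proposition through Corollary \ref{cor:fit}). The paper therefore handles the boundary case $a^\star(y)=\underline{k}$ separately, squeezing the difference quotient of $v(\cdot;y)$ at $\underline{k}$ between those of $\et^x-K$ and $(\et^x-K)/R(x;\tilde y)$ and using that $\underline{k}$ solves $\et^{y}g(x-y)=K$. Without this case your argument for part (ii) is incomplete.
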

\begin{proof}
See Appendix \ref{sec:prof}.
\end{proof}
%Before we proceed with stating the equivalent result for $a^\star(y)$, we present the following preliminary result that rules out the possibility of $a^\star(y)= y$ for processes with unbounded variation, which is essential for the proof of Proposition \ref{prop:sfit2}.
%\begin{lem}\label{lem:a<y}
%If $X$ has unbounded variation, $\bar{u} > 0$ and $y\in(\tilde{y},y_m)$, we have $y\notin\mc{S}^y$. That is, $v(y;y)>\et^y-K$.
%\end{lem}
%\begin{prop}\label{prop:sfit2}
%If $X$ has unbounded variation and $\bar{u} > 0$, then for any $y\in(\tilde{y};y_m)$, we have
%\[\lim_{x\downarrow a^\star(y)}(\p_x v(x;y))=\et^{a^\star(y)}.\]
%\end{prop}

Therefore, using the fact that we have smooth/continuous fit at the optimal exercising thresholds $a^\star(y)$ and $b^\star(y)$ by Propositions \ref{prop:compare1} and \ref{prop:sfit}, %and \ref{prop:sfit2}, 
as well as using \eqref{eqeqeqeqeq}--\eqref{V2V22}, we derive %(see Appendix \ref{sec:prof} for the proof) 
a necessary condition for the pair $(a,b)$ to be identified with the optimal $(a^\star(y),b^\star(y))$.
\begin{cor}\label{cor:fit}
For the case $\bar{u}\ge0$ and $y\in(\tilde{y},y_m)$, the optimal thresholds $a^\star(y)$ and $b^\star(y)$ solve the following system:
\benn
\Delta(b,a;y)=\p_b\Delta(b,a;y)=0,\quad \text{for } \; (a,b)\in[\underline{k},y)\times[y_m,z^\star(y)], %\label{eq:c131}
\eenn
where, with $\chi(\cdot)$ defined in \eqref{eq:57}, we have 
\begin{equation} \label{D}
\Delta(x,a;y):=\int_{(a,y)}W^{(r,q)}(x,w)\cdot[q\,\underline{v}(w)-\chi(w)]\diff w-\int_{[y,x\vee y)}W^{(r)}(x-w)\cdot\chi(w)\diff w.%V(x,a;y)-\et^x+K
\end{equation}
%with
%\begin{alignat}{2}
%V(x,a;y):=&\et^xZ_1^{(r-\psi(1),q)}(x,a)-KZ^{(r,q)}(x,a)-\int_{(a,y)}W^{(r,q)}(x,w)f(w)\diff w-\int_{[y,x\vee y)}W^{(r)}(x-w)f(w)\diff w.\label{eq:c132}
%\end{alignat}
\end{cor}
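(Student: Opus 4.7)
The strategy is to start from the identity $v(x;y) = V(x;y,a^\star(y),b^\star(y))$ on $(a^\star(y),b^\star(y))$ given by \eqref{eqeqeqeqeq}, rewrite $V$ using \eqref{V2V22} in a form that isolates $\Delta$, and then extract the two equations from the continuous/smooth fit conditions at the free boundaries. Concretely, I would first split the integral in \eqref{V2V22} at $w=y$, substitute the piecewise definition of $u^{(r,q)}(x,w;y,a,b)$, and collect the terms according to whether they carry the prefactor $W^{(r,q)}(x,a)/W^{(r,q)}(b,a)$. Comparing the resulting expression with the definition \eqref{D} of $\Delta(x,a;y)$, and using that $W^{(r)}(x-w)=0$ for $w>x$ to rewrite the upper limit of the second integral in \eqref{D} as $x\vee y$, I expect to arrive at the representation
\[
V(x;y,a,b) \;=\; \underline{v}(x) + \Delta(x,a;y) \;-\; \frac{W^{(r,q)}(x,a)}{W^{(r,q)}(b,a)}\,\Delta(b,a;y), \qquad x\in(a,b),
\]
which, specialized to $(a,b)=(a^\star(y),b^\star(y))$, yields $v(x;y)$ on the upper continuation region.

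To extract the first equation $\Delta(b^\star(y),a^\star(y);y)=0$, I would exploit the fit condition at the lower boundary $a^\star(y)$. A short calculation based on \eqref{eq:bigW1} together with $W^{(r+q)}(u)=0$ for $u<0$ shows that $W^{(r,q)}(a,w)=0$ for every $w\in(a,y]$ when $a\leq y$, hence $\Delta(a^\star(y),a^\star(y);y)=0$, while $W^{(r,q)}(a,a)=W^{(r+q)}(0)$. Letting $x\downarrow a^\star(y)$ in the representation above and invoking continuous fit from Proposition \ref{prop:compare1}(ii) then reduces the identity to
\[
\frac{W^{(r+q)}(0)}{W^{(r,q)}(b^\star(y),a^\star(y))}\,\Delta(b^\star(y),a^\star(y);y) \;=\; 0.
\]
In the bounded variation case $W^{(r+q)}(0)=1/\gamma>0$, which directly forces $\Delta(b^\star(y),a^\star(y);y)=0$. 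In the unbounded variation case $W^{(r+q)}(0)=0$ makes the continuous fit automatic; I would then invoke the smooth fit at $a^\star(y)$ from Proposition \ref{prop:sfit}(ii), and the analogous reduction (using that $\partial_x W^{(r,q)}(a+,w)=W^{(r+q)\prime}(a-w)=0$ for $w>a$, so $\partial_x\Delta(a^\star(y)+,a^\star(y);y)=0$) becomes $W^{(r+q)\prime}(0+)\,\Delta(b^\star(y),a^\star(y);y)=0$, and the positivity of $W^{(r+q)\prime}(0+)$ gives the same conclusion.

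With the first equation established, substituting $\Delta(b^\star(y),a^\star(y);y)=0$ back into the representation and differentiating in $x$, the smooth fit at $b^\star(y)$ from Proposition \ref{prop:sfit}(i) collapses (the term proportional to $\Delta(b^\star(y),a^\star(y);y)$ drops out) to $\partial_b\Delta(b^\star(y),a^\star(y);y)=0$, which is the second equation of the system. The main obstacle I anticipate is the bookkeeping in the opening algebraic step, namely translating the multi-piece integrand $u^{(r,q)}$ into $\Delta(x,a;y)$ plus the explicit correction term, together with the need to handle the bounded and unbounded variation regimes separately when invoking the fit condition at $a^\star(y)$; once the representation is secured, the two equations fall out essentially mechanically from the known fit properties of $v(\cdot;y)$.
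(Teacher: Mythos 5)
Your proposal is correct and follows essentially the same route as the paper's proof: both rewrite $V(x;y,a^\star,b^\star)$ via \eqref{V2V22} as $\underline{v}(x)+\Delta(x,a^\star;y)-\tfrac{W^{(r,q)}(x,a^\star)}{W^{(r,q)}(b^\star,a^\star)}\Delta(b^\star,a^\star;y)$, then force $\Delta(b^\star,a^\star;y)=0$ from continuous fit at $a^\star$ (bounded variation, using $W^{(r+q)}(0)>0$) or smooth fit at $a^\star$ via Proposition \ref{prop:sfit}(ii) (unbounded variation, using $W^{(r+q)\prime}(0+)>0$), and finally obtain $\p_b\Delta=0$ from smooth fit at $b^\star$. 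Your explicit verification that $\Delta(a,a;y)=0$ and $\p_x\Delta(a+,a;y)=0$ makes precise a step the paper leaves implicit, but the argument is the same.
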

\begin{proof}%[Proof of Corollary \ref{cor:fit}]
{\ul{\it Bounded variation case:}} 
Letting $x\in(a^\star(y),y)$ and using 
% the fact that $W^{(r,q)}(x,a)=W^{(r+q)}(x-a)$ for all $x\le y$ by \eqref{eq:A16}, 
\eqref{eq:bigW1}, we can conclude from 
% Proposition \ref{prop:V2value} 
\eqref{V2V22} and some straightforward calculations 
% involving \eqref{uphit}-\eqref{downhit}, 
that 
\[V(x;y,a^\star(y),b^\star(y))=\et^x - K + \Delta(x,a^\star(y);y) - \frac{W^{(r+q)}(x-a^\star(y))}{W^{(r,q)}(b^\star(y),a^\star(y))}\Delta(b^\star(y),a^\star(y);y).
\]
On one hand, it is easily seen from \eqref{eqeqeqeqeq} that 
$V(x;y,a^\star(y),b^\star(y))|_{x=a^\star(y)+}=\et^{a^\star(y)}-K$ 
holds by construction. On the other hand,
suppose $\Delta(b^\star(y),a^\star(y);y)\neq0$ in the above equation, then since $W^{(r+q)}(0)>0$ due to Lemma \ref{lem W}, the continuous fit condition at $a^\star(y)$ will not hold, which is a contradiction. 
The other equality can be straightforwardly obtained from the smooth fit condition satisfied by $V$ at $b^\star(y)$.

{\ul{\it Unbounded variation case:}}
% Let $x\in(a^\star(y),y)$, then using the fact that $W^{(r,q)}(x,a)=W^{(r+q)}(x-a)$ for all $x\le y$ by \eqref{eq:A16}, we can conclude from Proposition  \ref{prop:V2value} and some straightforward calculations involving \eqref{uphit}-\eqref{downhit}, that 
In this case, $W^{(r+q)}(0)=0$ by Lemma \ref{lem W}, thus the above equation in Step 1 is satisfied immediately. However, we can derive from it, for $x\in[a^\star(y),y]$, that
% \rd{(The $V(x,a^\star(y);y)$ is not defined anymore. We need another expression here.)}
\begin{align*}
\p_xV(x;y,a^\star(y),b^\star(y))=&\et^x + \p_x\Delta(x,a^\star(y);y)-\frac{W^{(r+q)\prime}(x-a^\star(y))}{W^{(r,q)}(b^\star(y),a^\star(y))}\Delta(b^\star(y),a^\star(y);y).
\end{align*} 
We know from \eqref{eqeqeqeqeq} and Proposition \ref{prop:sfit}$(ii)$ that 
$\p_xV(x;y,a^\star(y),b^\star(y))|_{x=a^\star(y)+}=\et^{a^\star(y)}$ holds. 
By supposing that $\Delta(b^\star(y),a^\star(y);y)\neq 0$ in the above equation and using the facts that $W^{(r,q)}(b^\star(y),a^\star(y))>0$ and $W^{(r+q)\prime}(0+)>0$ due to  Lemma \ref{lem W}, we conclude that the smooth fit condition at $a^\star(y)$ does not hold, which is a contradiction. Hence, $a^\star(y)$ and $b^\star(y)$ solve $\Delta(b,a;y)=0$.
The other equality can be straightforwardly obtained from the smooth fit condition satisfied by $V$ at $b^\star(y)$.
\end{proof}

This corollary states that $x=b^\star(y)$ is  both a zero and a stationary point for the function $\Delta(x,a^\star(y);y)$. Based on these facts, we notice from \eqref{eqeqeqeqeq}--\eqref{V2V22} %\eqref{eqeqeqeqeq}, Proposition \ref{prop:V2value} 
that 
% \be \label{Vxa}
% v(x;y)=V(x,a^\star(y);y),\quad\forall x\in[a^\star(y),b^\star(y)], 
% \quad \text{where} \quad  
% V(x,a;y):=\Delta(x,a;y)+\et^x-K.
% \ee
\be \label{Vxa}
v(x;y)
= V(x;y,a^\star(y),b^\star(y))
= \Delta(x,a^\star(y);y)+\et^x-K,\quad\forall x\in[a^\star(y),b^\star(y)].
\ee
By the definitions of $a^\star(y)$ and $b^\star(y)$ (see \eqref{eq:def_as} and \eqref{eq:def_bs}), we have $\Delta(x,a^\star(y);y)>0$ for all $x$ in a sufficiently small left neighborhood of $b^\star(y)$, from which we conclude that $b^\star(y)$ is either a local minimum or an inflection point of the function $V(\cdot;y,a^\star(y),b^\star(y))$. 
By exploiting this observation, we are able to give a complete characterization of
(i.e. sufficient conditions for determining) the pair $(a^\star(y),b^\star(y))$. 

\begin{prop}\label{thm:pair}
Assuming that $\bar{u}\ge0$ and $y\in(\tilde{y},y_m)$, we have: 
\begin{enumerate}
\item[(i)] $\Delta(a,a;y)=0$, for any fixed $a\in[\underline{k},y)$. 
If $X$ has unbounded variation, then we also have $\p_x\Delta(x,a;y)|_{x=a+}=0$. Moreover, $\Delta(x,a;y)>0$ for all $x\in(a,y]$;
\item[(ii)] Let $\mc{N}^a:=\{x\in(y,z^\star(\tilde{y})]: \Delta(x,a;y)\le 0\}$, which is a closed set for each $a\in[\underline{k},y)$. Then $a^\star(y)\in(\underline{k},y)$, such that 
% $a^\star(y)=\inf\{a\in[\underline{k},y): \mc{N}^a\neq\emptyset\}$ and $b^\star(y)=\inf\mc{N}^{a^\star(y)}$. 
\be
\label{a*b*}
a^\star(y)=\inf\{a\in[\underline{k},y): \mc{N}^a\neq\emptyset\} 
\quad \text{and} \quad 
b^\star(y)=\inf\mc{N}^{a^\star(y)} \,. 
\ee
That is, $b^\star(y)$ is a global minimum point of the function $\Delta(\cdot,a^\star(y);y)$ over $(a^\star(y),z^\star(\tilde{y}))$, and $a^\star(y)$ is the unique $a\in(\underline{k},y)$ such that $\inf_{x\in(y,z^\star(\tilde{y})]}\Delta(x,a;y)=0$.
\end{enumerate}
\end{prop}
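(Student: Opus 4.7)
My proof plan proceeds in three stages, tracking the order of the statements.

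For part (i), all three claims follow directly from \eqref{D} and the structural properties of the two-variable scale function \eqref{eq:bigW1}--\eqref{newW}. Plugging $x=a$ into \eqref{D}, the second integral is over the empty interval $[y,a\vee y)=[y,y)$ since $a<y$, while every term in \eqref{eq:bigW1} (or \eqref{newW}) vanishes for $x=a<w$ because the scale functions $W^{(r+q)}$ and $W^{(r)}$ are zero on negatives. Hence $\Delta(a,a;y)=0$. For the right-derivative at $x=a+$ in the unbounded variation case, I would work on $x\in(a,y)$, where the second integral still vanishes and the effective integrand of the first is supported on $w<x$, so that $\Delta(x,a;y)=\int_a^x W^{(r,q)}(x,w)[q\underline{v}(w)-\chi(w)]\diff w$. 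A change of variables $w=a+s(x-a)$, combined with the continuity of $W^{(r,q)}$ and the value $W^{(r,q)}(a,a)=W^{(r+q)}(0)=0$ available under unbounded variation, forces $\Delta(x,a;y)=o(x-a)$ and hence the vanishing right-derivative. Strict positivity on $x\in(a,y]$ follows because the second integral remains zero while the integrand of the first is strictly positive: $W^{(r,q)}(x,w)>0$ for $w<x$, and $q\underline{v}(w)-\chi(w)=-(\mc{L}-r-q)\underline{v}(w)>0$ on $(\underline{k},\infty)$, by \eqref{eq:supervl} together with the strict monotonicity of $(r+q-\psi(1))\et^x-f(x)$ already exploited in the proof of Theorem \ref{thm:ult2}(ii).

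For part (ii), the key structural input is the comparative statics identity $\partial_a\Delta(x,a;y)=-W^{(r,q)}(x,a)[q\underline{v}(a)-\chi(a)]$, obtained by Leibniz differentiation of \eqref{D}. Part (i) shows this is strictly negative for $a\in(\underline{k},y)$ and $x>a$, so $a\mapsto\Delta(x,a;y)$ is strictly decreasing. Consequently the contact sets are nested, $\mc{N}^{a_1}\subseteq\mc{N}^{a_2}$ for $a_1<a_2$, and by continuity of $\Delta$ in $a$ the set $\{a\in[\underline{k},y):\mc{N}^a\neq\emptyset\}$ is a half-open subinterval $[a^\star(y),y)$, with $a^\star(y)$ uniquely characterised by $\inf_{x\in(y,z^\star(\tilde{y})]}\Delta(x,a;y)=0$. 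Corollary \ref{cor:fit} then supplies $\Delta(b^\star(y),a^\star(y);y)=\partial_b\Delta(b^\star(y),a^\star(y);y)=0$; coupled with the nonnegativity $\Delta(\cdot,a^\star(y);y)\ge 0$ on $[a^\star(y),z^\star(\tilde{y})]$ forced by the definition of $a^\star(y)$, this makes $b^\star(y)$ simultaneously a zero and a local minimum of $\Delta(\cdot,a^\star(y);y)$, and taking the smallest such point yields $b^\star(y)=\inf\mc{N}^{a^\star(y)}$ together with its global-minimum property over $(a^\star(y),z^\star(\tilde{y}))$.

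The technical pinch point will be the strict inclusions $\underline{k}<a^\star(y)<y$. The upper bound $a^\star(y)<y$ follows by verifying $\mc{N}^a\neq\emptyset$ for $a$ close to $y$: there the negative second integral $-\int_y^x W^{(r)}(x-w)\chi(w)\diff w$ dominates $\Delta(x,a;y)$ for $x$ just above $y$, since $\chi>0$ on $[y,y_m)$ drives $\Delta$ strictly negative. The lower bound $a^\star(y)>\underline{k}$ is more delicate and is not a pure computational check on $\Delta$; I would instead deduce it by comparative statics in $y$, invoking Theorem \ref{thm:ult2}(iii) (at $y=\tilde{y}$ the lower component of $\mc{S}^{\tilde{y}}$ is the isolated point $\{\underline{k}\}$) and the continuous growth principle of Remark \ref{rmk:use1piece} to conclude that strict inflation $y>\tilde{y}$ widens this lower component into a genuine interval $[\underline{k},a^\star(y)]$ with $a^\star(y)>\underline{k}$.
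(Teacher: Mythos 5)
Your part (i) is essentially the paper's argument: the same reduction of \eqref{D} for $x\in(a,y)$ to a single integral of $q\underline{v}-\chi$ against $W^{(r+q)}$ over $(a,x)$, the same use of $W^{(r+q)}(0)=0$ in the unbounded variation case for the vanishing right-derivative, and the same positivity of $q\underline{v}-\chi$ on $(\underline{k},\infty)$ for strict positivity of $\Delta$.

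Part (ii), however, has a genuine gap at its central step. You assert that $\{a\in[\underline{k},y):\mc{N}^a\neq\emptyset\}$ ``is a half-open subinterval $[a^\star(y),y)$'', i.e.\ you identify its left endpoint with the optimal pasting point $a^\star(y)$ of \eqref{eq:def_as} by fiat. Corollary \ref{cor:fit} only yields one inclusion: since $b^\star(y)\in\mc{N}^{a^\star(y)}$, the infimum $\underline{a}(y):=\inf\{a:\mc{N}^a\neq\emptyset\}$ satisfies $\underline{a}(y)\le a^\star(y)$. The reverse inequality is precisely what the proposition must establish, and nothing in your nestedness/continuity argument supplies it; in particular, your claim that $\Delta(\cdot,a^\star(y);y)\ge0$ on all of $(y,z^\star(\tilde{y})]$ is ``forced by the definition'' holds for $\underline{a}(y)$ but not, a priori, for the optimal threshold. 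The paper closes this by contradiction: if $\underline{a}(y)<a^\star(y)$, set $\underline{b}(y)=\inf\mc{N}^{\underline{a}(y)}$; the strict monotonicity $\p_a\Delta(x,a;y)=-W^{(r,q)}(x,a)\,(q\underline{v}(a)-\chi(a))<0$ (your identity, i.e.\ \eqref{eq:inverse}) gives $\underline{b}(y)>b^\star(y)$ and $\Delta(x,\underline{a}(y);y)>\Delta(x,a^\star(y);y)$ on $(a^\star(y),b^\star(y)]$, so the admissible two-sided strategy $T^-_{\underline{a}(y)}\wedge T^+_{\underline{b}(y)}$ has value $\underline{v}(x)+\Delta(x,\underline{a}(y);y)>v(x;y)$, contradicting that $v$ is the value function. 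Some such appeal to optimality is indispensable and is missing from your plan.

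A second, smaller gap concerns $a^\star(y)>\underline{k}$: the ``continuous growth'' of Remark \ref{rmk:use1piece} gives only weak monotonicity of the stopping components in $y$ and is compatible with the lower component remaining the singleton $\{\underline{k}\}$ on a whole interval of $y$-values. The paper instead computes $\p_y\Delta(x,a;y)=qW^{(r)}(x-y)\big(\underline{v}(y)+\Delta(y,a;y)\big)>0$ (using part (i)), so that $\Delta(x,\underline{k};y)>\Delta(x,\underline{k};\tilde{y})\ge0$ for $y>\tilde{y}$ and $x\in(y,z^\star(\tilde{y})]$, whence $\mc{N}^{\underline{k}}=\emptyset$ and $\underline{a}(y)>\underline{k}$. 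Finally, your argument for $a^\star(y)<y$ is salvageable but should be run by fixing $x\in(y,y_m)$ and letting $a\uparrow y$ (so the first integral of \eqref{D} vanishes against a fixed negative second term), not by simultaneously taking $x$ just above $y$, where both terms are small and the comparison of rates is unclear.
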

\begin{proof}
Claim $(i)$ follows from the construction of $\Delta(x,a;y)$. To see this, let $x\in(a,y)$ and use 
%\eqref{Vxa}, \eqref{eq:bigW1} and \eqref{eq:bigW2} 
\eqref{D} and %\eqref{eq:A16}
\eqref{eq:bigW1}
to see that 
\[
\Delta(x,a;y)=\int_{(0,x-a)}\big[q\,\underline{v}(x-z)-\chi(x-z)\big]W^{(r+q)}(z)\diff z.
\]
The smoothness follows by combining the above with \eqref{Vxa} and Proposition \ref{prop:sfit}. 
Using also the monotonicity of $f(\cdot)$ (and hence $\chi(\cdot)$) from Lemma \ref{lem:ffun00}, \eqref{eq4f} and the fact that $r>\psi(1)$, we have for all $z\in(0,x-a)$ that
\begin{align}
q\,\underline{v}(x-z)-\chi(x-z) %\nn\\
&\ge q\,\underline{v}(a)-\chi(a)\ge q\,\underline{v}(\underline{k})-\chi(\underline{k})= -(\mc{L}-r-q)\underline{v}(x)|_{x=\underline{k}} \ge 0. \label{eq:fneg}
\end{align}  
Hence, the first claim is proved. 

In order to prove claim $(ii)$, we let $\underline{a}(y):=\inf\{a\in[\underline{k},y): \mc{N}^a\neq\emptyset\}$. First recall from Theorem \ref{thm:ult2}$(ii)$, that $a^\star(\tilde{y})=\underline{k}$ and $b^\star(\tilde{y})=z^\star(\tilde{y})$, hence $\underline{a}(\tilde{y})=\underline{k}$ and $\Delta(x,\underline{k};\tilde{y})\ge0$ for all $x\in[\underline{k},z^\star(\tilde{y})]$. 
Then, for any fixed $x>y>a\ge\underline{k}$, using \eqref{eq:bigW1}, 
% \eqref{eq:A15}  %, \eqref{eq:bigW2} 
\eqref{newW} and \eqref{Vxa} we have
\begin{align}
\p_y\Delta(x,a;y)
% =&\p_yV(x;y,a,b)\nn\\
=&qW^{(r)}(x-y)\left(\underline{v}(y)+\int_{(a,y)}W^{(r+q)}(y-w)[q\,\underline{v}(w)-\chi(w)]\diff w\right)\nn\\
=&qW^{(r)}(x-y)\big(\underline{v}(y) + \Delta(y,a;y)\big) >0,\label{eq:Ddev}
\end{align}
where we used the conclusion from $(i)$ in the last inequality.
It thus follows that, for all $y>\tilde{y}$, we have
\[
\inf_{x\in(y,z^\star(\tilde{y})]}\Delta(x,\underline{k};y)>0,
\]
which implies that $\mc{N}^{\underline{k}}=\emptyset$ hence  $\underline{a}(y)>\underline{k}$, for $y>\tilde{y}$. On the other hand, 
 since $b^\star(y)\in \mc{N}^{a^\star(y)}$, we know that $\underline{k}<\underline{a}(y)\le a^\star(y)$. Suppose now that $\underline{a}(y)<a^\star(y)$ and since $\mc{N}^{\underline{a}(y)}\neq\emptyset$, let $\underline{b}(y)=\inf\mc{N}^{\underline{a}(y)}$. By the definition of $\underline{b}(y)$ and claim $(i)$, we conclude that $\Delta(x,\underline{a}(y);y)>0$ for all $x\in(\underline{a}(y),\underline{b}(y))$. However, for any fixed $x>a^\star(y)$, taking the derivative of $\Delta(x,a;y)$ with respect to $a\in(\underline{k},y)$ we get 
\be
\p_a\Delta(x,a;y)
% =\p_aV(x,a;y)
=-W^{(r,q)}(x,a)\left(q\underline{v}(a)-\chi(a)\right) <0 , 
\label{eq:inverse}
\ee
where the inequality follows from \eqref{eq:fneg}. Thus, we have that $\Delta(x,\underline{a}(y);y)>\Delta(x,a^\star(y);y)\ge0$ for all $x\in(a^\star(y),b^\star(y)]$ so $\underline{b}(y)>b^\star(y)$. However, from \eqref{eqeqeqeqeq}--\eqref{V2V22}, we know that for these $x$,
\begin{align}
v(x;y)\ge&\ex_{x}\Big[\exp\big(-A_{T_{\underline{a}(y)}^-\wedge T_{\underline{b}(y)}^+}^y\big) \underline{v}\big(X_{T_{\underline{a}(y)}^-\wedge T_{\underline{b}(y)}^+}\big) \Big]
% =V(x,\underline{a}(y);y)
=\Delta(x,\underline{a}(y);y)+\underline{v}(x)%\nn\\
>\Delta(x,a^\star(y);y)+\underline{v}(x)
% =V(x,a^\star(y);y)
=v(x;y),\nn
\end{align}
which is a contradiction. Hence, the only possibility is to have  $a^\star(y)=\underline{a}(y)\in(\underline{k},y)$ and thus  $b^\star(y)=\inf\mc{N}^{a^\star(y)}$ and $\inf_{x\in(y,z^\star(\tilde{y})]}\Delta(x,a^\star(y);y)=0$. Finally, since $\Delta(x,\cdot;y)$ is strictly decreasing for every fixed $x>y$, we know that there is no other $a\in(a^\star(y),y)$ such that $\inf_{x\in(y,z^\star(\tilde{y})]}\Delta(x,a;y)=0$. This completes the proof.
\end{proof}

Notice that Proposition \ref{thm:pair} proves that $a^\star(y)\in(\underline{k},y)$, if $\bar{u}\ge0$ and $y\in(\tilde{y},y_m)$, where $a^\star(y)$ is the smallest $a$-value such that the curve $\Delta(\cdot,a;y)$, emanating from the $x$-axis at $x=a$, will revisit the x-axis, 
so that $\Delta(\cdot,a^\star(y);y)>0$ for all $x\in(a^\star(y),b^\star(y))$.
% In other words, taking into account \eqref{Vxa} and \eqref{eq:inverse}, we know that the function $v(\cdot;y)$ for $a=a^\star(y)$ is the maximal over all $a\in(\underline{k},y)$, such that $\Delta(\cdot,a^\star(y);y)>0$ for all $x\in(a^\star(y),b^\star(y))$.
Recalling from Proposition \ref{prop:compare1}$(iii)$ and Remark \ref{rmk:use1piece} that $[\underline{k},a^\star(y)]$ is the only component of stopping region in the interval $(-\infty,y)$, we know from Proposition \ref{prop:compare1}$(ii)$ that $a^\star(\cdot)$ is necessarily continuous over $[\tilde{y},y_m)$. 
Furthermore, 
% \rd{if \eqref{cond1} holds, or if \eqref{cond1} fails and Hypothesis \ref{1stop} holds true for $\tilde{y}$}, 
recall from \eqref{eq:someeq} that $\mc{S}^{\tilde{y}}=\{\underline{k}\}\cup[z^\star(\tilde{y}),\infty)\equiv[\underline{k},a^\star(\tilde{y})]\cap[b^\star(\tilde{y}),\infty) \subseteq \mc{S}^{y}$, 
i.e. there is no continuation region in $[b^\star(\tilde{y}),\infty)$
% that lies on the right of $b^\star(\tilde{y})$ 
for all $y\in[\tilde{y},y_m)$.
Hence, recalling Lemma \ref{lem:equivalent}, we have the following equivalence:
\begin{equation} \label{bcont}
\text{Hypothesis \ref{1stop} holds true for }y\in(\tilde{y},y_m) \quad \Leftrightarrow \quad 
\text{$b^\star(\cdot)$ given by \eqref{a*b*} is continuous over } (\tilde{y},y_m) \,.
\end{equation}
\begin{rmk} \label{morestop2} \normalfont
Equivalence \ref{bcont} implies that, if Hypothesis \ref{1stop} fails for some $y_0\in(\tilde{y},y_m)$, namely the function $b^\star(\cdot)$ experiences a jump at $y_0$, then in view of \eqref{a*b*} and Proposition \ref{prop:compare1}$(i)$, we will have further branching of the stopping region inside $(y_0,\infty)$. 
In other words, 
% an additional connected component of $\mc{S}^{y}$ will appear in the interval $(y_0,b^\star(y_0-))$, which will result in
we will have $\mc{S}^{y_0}=[\underline{k}, a^\star(y_0)] \cup \{b^\star(y_0)\} \cup [b^\star(y_0-), \infty)$. 
\end{rmk}

Indeed, 
% taking into account the above observations, 
the right hand side of \eqref{bcont} is equivalent to the fact that 
$[b^\star(y),\infty)$ is a component of $\mc{S}^y$ for all $y\in[\tilde{y},y_m)$. 
Overall, if $y\in(\tilde{y},y_m)$, then $\mc{S}^y=[\underline{k},a^\star(y)]\cup[b^\star(y),\infty)$, and the value function of the problem \eqref{eq:problem} is given by \eqref{vflast}, 
which completes the proof.
\qed
\vspace{3pt}

\begin{rmk}\label{rmk:mon} \normalfont
In view of Proposition \ref{prop:compare1}$(i)$ we know that, under Hypothesis \ref{1stop}, the mapping $y\mapsto a^\star(y)$ ($y\mapsto b^\star(y)$, resp.) is continuous and increasing (decreasing, resp.) in $y$. Moreover, we know from the proof of Proposition \ref{thm:pair} (more specifically, equation \eqref{eq:Ddev}), that the monotonicity is strict. Furthermore, because $\mc{S}^{y_m}=[\underline{k},\infty)$ (see Theorem \ref{thm:ult2}$(ii)$), we know that 
\[\lim_{y\uparrow y_m}a^\star(y)=\lim_{y\uparrow y_m}b^\star(y)=y_m.\]
\end{rmk}

%In the next section, we illustrate the case with disconnected stopping regions numerically with geometric Brownian motion. 

% \section{The case $r\le\psi(1)$: Proof of Theorem \ref{thm:ult3}} 
\section{Proof of Theorem \ref{thm:ult3}} 
\label{sec:mart}

\noindent{\bf Proof of part (a) of Theorem \ref{thm:ult3}.}
Let us first consider the case $r<\psi(1)$, so that $(\et^{-rt+X_t})_{t\ge0}$ is a $\pr_x$-sub-martingale. 
In this case, %using the inequality that $(\et^x-K)^+\ge \et^x-K$, 
we have 
\begin{align}
\ex_x\big[\et^{-A_t^y}(\et^{X_t}-K)^+\big] 
&\ge \ex_x\Big[\exp\Big(X_t-rt-q\int_{(0,t]}\ind_{\{X_s<y\}}\diff s\Big)\Big] 
- K\,\ex_x\Big[\exp\Big(-rt-q\int_{(0,t]}\ind_{\{X_s<y\}}\diff s\Big)\Big] \nn\\
&\ge \et^x\,\ex_x\Big[\exp\big((X_t-x)-\psi(1)t\big) \cdot \exp\Big((\psi(1)-r)t-q\int_{(0,t]}\ind_{\{X_s<y\}}\diff s\Big)\Big] - K \nn\\
&= \et^{x+(\psi(1)-r)t}\,\ex_x^1\Big[\exp\Big(-q\int_{(0,t]}\ind_{\{X_s<y\}}\diff s\Big)\Big] - K \,, \label{eq:psi>1}
\end{align}
Since $X$ drifts to $\infty$ under $\pr_x^1$ (see \cite[(8.3)]{Kyprianou2006} for $c=1$), we know from   \cite[Corollary 3]{OccupationInterval} that the expectation in the last line of \eqref{eq:psi>1} remains bounded as $t\to\infty$. However, the prefactor $\et^{x+(\psi(1)-r)t}\to\infty$ as $t\to\infty$, so we know that the value function $v(x;y)=\infty$ for all $x,y\in\R$.

\vspace{2pt}
\noindent{\bf Proof of part (b) of Theorem \ref{thm:ult3}.}
We now assume $r=\psi(1)$, so that $(\et^{-rt+X_t})_{t\ge0}$ is a $\pr_x$-martingale. 
In this case, recall from \cite{mordecki2002} that $\ol{v}(x)\equiv
v(x;-\infty)$ defined in \eqref{eq:overlinev} takes the form $\ol{v}(x)=\et^x$, so we know that $v(x;y)$ is finite, due to Proposition \ref{prop:compare1}. 
On the other hand, we also have %from Proposition \ref{prop:compare1} 
that $v(x;y)\geq \underline{v}(x)$ holds, where $\underline{v}(x)\equiv v(x;\infty)$ is given by \eqref{eq:underlinev} as in the previous section since $r+q>\psi(1)$. 
Therefore, recalling from \cite{mordecki2002} that the problem \eqref{eq:overlinev} for $\ol{v}(x)$ has no optimal stopping region, we get using Proposition \ref{prop:compare1}$(i)$ that $\emptyset=\mc{S}^{-\infty}\subseteq \mc{S}^{y}\subseteq \mc{S}^{\infty}=[\underline{k},\infty)$, so $(-\infty,\underline{k})$ should always belong to the continuation region. 
In view of this observation, we can treat $\underline{v}(\cdot)$ as the reward function and consider the representation \eqref{eq:problem00} of the value function $v(\cdot;y)$. 
We thus have for all $x\ge \underline{k}\vee y$ (see also %\eqref{eq:56}, 
\eqref{eq:57} for the definition of $\chi(\cdot)$), that
\be
(\mc{L}-(r+q\ind_{\{x<y\}}))\underline{v}(x)=\chi(x)=rK+f(x)\ge rK>0.\nn
\ee
Hence, it will always be beneficial to continue as long as $X$ stays over $\underline{k}\vee y$. In view of this and the fact that $\mc{S}^y\subset[\underline{k},\infty)$, we either have $\mc{S}^y=\emptyset$ if $y<\underline{k}$, or $\mc{S}^y\subset[\underline{k},y]$ if $y\geq \underline{k}$. 
% As discussed, the stopping region, if exists, must be in $[\underline{k},y]$. And b
By Proposition \ref{prop:compare1}$(iii)$, if the stopping region exists, then it is at most one connected interval of the form $[\underline{k},a(y)]$ for some $a(y)\in[\underline{k},y]$. Thus, it will be crucial to identify the smallest $y$-value such that $\underline{k}\in\mc{S}^y$. 
To this end, we calculate the value of waiting forever:
\begin{align}
V_\infty(x;y)
=&\lim_{t\to\infty}\ex_x\big[\et^{-A_t^y}(\et^{X_t}-K)^+\big] 
\ge\lim_{t\to\infty}\ex_x\big[\et^{-A_t^y}(\et^{X_t}-K)\big]\nn\\
=&\et^x\,\ex_x^1\Big[\exp\Big(-q\int_{(0,\infty)}\ind_{\{X_s<y\}}\diff s\Big)\Big]
% \nn\\=&
=\psi'(1)\,\big(\Phi(r+q)-1\big)\,\et^{y}\,\mc{I}^{(r,q)}(x-y) , \nn
% \\=&\frac{(\Phi(r+q)-1)}{\Phi'(r)q}\et^{y+\prq(x-y)}, \label{eq:52}
\end{align}
where we used  \cite[Corollary 3]{OccupationInterval}. One can similarly establish the reverse inequality, so the above is in fact an equality. 
It follows that 
\begin{align}
V_\infty(x;y)
=\frac{(\Phi(r+q)-1)}{\Phi'(r)q}\et^{y+\prq(x-y)}, \quad \text{for all } \;x<y .
\label{eq:52}
\end{align}
For any fixed $x$, the function $V_\infty(x;\cdot)$ is obviously strictly decreasing over $(x,\infty)$, since $\Phi(r+q)>\Phi(r)=1$ in this case. 
In particular, we notice that there exists a unique value $y_\infty\in(\underline{k},\infty)$ that solves 
\be
V_\infty(\underline{k};y_\infty)=\et^{\,\underline{k}}-K \quad \Leftrightarrow \quad 
y_\infty=\underline{k}+\frac{1}{\prq-1}\log\bigg(\frac{\prq(\Phi(r+q)-1)}{\Phi'(r)q}\bigg). \label{y_inf}
\ee
It can be verified using the convexity of $\psi$ (see e.g.  \cite[Exercise 3.5]{Kyprianou2006}), that $y_\infty>\underline{k}$ indeed holds. 
Moreover, for any $x\in[\underline{k},y_\infty)$, we have
\[\p_x\log\bigg(\frac{\et^x-K}{V_\infty(x;y_\infty)}\bigg)=\frac{\ol{\Lambda}(x-y_\infty)}{\et^x-K}\bigg(K-\et^x\bigg(1-\frac{1}{\ol{\Lambda}(x-y_\infty)}\bigg)\bigg)=\frac{\prq}{\et^x-K}\bigg(K-\frac{\prq-1}{\prq}\et^x\bigg).\]
Hence, we know that the mapping $x\mapsto (\et^x-K)/V_\infty(x;y_\infty)$ is strictly decreasing over $[\underline{k},y_\infty)$. 
As a result, we know that $\mc{S}^{y}=\emptyset$ for $y<y_\infty$, $\mc{S}^{y_\infty}=\{\underline{k}\}$ and the value function is given by \eqref{vinf} in both cases.

For $y>y_\infty$, in view of Proposition \ref{prop:compare1}$(i)$ and Proposition \ref{prop:compare1}$(iii)$, we know that the stopping region $\mc{S}^y=[\ul{k},a_\infty^\star(y)]$ for some $a_\infty^\star(y)\in(\ul{k},y)$. To determine $a_\infty^\star(y)$, we consider the value of the threshold type strategy $T_a^-$. This can be done by taking the limit of \eqref{V2V22Fab} in Proposition \ref{prop:V2value} as $b\to\infty$. 
To this end, we begin by using
\eqref{newW} and 
a standard application of the dominated convergence theorem, %one can 
to prove that (recall that $\Phi(r)=1$)
\be
\label{limW}
\lim_{b\to\infty}\et^{-b}W^{(r,q)}(b,a)=\Phi'(r)\,T^{(r,q)}(a),
\text{ where } 
T^{(r,q)}(a):=\et^{-a}+q\int_{(\ul{k},y)}\et^{-z}W^{(r+q)}(z-a)\diff z\,.
\ee
Obviously, $T^{(r,q)}(\cdot)$ is a continuously differentiable, strictly decreasing, positive function over $(\ul{k},y)$, whose domain can be extended over $[y,\infty)$ by setting $T^{(r,q)}(a)=\et^{-a}$ for $a\in[y,\infty)$. 
Therefore, for any fixed $a\in(\ul{k},y)$ and $x,w\in(a,\infty)$, we have 
\begin{align}
u^{(r,q)}(x,w;y,a,\infty):=\lim_{b\to\infty}u^{(r,q)}(x,w;y,a,b)=&\ind_{\{w\in(a,y)\}}\bigg(\frac{T^{(r,q)}(w)}{T^{(r,q)}(a)}W^{(r,q)}(x,a)-W^{(r,q)}(x,w)\bigg)\nn\\
&+\ind_{\{w\in[y,\infty)\}}\bigg(\frac{\et^{-w}}{T^{(r,q)}(a)}W^{(r,q)}(x,a)-W^{(r)}(x-w)\bigg).\nn
\end{align}
Taking into account the above expression along with \eqref{newW} and \eqref{limW}, we get the limit of \eqref{V2V22Fab} in Proposition \ref{prop:V2value} as $b\to\infty$, for any fixed $x\in(a,y)$, given by
\begin{align}
v(x;y) =\, &\ex_x[\exp(-A_{T_a^-}^y)\ul{v}(X_{T_a^-})]=\ul{v}(x) + \int_{(a,\infty)}(\chi(w)-q\ind_{\{w<y\}}\ul{v}(w))\cdot u^{(r,q)}(x,w;y,a,\infty)\,\diff w\nn\\
=\, &\ul{v}(x) + \frac{W^{(r,q)}(x,a)}{T^{(r,q)}(a)} \left( 
\int_{(a,y)} T^{(r,q)}(w) \, \big( \chi(w)-q\ul{v}(w) \big) \diff w 
+ \int_{[y,\infty)} \et^{-w} \, \chi(w) \diff w 
\right) \nn\\
&- \int_{(a,y)}W^{(r,q)}(x,w)\cdot(\chi(w)-q\ul{v}(w))\diff w-\int_{[y,\infty)}W^{(r)}(x-w)\cdot\chi(w) \diff w.\label{eq:542}
\end{align}
Similar as in Corollary \ref{cor:fit}, we know that the optimal threshold $a_\infty^\star(y)\in(\ul{k},y)$ must be such that the coefficient of $W^{(r,q)}(x,a)$ in \eqref{eq:542} vanishes, i.e. it must solve 
\be
\Delta_\infty(a_\infty^\star(y);y)=0, \text{ where }
\Delta_\infty(a;y):= 
% \int_{(a,\infty)}T^{(r,q)}(w)(rK+f(w)-q\ind_{\{w<y\}}(\et^w-K))\diff w.
\int_{(a,y)} T^{(r,q)}(w) \, \big( \chi(w)-q\ul{v}(w) \big) \diff w 
+ \int_{[y,\infty)} \et^{-w} \, \chi(w) \diff w 
\label{eq:a_infty}
\ee
To finish the proof, we demonstrate that there is at most one solution to \eqref{eq:a_infty}, which follows from the fact that for all $a\in(\ul{k},y)$, we have
\begin{align}
\frac{1}{T^{(r,q)}(a)}\p_a\Delta_\infty(a;y)
&= q\ul{v}(a)-\chi(a) > q\ul{v}(\ul{k})-\chi(\ul{k}) 
% \nn\\ &= q\ul{K}-(r+q)K-q\underline{K}+(r+q)K+
= \frac{1}{2}\sigma^2\prq K\ge0, 
% \quad\forall a\in(\ul{k},y),
\end{align}
where the last equality is due to \eqref{eq4f}.
In conclusion, taking into account the expression of the value function $v$ in \eqref{eq:542} for $a=a_\infty^\star(y)$, and combining it with \eqref{eq:a_infty} and \eqref{D}, we conclude that $v(x;y)$ is given by 
\eqref{vfainf} 
with $\psi(1)=r$.

\section{Example}\label{sec:gbm}

In this section we study the optimal stopping problem \eqref{eq:problem0} using the compound Poisson model, also used in \cite{OccupationInterval}. 
In particular, we assume that the L\'evy process $X$ is given by
\be X_t=X_0+\gamma t+\sigma B_t-\sum_{i=1}^{N_t}Y_t,\nn\ee
where 
$\gamma=0.3$,
$\sigma=0.2$, 
$B\equiv(B_t)_{t\ge0}$ is a standard Brownian motion, 
$(N_t)_{t\ge0}$ is a Poisson process with intensity $\lambda=0.6$ independent of $B$ and $Y_1, Y_2,\ldots$ are i.i.d. positive hyper-exponential random variables with density given by $p(z)=\ind_{\{z>0\}}\eta\et^{-\eta z}$,
where $\eta=1$ is the intensity of the exponential distribution. 
We also let $r=0.05$, $q=1$ and $K=10$.
Using this data we can compute that  
% \[\bar{u}>0, \quad \psi(1)=0.02, \quad \tilde{y}=2.7870, \quad y_m=3.7383\]
$\bar{u}=0.1665$, $\psi(1)=0.02$, $\tilde{y}=2.7870$ and $y_m=3.7383$, which correspond to Theorem \ref{thm:ult2}. 
\begin{figure}
\centering
\subfloat[$y<\tilde{y}$]{{\includegraphics[width=140pt]{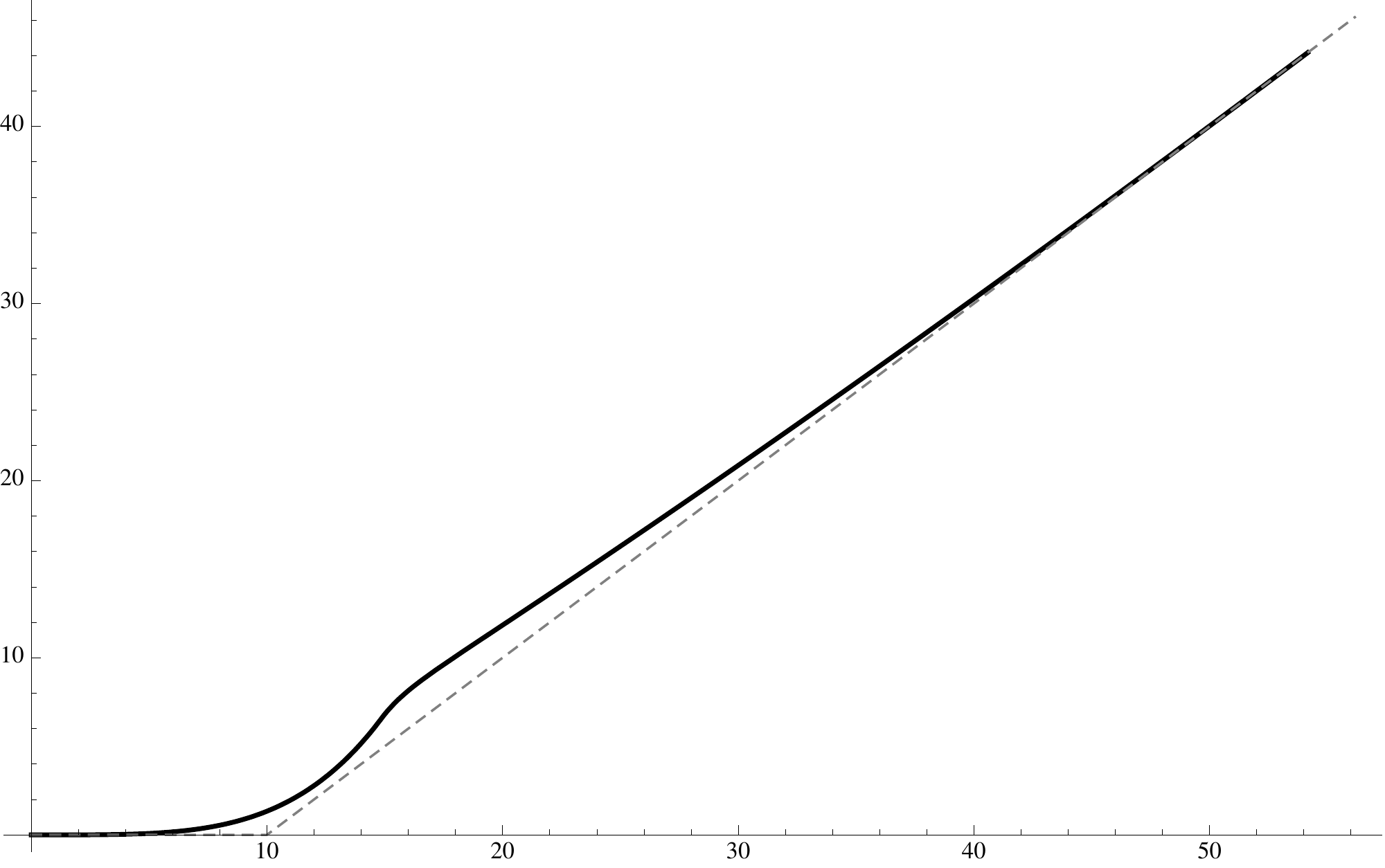}}}\quad\subfloat[$y=\tilde{y}$]{{\includegraphics[width=140pt]{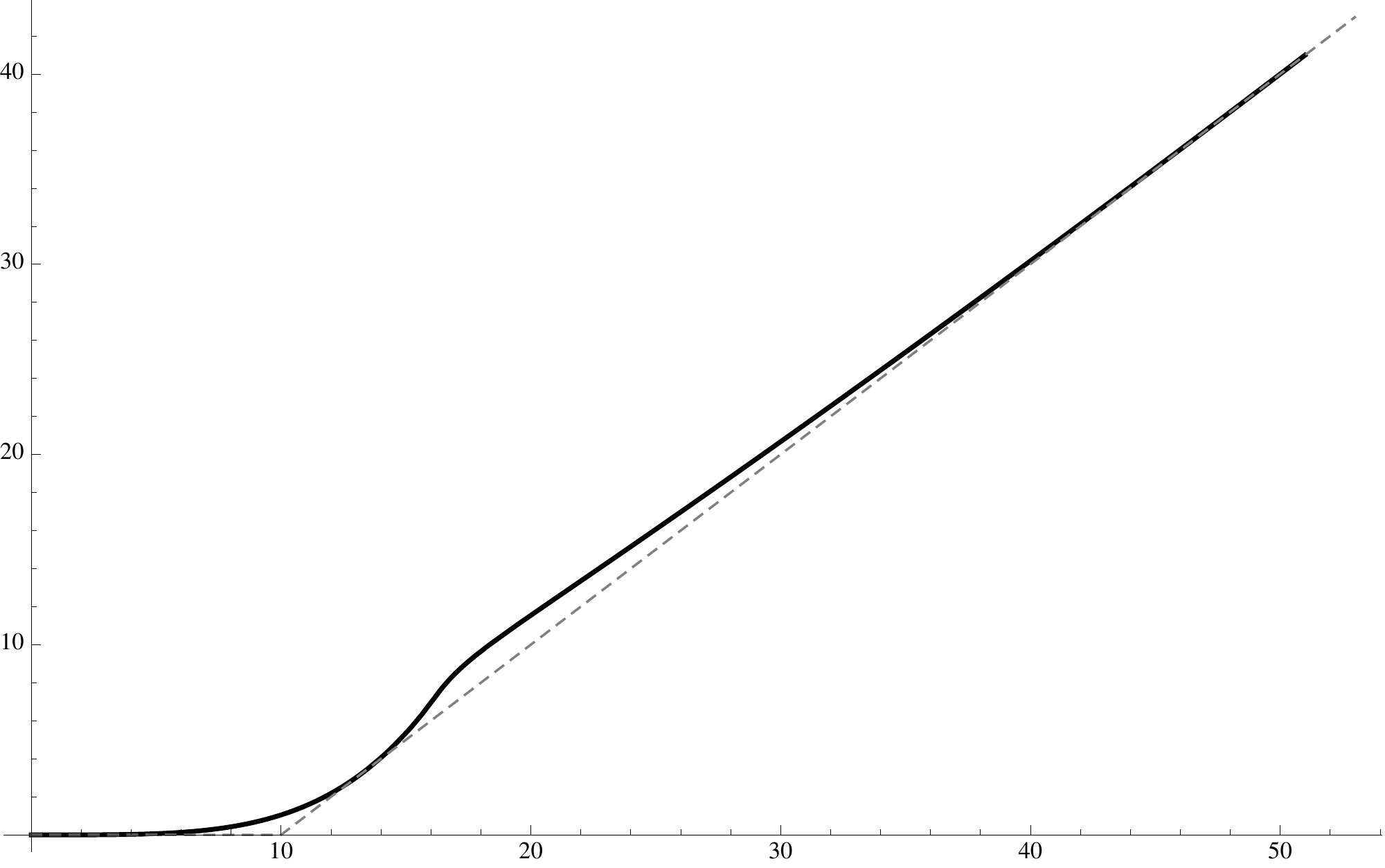}}}\quad
\subfloat[$y\in(\tilde{y},y_m)$]{{\includegraphics[width=140pt]{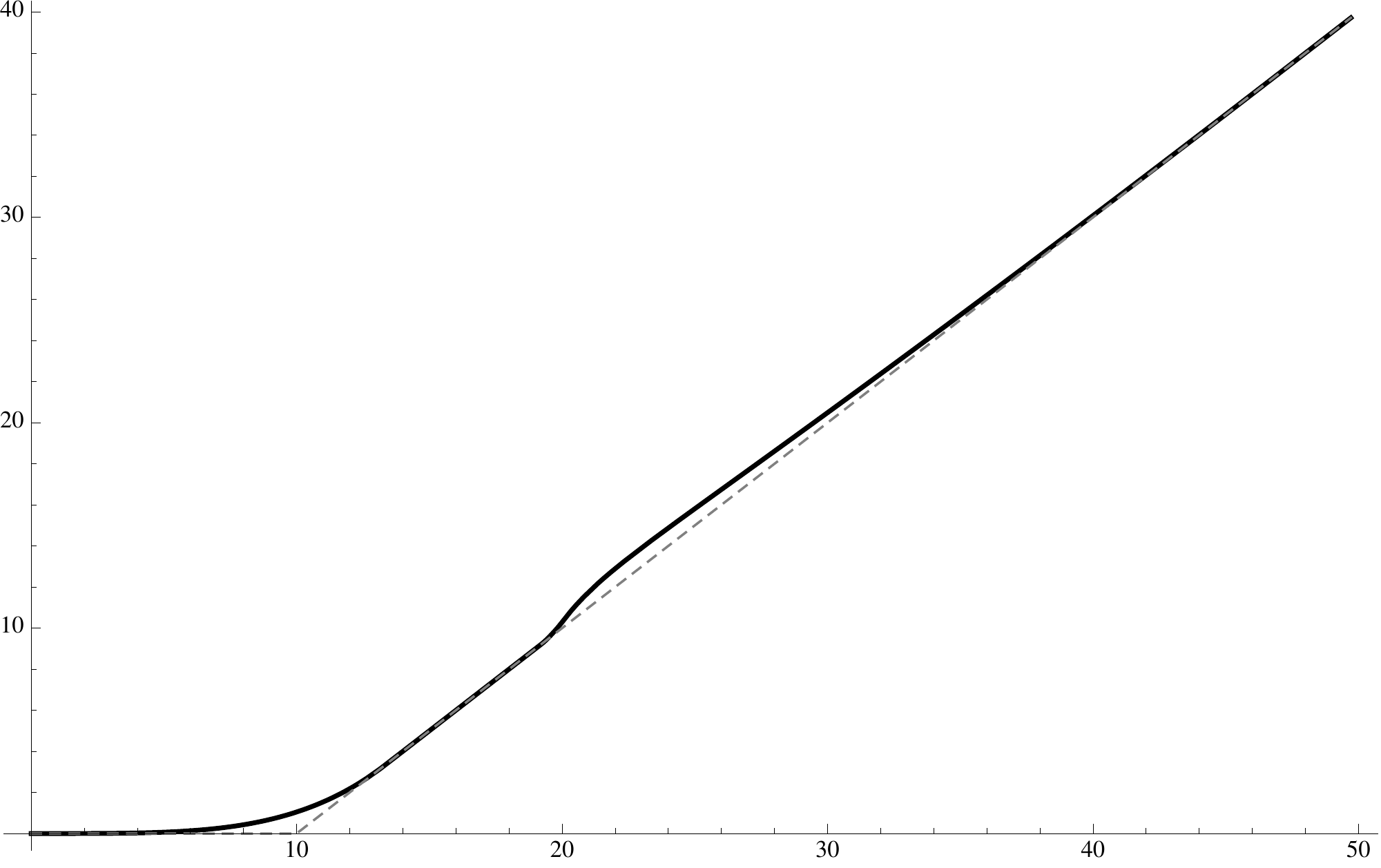}}}
\caption{{\it \underline{Illustrations of $v(x;y)$.}} 
In all panels, the horizontal axis indicates the underlying price level $\et^x$ (in the absolute price scale) and the vertical axis indicates the value; 
Black solid line: the value function $v(x;y)$; 
gray dashed line: the intrinsic value function $(\et^x-K)^+$.
Panel (a): 
for $y=2.7<\tilde{y}=2.7870$, % in Section \ref{sec:gbm},
the optimal stopping region is $\exp(\mc{S}^y)=[54.1636,\infty)$. 
Panel (b): 
for $y=\tilde{y}=2.7870$, % in Section \ref{sec:gbm}, 
the optimal stopping region is  
$\exp(\mc{S}^y)=\{13.3081\}\cup[50.9883,\infty)$. 
Panel (c): 
$y=3\in(\tilde{y},y_m)\equiv (2.7870, 3.7383)$, the optimal stopping region is 
$\exp(\mc{S}^y)=[13.3081,19.1801]\cup[47.7146,\infty)$.}
\label{fig1}
\end{figure}

We thus consider three values for $y$ representing the cases: $(i)$ $y<\tilde{y}$; $(ii)$ $y=\tilde{y}$; $(iii)$ $y\in(\tilde{y},y_m)$, which are illustrated in three panels in Figure \ref{fig1}. 
In panels (a) and (b) of Figure \ref{fig1},  we obtain the optimal threshold $z^\star(y)$ through its definition \eqref{z*}, while 
in panel (c) of Figure \ref{fig1}, we numerically solve for the optimal thresholds $a^\star(y)$ and $b^\star(y)$ through their definition \eqref{a*b*} in Proposition \ref{thm:pair}. 
% In particular, we find the minimum $a$-value such that $\inf_{x\in(y,z^\star(y)]}\Delta(x,a;y)=0$. 
From these plots, it is seen that the value functions are no longer convex in the absolute price scale. In addition, the optimal exercising region $\exp(\mc{S}^y)$ indeed ``grows'' with increasing $y$-values (as indicated in Proposition \ref{prop:compare1}$(i)$), there is branching (as indicated in Theorem \ref{thm:ult2}) and we can compare it in each of the above cases with $\exp(\mc{S}^{\pm \infty})$, given by $[\underline{K},\infty)=[13.3081,\infty)$ and $[\overline{K},\infty)=[77.7536,\infty)$.

\section{Conclusions}\label{sec:con}
We have studied an optimal stopping problem with a random time-horizon that depends on the occupation time of the underlying process. The problem is equivalent to the evaluation of a perpetual American call option with a random discount rate. To the best of our knowledge, our work is the first that addresses optimal stopping with a random discount rate under such general L\'evy models. Moreover, the results reveal a rich class of optimal stopping strategies. 
As seen in Theorem \ref{thm:ult1} and Theorem \ref{thm:ult2}$(i)-(ii)$, up-crossing strategies may still be optimal under certain circumstances. However, in some cases as in Theorem \ref{thm:ult2}$(iii)-(iv)$, both the optimal stopping region and the continuation region can have two disconnected components, and the optimal exercising strategy can be two-sided and may not be a one-shot scheme if overshoot occurs; which are both interesting new features. 
Lastly, because of the random discount factor, there are non-trivial optimal exercising strategies in the martingale case, as opposed to the standard perpetual American call option \cite{mordecki2002}. Precisely, as seen in Theorem \ref{thm:ult3}, the optimal stopping region is always connected, but the one-sided optimal exercising strategy can be either an up-crossing or a down-crossing exit, with the latter even possibly not being a one-shot scheme, which are all surprising results.

In order to characterize the optimal exercising thresholds in Theorem \ref{thm:ult2}$(iv)$ and Theorem \ref{thm:ult3}, we obtain the joint distribution of the discounting factor and the underlying process at the first exit time, when the discount rate is random (see Proposition %\ref{prop:potential}
\ref{prop:V2value}). This novel result can be applied to solve optimal stopping problems with alternative payoffs under that random discount rate. These could be natural directions for future research.

\appendix

\section{Proofs}\label{sec:prof}

\begin{proof}[Proof of Lemma \ref{lem:equivalent}]
By Proposition \ref{prop:compare1}$(i)$, the continuation region must include $(-\infty,\underline{k})$. Thus it remains to study the possibility of other parts of the continuation region in $[\underline{k},\infty)$. To that end, 
we first notice from Proposition \ref{prop:compare1}$(iii)$ that there cannot exist any component of continuation region that lies strictly in the interior of the set $[\underline{k},y]$, where the property  $(\mc{L}-(r+q\ind_{\{x<y\}}))\underline{v}(x)<0$ holds due to the calculation \eqref{eq:supervl}.  
Therefore, the only two possibilities are either for the whole set $[\underline{k},y]$ to be part of the continuation region, or a subset $(k_0,y]$ of it for some $k_0\in(\underline{k},y)$. 

Hence, in what follows we let $(a,b)$ be a maximal component of the continuation region, such that $-\infty\leq a < y < b < \infty$ and $b>\underline{k}$ (if there is no such $b$, then in light of Proposition \ref{prop:compare1}$(i)$, we must have $[\underline{k},\infty)\subset\mc{S}^y$). %, or $[a,\infty)\subset(\mc{S}^y)^c$). 
Then it holds that 
\be
(\mc{L}-(r+q\ind_{\{x<y\}}))\underline{v}(x)|_{x=b}\le 0, \label{eq:monot}
\ee
for otherwise, it will not be beneficial (by It\^o-L\'evy lemma) to wait until $X$ reaches a level where the above inequality 
does not hold. 
At this point, we notice that (using %\eqref{eq:56} and 
\eqref{eq:57})
\be
(\mc{L}-(r+q\ind_{\{x<y\}}))\underline{v}(x)=\chi(x)-q\ind_{\{x<y\}}\underline{v}(x), \quad\forall x\in(\underline{k},\infty).
\label{eq:monot2}
\ee
Using the monotonicity of $f(\cdot)$ (and hence $\chi(\cdot)$) from Lemma \ref{lem:ffun00},  
we can conclude that $(\mc{L}-(r+q\ind_{\{x<y\}}))\underline{v}(x)$ is strictly decreasing for all $x\in(\underline{k},\infty)\backslash\{y\}$.
Combining this with \eqref{eq:monot}, we know that 
\[
(\mc{L}-(r+q\ind_{\{x<y\}}))\underline{v}(x)<0,\quad\forall x>b.
\]

If Hypothesis \ref{1stop} holds, then there cannot be any continuation region in $(b,\infty)$, which means that $[b,\infty)$ is the only component of stopping region that lies on the right of $y$. 
If Hypothesis \ref{1stop} does not hold, then from the above discussion we know that, there exists a component of continuation region that lies on the right of $b$. Proposition \ref{prop:compare1}$(i)$ implies in this case that there are two components of stopping region on the right of $y$.
\end{proof}

\vspace{3pt}
\begin{prop}\label{prop:1stop}
Suppose the value function $v(\cdot;y)$ is sufficiently smooth, then Hypothesis \ref{1stop} holds if the jump tail measure,  $\overline{\Pi}(x)\equiv\Pi(-\infty,-x)$ for $x>0$, has a monotone density, i.e. for all $x>0$, $\bar{\Pi}(x)=\int_{(-\infty,-x)}\pi(y)\diff y$ and $\pi(-x)$ is non-increasing for $x\in(0,\infty)$.
\end{prop}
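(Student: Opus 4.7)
By Lemma \ref{lem:equivalent}, Hypothesis \ref{1stop} holds if and only if $\mc{S}^y$ has at most one component in $(y,\infty)$. The plan is to argue by contradiction: suppose there are two such components separated by an open continuation interval $(a,b)$ with $a,b\in\mc{S}^y$ and $y<a<b<\infty$, and then derive a contradiction.

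The first step is to localize $(a,b)\subset(y_m,\infty)$. Since Hypothesis \ref{1stop} must fail on $(a,b)$, this component lies inside the set $\{x>\underline{k}:(\mc{L}-r-q\ind_{\{x<y\}})\underline{v}(x)<0\}$, which by the strict monotonicity of $x\mapsto(\mc{L}-r-q\ind_{\{x<y\}})\underline{v}(x)$ on $(\underline{k},\infty)\setminus\{y\}$ (established in the proof of Lemma \ref{lem:equivalent} via Lemma \ref{lem:ffun00}) equals $(\underline{k},y)\cup(y_m,\infty)$. Proposition \ref{prop:compare1}$(iii)$ rules out the possibility $(a,b)\subset(\underline{k},y)$, so $(a,b)\subset(y_m,\infty)$, and therefore $\chi(x)<0$ on $[a,b]$. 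Next, I would introduce $\varphi(x):=v(x;y)-\underline{v}(x)$, which is nonnegative on $\R$ by Proposition \ref{prop:compare1}$(i)$, vanishes on $\mc{S}^y\cap[\underline{k},\infty)$, and is strictly positive on $(a,b)$. By the assumed smoothness of $v(\cdot;y)$ combined with the smooth-fit conditions at $a,b$, we have $\varphi'(a)=\varphi'(b)=0$; and on $(a,b)$, $v$ solves $(\mc{L}-r)v=0$ (since $x>y$ in continuation), so that $(\mc{L}-r)\varphi=-\chi>0$ there. Letting $x_0\in(a,b)$ be an interior maximizer of $\varphi|_{[a,b]}$, the relations $\varphi'(x_0)=0$ and $\varphi''(x_0)\le 0$ together with $(\mc{L}-r)\varphi(x_0)=-\chi(x_0)>0$ yield the strict lower bound
\begin{equation*}
\int_{(-\infty,0)}\bigl(\varphi(x_0+z)-\varphi(x_0)\bigr)\,\Pi(\mathrm{d}z)\;>\;r\,\varphi(x_0)\;>\;0.
\end{equation*}

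The final and main step is to show that the monotone density assumption forces this jump integral to be nonpositive, producing the contradiction. Rewriting it as $\int_0^\infty(\varphi(x_0-s)-\varphi(x_0))\pi(-s)\,\mathrm{d}s$, the non-increasing weight $\pi(-s)$ places more mass on small $s$, where $\varphi(x_0-s)\le\varphi(x_0)$ by the local-maximum property at $x_0$; any positive contribution can come only from large $s$, i.e. from jumps landing in other continuation components of $(\mc{S}^y)^c$ strictly below $a$. The plan is to integrate by parts (after truncating at $|z|\le\epsilon$ to handle the possible singularity of $\pi$ at zero and sending $\epsilon\downarrow 0$) to rewrite the jump integral as $-\int_{-\infty}^{x_0}\varphi'(u)\bar{\Pi}(x_0-u)\,\mathrm{d}u$, using that $\varphi(u)\to 0$ and $\bar{\Pi}(x_0-u)\to 0$ as $u\to-\infty$. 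One then exploits the sign structure of $\varphi'$: nonnegative on $(a,x_0]$; identically zero on the stopping pieces of $\mc{S}^y$ below $a$; and of one sign locally on each remaining continuation component, which itself carries a single ``bump'' of $\varphi$ whose integrated contribution can be controlled via the Dynkin representation $\varphi(x)=\ex_x[\int_0^{\tau^y_\star}\et^{-A_s^y}(\mc{L}-r-q\ind_{\{X_s<y\}})\underline{v}(X_s)\,\mathrm{d}s]$. Combined with the monotone growth of $u\mapsto\bar{\Pi}(x_0-u)$ enforced by the monotone density, these pieces are expected to bound the jump integral above by $r\varphi(x_0)$, contradicting the strict lower bound.

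The main obstacle will be this last quantitative estimate: the monotone density must be used not merely qualitatively but sharply enough to close the gap between $r\varphi(x_0)$ and the jump integral, and this demands careful control of the contributions from each continuation component of $\mc{S}^y$ below $a$. The truncation-and-limit argument together with a geometric analysis of $\mc{S}^y$ (each offending component lives in a bounded window via Proposition \ref{prop:compare1}) is expected to furnish the needed bounds and thereby complete the proof.
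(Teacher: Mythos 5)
Your proposal contains a genuine gap at its decisive step, and the route you choose is structurally harder than the one the paper takes. You set up the contradiction at an interior maximum $x_0$ of $\varphi=v-\underline{v}$ on the gap $(a,b)$ between two putative stopping components above $y$, obtaining $\int_{(-\infty,0)}(\varphi(x_0+z)-\varphi(x_0))\Pi(\diff z)>r\varphi(x_0)$, and then you need to show the monotone density forces this integral to be at most $r\varphi(x_0)$. But in your configuration $\varphi$ is \emph{not} supported only on $(a,b)$: there is a lower continuation component (straddling $y$) on which $\varphi>0$ and possibly $\varphi>\varphi(x_0)$, and jumps from $x_0$ landing there contribute a positive term $\int\varphi(u)\pi(u-x_0)\diff u$ to the jump integral that has no a priori relation to $r\varphi(x_0)-\chi(x_0)$. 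Monotonicity of $\pi(-\cdot)$ only compares the weights assigned to near versus far jumps; it gives no absolute bound on this contribution, and your appeal to a Dynkin representation to ``control'' it is not carried out. You yourself flag this as the main obstacle, so the argument is not closed.

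The paper avoids this difficulty entirely by working at the \emph{boundary} of the \emph{lowest} continuation component $(a,b)$ with positive time value, for which $\Delta(\cdot;y)=v-\underline{v}$ vanishes identically on $(-\infty,a]$ (this uses Proposition \ref{prop:compare1}$(iii)$). It then forms the candidate $\tilde v=\underline{v}+\Delta\ind_{(a,b)}$, lets $x\uparrow b$ in $(\mc{L}-r-q\ind_{\{x<y\}})v=0$, and uses smooth fit together with convexity of $\Delta$ at $b$ to get $0\ge\chi(b)+\int_{(a,b)}\Delta(w;y)\pi(w-b)\diff w$. The monotone density is then used for a one-sided comparison of the \emph{same} integral kernel at $x>b$ against its value at $b$ (since $\pi(w-x)\le\pi(w-b)$ for $x>b$, $w<b$), which combined with the monotonicity of $\chi$ yields $(\mc{L}-r)\tilde v(x)\le0$ for all $x>b$. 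Hence $\tilde v$ satisfies the variational inequalities of Lemma \ref{verif}, equals $v$, and $[b,\infty)\subset\mc{S}^y$ — i.e.\ no second component can appear. If you want to salvage your approach, you would have to either reduce to the paper's configuration (first show the lowest positive bump of $\varphi$ already precludes any stopping below it contributing) or find a genuinely new estimate on jumps into the lower continuation component; as written, the contradiction is not established.
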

\begin{proof}
By Proposition \ref{prop:compare1}$(i)$, the continuation region must include $(-\infty,\underline{k})$. Thus it remains to study the possibility for other continuation regions in $(\underline{k},\infty)$. To that end, we notice from \eqref{eq:monot2} and Lemma \ref{lem:ffun00} 
that $(\mc{L}-(r+q\ind_{\{x<y\}}))\underline{v}(x)$ is strictly decreasing for all $x\in(\underline{k},\infty)\backslash\{y\}$.
By Proposition \ref{prop:compare1}$(iii)$, we know that there cannot be any component of continuation region that completely falls inside $\{x\in(\underline{k},y]: (\mc{L}-(r+q\ind_{\{x<y\}}))\underline{v}(x)<0\}$.
Therefore, we only need to verify the assertion of Hypothesis \ref{1stop} for $x\in(y\wedge\underline{k},\infty)$.

Let $(a,b)$ be the maximum component of a continuation region such that $b>y\wedge\underline{k}$ and $a\in[-\infty,b)$. 
Let $\Delta(x;y):=v(x;y)-\underline{v}(x)$, then $\Delta(x;y)>0$ for all $x\in(a,b)$. We also notice that $\Delta(x;y)=0$ for all $x\in(-\infty,a]$. This is because, if $a>-\infty$, then by Proposition \ref{prop:compare1}$(i)$ and Proposition \ref{prop:compare1}$(iii)$, we know that $a\in(\underline{k},y)$, and $[\underline{k},a]$ is the only component of the stopping region $\mc{S}^y$ that lies below $y$. Overall, $(a,b)$ is the only interval in $(-\infty,b)$ where the time value $\Delta(x;y)$ is positive. To finish the proof, we demonstrate that there is no continuation region in $(b,\infty)$.  To this end, consider the function
\[\tilde{v}(x;y):=v(x;y)\ind_{\{x\le b\}}+\underline{v}(x)\ind_{\{x>b\}}\equiv \underline{v}(x)+\Delta(x;y)\ind_{\{x\in(a,b)\}}\equiv \underline{v}(x)+\Delta(x;y)\ind_{\{x\in(-\infty,b)\}}.\] 
By construction, it holds that $\tilde{v}(x;y)\ge\underline{v}(x)\ge(\et^x-K)^+$ for all $x\in\R$, 
thus in view of Lemma \ref{verif}, we only need to show that $\tilde{v}(x;y)$ satisfies the variational inequalities \eqref{verifeq} for $\{\theta_1,\theta_2,\theta_3\}=\{\underline{k},a,b\}$, to conclude that it is the value function 
%(see e.g. \cite{OksendalSulemBook}, or  \cite[Section 3.3]{KazuEgamiAAP14} for a precise example), 
and consequently $[b,\infty)$ is a component of stopping region.  

Suppose $v(\cdot;y)$ is sufficiently smooth, then $\tilde{v}(x;y)$ is also continuously differentiable in $x$ over $\R$. Moreover, 
\be
\label{HJB0}
(\mc{L}-(r+q\ind_{\{x<y\}}))\tilde{v}(x;y)=(\mc{L}-(r+q\ind_{\{x<y\}}))v(x;y)=0,\quad\forall x\in(a,b).
\ee
Letting $x\uparrow b$ in \eqref{HJB0}, and using smooth fit of $\tilde{v}(x;y)$ at $x=b$, we have 
\be
0=\chi(b)+\half\sigma^2\p_x^2\Delta(x;y)|_{x=b}+\int_{(a-b,0)}\Delta(b+z;y)\Pi(\diff z).\label{HJB1}
\ee
If $\sigma>0$ and $\p_x^2\Delta(x;y)$ is continuous in $x$ with a finite limit as $x\uparrow b$, then because $\p_x\Delta(x;y)|_{x=b-}=0$ and $\Delta(x;y)>0$ in the small left neighborhood of $b$, we know that $\Delta(x;y)$ is convex at $x=b$ if $\sigma>0$, so \eqref{HJB1} leads to
\begin{align}
0 \ge \chi(b)+\int_{(a-b,0)}\Delta(b+z;y)\pi(z)\diff z=\chi(b)+\int_{(a,b)}\Delta(w;y)\pi(-b+w)\diff w.\label{HJB2}
\end{align}
Thanks to the monotonicity of $\chi(x)$ and $\pi(x)$, \eqref{HJB2} implies for any $x>b$, that 
\begin{align*}
0 &\ge \chi(x)+\int_{(a,b)}\Delta(w;y)\pi(-x+w)\diff w=\chi(x)+\int_{(a-x,b-x)}\Delta(x+z;y)\pi(z)\diff z\\
&= \chi(x)+\int_{(-\infty,b-x)}\Delta(x+z;y)\pi(z)\diff z=(\mc{L}-r)\tilde{v}(x;y).
\end{align*}
In summary, we know that 
%\[\max\{(\mc{L}-r-q\ind_{\{x<y\}})\tilde{v}(x;y), (\et^x-K)^+-\tilde{v}(x;y)\}=0,\quad\forall x\in\R.\]
%Hence, 
\eqref{verifeq} holds true, %for $\{\theta_1,\theta_2,\theta_3\}=\{\underline{k},a,b\}$, 
thus $v(x;y)\equiv \tilde{v}(x;y)$ and the proof is completed. 
\end{proof}

\vspace{3pt}
\begin{proof}[Proof of Lemma \ref{lem:explicit}]
For all $x>0$, using integration by parts and Lemma \ref{lem W}, we have
\begin{align}\mc{I}^{(r,q),\prime\prime}(x)=&-W^{(r),\prime}(x)+\prq\mc{I}^{(r,q),\prime}(x),\label{eqII1}\\
\mc{I}^{(r,q),\prime}(x)=&-W^{(r)}(x)+\prq\mc{I}^{(r,q)}(x).\label{eqII2}\end{align}
Using again Lemma \ref{lem W} and $\mc{I}^{(r,q)}(0)=\frac{1}{q}$, we know that,
\begin{align*}
\mc{I}^{(r,q),\prime\prime}(0+)-\mc{I}^{(r,q),\prime}(0+)=&\frac{1}{q}\left((\prq-1)(\prq-qW^{(r)}(0))-qW^{(r),\prime}(0+)\right).
\end{align*}
Therefore, if the right hand side of the above is negative, \eqref{cond1} will not hold. In the sequel we demonstrate that, if $\mc{I}^{(r,q),\prime\prime}(0+)-\mc{I}^{(r,q),\prime}(0+)\ge0$,  then \eqref{cond1} must hold. 
To this end, we define  function $\mc{J}^{(r,q)}(x):=\et^{-\prq x}(\mc{I}^{(r,q),\prime\prime}(x)-\mc{I}^{(r,q),\prime}(x))$, and use \eqref{eq:Hinfnnn} to obtain that 
\[
\lim_{x\to\infty}\frac{\mc{J}^{(r,q)}(x)}{\et^{-\prq x}\mc{I}^{(r,q)}(x)}=\mc{H}(\infty)=\Phi(r)(\Phi(r)-1)>0,\quad\text{so}\quad \mc{J}^{(r,q)}(\infty)=\mc{H}(\infty)\lim_{x\to\infty}\et^{-\prq x}\mc{I}^{(r,q)}(x)=0,
\]
which suggests that $\mc{J}^{(r,q)}(x)>0$ for all sufficiently large $x>0$, and there must be an increasing sequence $(x_n)_{n\ge1}$ that goes to $\infty$, such that $\mc{J}^{(r,q)\prime}(x_n)<0$ (in order for $0<\mc{J}^{(r,q)}(x_n)\searrow \mc{J}^{(r,q)}(\infty)=0$).
Let us suppose that \eqref{cond1} is violated, 
then there must be some  global minimum $x_0\in(0,\infty)$, such that $\mc{J}^{(r,q),\prime}(x_0)=0$ and $\mc{J}^{(r,q)}(x_0)<0$. However, as $x$ increases, $\mc{J}^{(r,q)}(x)$ must be decreasing below 0 and then increasing over 0, and then ultimately decreasing at least over the points in the sequence $(x_n)_{n\ge1}$.  
However, from \eqref{eqII1} and \eqref{eqII2} we know that 
\begin{align}
\mc{J}^{(r,q),\prime}(x)=\et^{-\prq x}\left(W^{(r),\prime}(x)-W^{(r),\prime\prime}(x)\right).
\end{align}
And by Lemma \ref{lem:CM} below, the only possible sign change for $\mc{J}^{(r,q),\prime}(x)$ is from being positive to being negative.  Therefore, the above assumed $x_0$ does not exist, and \eqref{cond1} must hold.

Finally, under the assumptions about the tail jump distribution, it is clear that we have a monotone jump density. So by Proposition \ref{prop:1stop}, we know that Hypothesis \ref{1stop} holds, and results in Theorem \ref{thm:ult1} and Theorem \ref{thm:ult2} hold.
This completes the proof. 
\end{proof}

\vspace{3pt}
\begin{lem}\label{lem:CM}
Suppose $W^{(r)}(\cdot)\in C^2(0,\infty)$ and the tail jump measure of $X$, denoted by $\overline{\Pi}(x):=\Pi(-\infty,-x)$ for $x>0$, either has a completely monotone density or is  log-convex. Then, as $x$ increases from 0 to $\infty$, the only possible sign change of function $W^{(r),\prime}(x)-W^{(r),\prime\prime}(x)$ is from being positive to negative, which can happen at most once. 
\end{lem}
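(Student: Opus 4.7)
I will reformulate the sign behavior of $W^{(r)\prime}(x)-W^{(r)\prime\prime}(x)$ in terms of the monotonicity of a single auxiliary function $h$, and then invoke known structural properties of the scale function under each of the two hypotheses on $\overline{\Pi}$ to show that $h$ is convex. Convexity of $h$ on $(0,\infty)$ will be enough to force the sign of $W^{(r)\prime}-W^{(r)\prime\prime}$ to change at most once, and in the right direction.

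Set
\[h(x):=\et^{-x}W^{(r)\prime}(x),\quad x>0.\]
A direct differentiation gives
\[h'(x)=\et^{-x}\bigl(W^{(r)\prime\prime}(x)-W^{(r)\prime}(x)\bigr)=-\et^{-x}\bigl(W^{(r)\prime}(x)-W^{(r)\prime\prime}(x)\bigr),\]
so the sign of $W^{(r)\prime}-W^{(r)\prime\prime}$ is the opposite of the sign of $h'$. In particular, if $h$ is convex on $(0,\infty)$ then $h'$ is non-decreasing, hence changes sign at most once and only from negative to positive; translating back, $W^{(r)\prime}-W^{(r)\prime\prime}$ then changes sign at most once and only from positive to negative, which is exactly the conclusion of the lemma. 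It therefore suffices to prove convexity of $h$.

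\emph{Case 1: $\overline{\Pi}$ log-convex.} Under this hypothesis, a known result in the scale function literature yields that $W^{(r)\prime}$ is itself log-convex on $(0,\infty)$. Consequently, $\log h(x)=-x+\log W^{(r)\prime}(x)$ is convex as the sum of an affine function and a convex function, so $h$ is log-convex, hence convex.

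\emph{Case 2: $\Pi$ has completely monotone density.} Under this assumption, a Loeffen-type representation provides a finite non-negative measure $\mu_r$ on $(0,\infty)$ with
\[W^{(r)}(x)=\frac{\et^{\Phi(r)x}}{\psi'(\Phi(r))}-\int_{(0,\infty)}\et^{-xt}\,\mu_r(\diff t),\quad x>0.\]
Differentiating in $x$ and multiplying by $\et^{-x}$ gives
\[h(x)=\frac{\Phi(r)}{\psi'(\Phi(r))}\,\et^{(\Phi(r)-1)x}+\int_{(0,\infty)}t\,\et^{-(1+t)x}\,\mu_r(\diff t).\]
Both summands are non-negative superpositions of exponentials $x\mapsto\et^{\alpha x}$ (with $\alpha=\Phi(r)-1$ in the first term and $\alpha=-(1+t)$ in the integrand). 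Every such exponential is a convex function of $x$, and convexity is preserved by non-negative linear combinations and integrals, so $h$ is convex in this case too.

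\textbf{Main obstacle.} The nontrivial input in both cases is the transfer of the hypothesis on $\overline{\Pi}$ to a convexity/completely-monotone structural property of $W^{(r)\prime}$, namely the log-convexity of $W^{(r)\prime}$ under log-convex $\overline{\Pi}$ and the Loeffen-type decomposition of $W^{(r)}$ under completely monotone density of $\Pi$. These are standard but non-trivial results in the scale-function literature, which I would cite rather than re-derive; once they are available, the convexity of $h$ and thus the claimed unidirectional single sign change of $W^{(r)\prime}-W^{(r)\prime\prime}$ are immediate.
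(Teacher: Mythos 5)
Your proof is correct. The reduction to convexity of $h(x)=\et^{-x}W^{(r)\prime}(x)$ is sound: $W^{(r)\prime}(x)-W^{(r)\prime\prime}(x)=-\et^{x}h'(x)$, and a non-decreasing $h'$ can only pass from negative to positive, which translates exactly into the claimed single sign change from positive to negative. In the log-convex case your argument is essentially the paper's in different clothing: both rest on Kuznetsov--Kyprianou--Rivero's Theorem 3.5 (log-convexity of $W^{(r)\prime}$); the paper factors $W^{(r)\prime}(x)\bigl(1-l'(x)\bigr)$ with $l'=W^{(r)\prime\prime}/W^{(r)\prime}$ increasing, while you add the affine function $-x$ to $\log W^{(r)\prime}$ --- same content. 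In the completely monotone case, however, you take a genuinely different and cleaner route. The paper uses the Bernstein-type form $W^{(r)}(x)=\et^{\Phi(r)x}\bigl(a+bx+\int_{(0,\infty)}(1-\et^{-xt})\xi(\diff t)\bigr)$, computes $W^{(r)\prime\prime}-W^{(r)\prime\prime\prime}$, and runs a fairly delicate sign analysis of the kernel $H(t,x)=t^{2}(1-t)\et^{tx}$ (locating the roots of a quadratic in $t$) to conclude that $W^{(r)\prime}-W^{(r)\prime\prime}$ is strictly decreasing. Your use of the Loeffen-type mixture representation exhibits $h$ directly as a non-negative superposition of exponentials, so convexity is immediate; this buys a shorter, more conceptual argument and a single unifying statement ($h$ convex) for both hypotheses, at the price of yielding slightly less than the paper (no strict monotonicity of $W^{(r)\prime}-W^{(r)\prime\prime}$, which the lemma does not need anyway). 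Two minor points to make explicit in a polished write-up: the justification for differentiating under the integral in the Loeffen representation (routine domination of $t^{k}\et^{-xt}$ on compact subsets of $(0,\infty)$), and the precise reference for that representation (Loeffen 2008, or the form derivable from Theorem 3.4 of Kuznetsov--Kyprianou--Rivero, which is what the paper itself invokes).
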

\begin{proof}
If $\overline{\Pi}(x)$ has a complete monotone density, then we know from the proof of  \cite[Theorem 3.4]{Kuznetsov_2011} that, there exist constants $a,b\ge0$ and a measure $\xi$ concentrated on $(0,\infty)$, satisfying $\int_{(0,\infty)}(1\wedge t)\,\xi(\diff t)<\infty$, such that
\[W^{(r)}(x)=\et^{\Phi(r)x}\bigg(a+bx+\int_{(0,\infty)}(1-\et^{-xt})\,\xi(\diff t)\bigg).\]
It is then straightforward to show that
\begin{align}
W^{(r)\prime\prime}(x)-W^{(r)\prime\prime\prime}(x)&=u''(x)-u'''(x)+\int_{(0,\infty)}\big(H(\Phi(r),x)-H(\Phi(r)-t,x)\big)\,\xi(\diff t)\nn
\end{align}
where $u(x)=(a+bx)\et^{\Phi(r)x}$ and 
\[H(t,x)=t^2\,(1-t)\,\et^{tx}, \quad \forall x>0 \quad \forall  t\in\mathbb{R}
% .\] Notice that \[
\quad \Rightarrow \quad 
\partial_t H(t,x)%=\et^{yx}\left(2y(1-y)-y^2+xy^2(1-y)\right)
=t\,\et^{tx}\,G(t,x)\,,\]
where for each fixed $x>0$, $G(\cdot,x)$ is a quadratic function given by 
%of $t$: 
\[G(t,x)=-xt^2+(x-3)t+2.\]
Since $G(0,x)=2$ and $G(1,x)=-1<0$, we know that $G(t,x)=0$ has two roots $t_1$ and $t_2$, satisfying $t_1<0<t_2<1$. Overall, we see that the function $H(\cdot,x)$ is first strictly increasing over $(-\infty,t_1)$, strictly decreasing over $(t_1,0)$, then again strictly increasing over $(0,t_2)$, and finally strictly decreasing over $(t_2,\infty)$. 
Moreover, it is easily seen that  
\[H(-\infty,x)=H(0,x)=H(1,x)=0\,,\]
which therefore implies (since $\Phi(r)>1$) that 
\[H(t,x)>H(\Phi(r),x), \quad \forall t<\Phi(r) 
\quad \Rightarrow \quad 
H(\Phi(r),x)-H(\Phi(r)-t,x)<0\,, \quad \forall t>0. 
\]
Moreover, 
\begin{align*}
u''(x)-u'''(x)=%&\et^{\Phi(r)x}[2b\Phi(r)+(a+bx)\Phi(r)^2-3b\Phi(r)^2-(a+bx)\Phi(r)^3]\nn\\
\et^{\Phi(r)x}\Phi(r)[b(2-3\Phi(r))+(a+bx)\Phi(r)(1-\Phi(r))]<0.\nn
\end{align*}
Hence, we see that the function $x\mapsto W^{(r)\prime}(x)-W^{(r)\prime\prime}(x)$ is strictly decreasing over $(0,\infty)$. So this function is either  does not change sign, or changes sign from being positive to being negative exactly once.

If instead $\overline{\Pi}(x)$ is log-convex, we know from  \cite[Theorem 3.5]{Kuznetsov_2011} that the function $l(x):=\log W^{(r)\prime}(x)$ is  convex over $(0,\infty)$. It follows that $l'(x)=W^{(r)\prime\prime}(x)/W^{(r)\prime}(x)$ is increasing. Hence the sign of $W^{(r)\prime}(x)[1-l'(x)]$ will either have the same sign throughout $(0,\infty)$ or change from being positive to negative, as $x$ increases over $(0,\infty)$. 
\end{proof}

\vspace{3pt}
\begin{proof}[Proof of Proposition \ref{prop:compare1}]
$\, $

\ul{\it Part $(i)$:} 
For any fixed $x,y\in\R$, using the definition (\ref{eq:underlinev}) of $\underline{v}(x) \equiv v(x;\infty)$ and assuming that $\underline{\tau}$ is an optimal stopping time (e.g. $\underline{\tau}=T_{\underline{k}}^+$, if $r>\psi(1)$) for the problem (\ref{eq:underlinev}), we have that
\begin{align}
\underline{v}(x) &= 
\sup_{\tau\in\timset_0} \ex_x[\exp(-(r+q) \tau)(\exp(X_{\tau})-K)^+\ind_{\{\tau<\infty\}}]   
= \ex_x[\exp(-(r+q)\underline{\tau})(\exp(X_{\underline{\tau}})-K)^+\ind_{\{\underline{\tau}<\infty\}}] \nn\\
&\le \ex_x[\exp(-A_{\underline{\tau}}^y)(\exp(X_{\underline{\tau}})-K)^+\ind_{\{\underline{\tau}<\infty\}}]
\le \sup_{\tau\in\timset_0}\ex_x[\exp(-A_\tau^y)(\exp(X_\tau)-K)^+\ind_{\{\tau<\infty\}}]=v(x;y). \nn
\end{align}
The other inequalities can be proved similarly. 

\ul{\it Part $(ii)$:}

{\bf Step 1:}
We let $v(x;y,K)\equiv v(x;y)=\sup_{\tau\in\timset}\ex_x[\et^{-A_\tau^y}(\et^{X_\tau}-K)^+\ind_{\{\tau<\infty\}}]$ be the value function. Then for any $0<K_1<K_2<\infty$ and any $\tau\in\timset$ and $x\in\R$, we have 
\[\ex_x[\et^{-A_\tau^y}(\et^{X_\tau}-K_1)^+\ind_{\{\tau<\infty\}}]\ge \ex_x[\et^{-A_\tau^y}(\et^{X_\tau}-K_2)^+\ind_{\{\tau<\infty\}}].\]
Using the above and Proposition \ref{prop:compare1}  we know that the mapping $(y,K)\mapsto v(x;y,K)$ is jointly decreasing on $\R\times\R_+$. 
Moreover, for any $\tau\in\timset$ and any $x\in\R$, we have
\be\ex_x[\et^{-A_\tau^y}(\et^{X_\tau}-K)^+\ind_{\{\tau<\infty\}}]=\et^x \,\ex_0[\exp(-A_\tau^{y-x})(\exp(X_{\tau})-K\et^{-x})^+\ind_{\{\tau<\infty\}}],\nn\ee
which implies that
\[v(x;y,K)=\et^x v(0;y-x,K\et^{-x}).\]
Using this expression along with the non-negativity and monotonicity of $v$ in parameters $y$ and $K$, we see that the mapping $x\mapsto v(x;y)$ is strictly increasing (due to the exponential factor).  

{\bf Step 2:}
Taking into account the above expressions and Proposition \ref{prop:compare1}$(i)$, we can rewrite problem \eqref{eq:problem0} in the form:
\begin{align}
v(x;y)=&\sup_{\tau\in\timset_0}\ex_x[\exp(-A_{\tau\wedge T_{\overline{k}}^+}^y)(\exp(X_{\tau\wedge T_{\overline{k}}^+})-K)^+\ind_{\{\tau<\infty\}}]\nn\\
=&\sup_{\tau\in\timset_0}\ex_0[\exp(-A_{\tau\wedge T_{\overline{k}-x}^+}^{y-x})(\exp(x+X_{\tau\wedge T_{\overline{k}-x}^+})-K)^+\ind_{\{\tau<\infty\}}]\label{eq:A1}.%\\
%=&\sup_{\tau\in\timset_0}\ex_0[\exp(-A_{\tau\wedge T_{\overline{k}-x}^+}^{y-x})(\exp(x+X_{\tau\wedge T_{\overline{k}}^+})-K)^+\ind_{\{\tau<\infty\}}],\nn
\end{align}
We now consider $x<x'$ and express similarly $v(x';y)$ as
\begin{align}
v(x';y)=&\sup_{\tau\in\timset_0}\ex_{x'}[\exp(-A_{\tau\wedge T_{\overline{k}}^+}^y)(\exp(X_{\tau\wedge T_{\overline{k}}^+})-K)^+\ind_{\{\tau<\infty\}}]\nn\\
=&\sup_{\tau\in\timset_0}\ex_0[\exp(-A_{\tau\wedge T_{\overline{k}-x'}^+}^{y-x'})(\exp(x'+X_{\tau\wedge T_{\overline{k}-x'}^+})-K)^+\ind_{\{\tau<\infty\}}]\nn\\
=&\sup_{\tau\in\timset_0}\ex_0[\exp(-A_{\tau\wedge T_{\overline{k}-x}^+}^{y-x'})(\exp(x'+X_{\tau\wedge T_{\overline{k}-x}^+})-K)^+\ind_{\{\tau<\infty\}}],\label{eq:A2}
\end{align}
where the last equality follows from the facts that $\pr_0(T_{\overline{k}-x'}^+\le T_{\overline{k}-x}^+)=1$ and that $T_{\overline{k}-x'}^+$ never occurs before the optimal stopping time $\tau^\star:=\inf\{t>0: X_t\in\mc{S}^y\}$, see Proposition \ref{prop:compare1}$(i)$.
By fixing a stopping time $\tau\in\timset_0$, we have in the event that $\{\tau<\infty\}$, that ($\pr_0$-a.s.):
\begin{align}
&\exp\big(-A_{\tau \wedge T_{\overline{k}-x}^+}^{y-x'}\big) \Big(\et^{x'+X_{\tau\wedge T_{\overline{k}-x}^+}}-K \Big)^+ 
- \exp\big(-A_{\tau\wedge T_{\overline{k}-x}^+}^{y-x}\big) \Big(\et^{x+X_{\tau\wedge T_{\overline{k}-x}^+}}-K \Big)^+ \nn\\
&= \left(\exp\big(-A_{\tau \wedge T_{\overline{k}-x}^+}^{y-x'}\big)
-\exp\big(-A_{\tau\wedge T_{\overline{k}-x}^+}^{y-x}\big)\right) 
\Big(\et^{x'+X_{\tau\wedge T_{\overline{k}-x}^+}}-K\Big)^+ \nn\\
&\quad+\exp\big(-A_{\tau \wedge T_{\overline{k}-x}^+}^{y-x}\big) 
\left(\Big(\et^{x'+X_{\tau\wedge T_{\overline{k}-x}^+}}-K\Big)^+ 
-\Big(\et^{x+X_{\tau\wedge T_{\overline{k}-x}^+}}-K\Big)^+\right)\nn\\
&\le \Big|\exp\big(-\omega_{\tau\wedge T_{\overline{k}-x}^+}^{y-x}\big)
-\exp\big(-\omega_{\tau\wedge T_{\overline{k}-x}^+}^{y-x'}\big)\Big| \, \exp\Big(x'-r(\tau\wedge T_{\overline{k}-x}^+)+X_{\tau\wedge T_{\overline{k}-x}^+}\Big) \nn\\
&\quad+ (\et^{x'}-\et^x) \exp\Big(-A_{\tau\wedge T_{\overline{k}-x}^+}^{y-x}+X_{\tau\wedge T_{\overline{k}-x}^+}\Big) \nn\\
&\le \et^{(\overline{k}-x)^+} \left(\et^{x'} \, \Big|\exp\big(-\omega_{\tau\wedge T_{\overline{k}-x}^+}^{y-x}\big)-\exp\big(-\omega_{\tau\wedge T_{\overline{k}-x}^+}^{y-x'}\big)\Big| + \et^{x'} - \et^x \right),\label{eq:vxv0}
\end{align}
where we used the inequalities $(S'-K)^+<S'$ and $(S'-K)^+-(S-K)^+\le S'-S$ for $0<S<S'$ in the first inequality, and the $\pr_0$-a.s. inequality $-r(\tau\wedge T_{\overline{k}-x}^+)+X_{\tau\wedge T_{\overline{k}-x}^+}\le(\overline{k}-x)^+$ in the last inequality.
Then, by the inequality $\sup_n(a_n+b_n)\le\sup_n a_n+\sup_n b_n$, \eqref{eq:A1}, \eqref{eq:A2} and \eqref{eq:vxv0} we have
\begin{align}
0\le& v(x';y)-v(x;y)\nn\\
%\le& \sup_{\tau\in\timset}\left(\ex_{x'}[\exp(-A_{\tau\wedge T_{\overline{k}}^+}^y)(\exp(X_{\tau\wedge T_{\overline{k}}^+})-K)]-\ex_x[\exp(-A_{\tau\wedge T_{\overline{k}}^+}^y)(\exp(X_{\tau\wedge T_{\overline{k}}^+})-K)]\right)\nn\\
%=&\sup_{\tau\in\timset}\ex_0[\et^{-r({\tau\wedge T_{\overline{k}}^+})}(\exp(-\omega_{{\tau\wedge T_{\overline{k}}^+}}^{y-x})(\exp(x+X_{{\tau\wedge T_{\overline{k}}^+}})-K)-\exp(-\omega_{\tau\wedge T_{\overline{k}}^+}^{y-x'})(\exp(x'+X_{{\tau\wedge T_{\overline{k}}^+}})-K))]\nn\\
\le&\et^{(\overline{k}-x)^+}\left(\et^{x'}\sup_{\tau\in\timset_0}\ex_0\Big[\big|\exp(-\omega_{\tau\wedge T_{\overline{k}-x}^+}^{y-x})-\exp(-\omega_{\tau\wedge T_{\overline{k}-x}^+}^{y-x'})\big| \, \ind_{\{\tau<\infty\}}\Big] + \et^{x'}-\et^x \right).
%\sup_{\tau\in\timset}\ex_0[\et^{-r(\tau\wedge T_{\overline{k}-x}^+)+X_{\tau\wedge T_{\overline{k}-x}^+}}\ind_{\{\tau<\infty\}}].
\label{eqcont}
\end{align}
But for any $\tau\in\timset_0$, we have $\pr_0$-a.s. that
\begin{align}
\big|\exp(-\omega_{\tau\wedge T_{\overline{k}-x}^+}^{y-x})-\exp(-\omega_{\tau\wedge T_{\overline{k}-x}^+}^{y-x'})\big| \, \ind_{\{\tau<\infty\}}
&=\exp(-\omega_{\tau\wedge T_{\overline{k}-x}^+}^{y-x'})\left(1-\exp\big(-(\omega_{\tau\wedge T_{\overline{k}-x}^+}^{y-x}-\omega_{\tau\wedge T_{\overline{k}-x}^+}^{y-x'})\big)\right) \ind_{\{\tau<\infty\}}\nn\\
&\le \left(1-\exp\big(-(\omega_{\tau\wedge T_{\overline{k}-x}^+}^{y-x}-\omega_{\tau\wedge T_{\overline{k}-x}^+}^{y-x'})\big)\right) \ind_{\{\tau <\infty\}}\nn\\
% &\le \left(1-\exp\big(-(\omega_{T_{\overline{k}-x}^+}^{y-x}-\omega_{T_{\overline{k}-x}^+}^{y-x'})\big)\right) (\ind_{\{T_{\overline{k}-x}^+<\infty\}}+\ind_{\{T_{\overline{k}-x}^+=\infty\}}) \nn\\
&\le 1-\exp\big(-(\omega_{T_{\overline{k}-x}^+}^{y-x}-\omega_{T_{\overline{k}-x}^+}^{y-x'})\big) \, \ind_{\{T_{\overline{k}-x}^+<\infty\}} \le 1, \label{eq18}
\end{align}
where we used, in the last inequality, the fact that $$t\mapsto(\omega_t^{y-x}-\omega_t^{y-x'})=q\int_{(0,t]}\ind_{\{X_s\in[y-x',y-x)\}}\diff s$$ is increasing. 
Therefore, we have from \eqref{eq18} that
\begin{align}
0\le\Delta(x,x')&:=\ex_0\Big[\big|\exp(-\omega_{\tau\wedge T_{\overline{k}-x}^+}^{y-x})-\exp(-\omega_{\tau\wedge T_{\overline{k}-x}^+}^{y-x'})\big| \, \ind_{\{\tau<\infty\}}\Big] \nn\\
&\le 1-\ex_0\Big[\exp\big(-(\omega_{T_{\overline{k}-x}^+}^{y-x}-\omega_{T_{\overline{k}-x}^+}^{y-x'})\big) \, \ind_{\{T_{\overline{k}-x}^+<\infty\}}\Big].\nn
\end{align}
Moreover, using  \cite[Corollary 2(i) and (11)]{OccupationInterval}, it is clear by dominated convergence theorem that 
\[\lim_{x'\downarrow x}\exp(-(\omega_{T_{\overline{k}-x}^+}^{y-x}-\omega_{T_{\overline{k}-x}^+}^{y-x'}))\ind_{\{T_{\overline{k}-x}^+<\infty\}}=1\;,\;\; \pr_0-\text{a.s.} 
% \quad \Rightarrow \quad \lim_{x'\downarrow x}\Delta(x,x')=0.
\]
and thus $\lim_{x'\downarrow x}\Delta(x,x')=0$.
%Similarly, we have
%\[0\le\sup_{\tau\in\timset}\ex_0[\exp(-r(\tau\wedge T_{\overline{k}-x}^+)+X_{\tau\wedge T_{\overline{k}-x}^+})\ind_{\{\tau<\infty\}}]\le\ex_0[\et^M]=\frac{\tilde{\Phi}(r)}{\tilde{\Phi}(r)-1}.\]
%Thus, the last line of \eqref{eqcont} also converges to 0 as $x'\downarrow x$. From this we know 
Using this result in \eqref{eqcont}, we finally see that the mapping $x\mapsto v(x;y)$ is right-continuous in $x$. The left continuity of $x\mapsto v(x;y)$ can be proved similarly.

{\bf Step 3:} To prove the continuity in $y$, we fix
 $y>y'$. For any stopping time $\tau\in\mc{T}$, we have $\pr_x$-a.s. that
\begin{align*}
&\left(\exp(-A_{\tau\wedge T_{\overline{k}}^+}^y)-\exp(-A_{\tau\wedge T_{\overline{k}}^+}^{y'})\right)\left(\et^{X_{\tau\wedge T_{\overline{k}}^+}}-K\right)^+\ind_{\{\tau<\infty\}}\nn\\
\le&\exp(-A_{\tau\wedge T_{\overline{k}}^+}^{y'})\left|\exp(-q\int_{]0,\tau\wedge T_{\overline{k}}^+]}\ind_{\{X_s\in[y',y)\}}\diff s)-1\right|\et^{X_{\tau\wedge T_{\overline{k}}^+}}\ind_{\{\tau<\infty\}}\nn\\
\le&\et^{x\vee\overline{k}}\bigg(1-\exp(-q\int_{]0,T_{\overline{k}}^+]}\ind_{\{X_s\in[y',y)\}}\diff s)\bigg)\ind_{\{\tau<\infty\}},
\end{align*}
which is a non-negative random variable that converges to 0 almost surely as $|y-y'|$ goes to 0. On the other hand,
\begin{align*}
0\le v(x;y')-v(x;y)\le &\sup_{\tau\in\mc{T}}\ex_x\bigg[\left(\exp(-A_{\tau\wedge T_{\overline{k}}^+}^y)-\exp(-A_{\tau\wedge T_{\overline{k}}^+}^{y'})\right)\bigg(\et^{X_{\tau\wedge T_{\overline{k}}^+}}-K\bigg)^+\ind_{\{\tau<\infty\}}\bigg]\\
\le & \et^{x\vee\underline{k}}\,\ex_x\bigg[\bigg(1-\exp(-q\int_{]0,T_{\overline{k}}^+]}\ind_{\{X_s\in[y',y)\}}\diff s)\bigg)\ind_{\{\tau<\infty\}}\bigg].
\end{align*}
From the dominated convergence theorem we know that $v(x;y')-v(x;y)$ goes to 0 as $|y-y'|$ tends to 0.

\ul{\it Part $(iii)$:} 
Because $\mc{S}^y\subset[\underline{k},\infty)$, we know that $y\ge a\ge\underline{k}$. If $a>\underline{k}$, then for any $x\in[\underline{k},a]$, we have
\begin{align}
v(x;y)=&\sup_{\tau\in\mc{T}}\ex_x[\et^{-A_\tau^y}(\et^{X_\tau}-K)^+\ind_{\{\tau<\infty\}}]\nn\\
=&\sup_{\tau\in\mc{T}}\ex_x[\et^{-A_{(\tau\wedge T_a^+)}^y}(\et^{X_{\tau\wedge T_a^+}}-K)^+\ind_{\{\tau\wedge T_a^+<\infty\}}]\nn\\
=&\sup_{\tau\in\mc{T}}\ex_x[\et^{-(r+q)(\tau\wedge T_a^+)}(\et^{X_{\tau\wedge T_a^+}}-K)^+\ind_{\{\tau\wedge T_a^+<\infty\}}]\nn\\
=&\underline{v}(x),\nn
\end{align}
where we used the facts that $X_{T_a^+}\in\mc{S}^y$ on the event $\{T_a^+<\infty\}$ in the second line, and $A_t^y=(r+q)t$ for $t\le T_a^+\le T_y^+$ in the third line. It follows that $[\underline{k},a]\subset\mc{S}^y$. 
\end{proof}

\vspace{3pt}
\begin{proof}[Proof of Lemma \ref{Lambdaprop}]
The limits as $x\downarrow0$ and $x\uparrow\infty$ follow straightforwardly from the asymptotic behaviors of scale functions, see Lemma \ref{lem W}. Thus, we only present the proof of the monotonicity. To this end, fix $x<z$, then it is clear that the mapping 
$y\mapsto \ex_x[\exp(-A_{T_{z}^+}^y)]\equiv \mc{I}^{(r,q)}(x-y)/\mc{I}^{(r,q)}(z-y)$ is 
strictly decreasing in $y$. 
%(if this is true then the last inequality below is strict, $<$. Thus the lines after that (in red) are not needed)}
Hence, for $z>x>y$ we have
\begin{align*}\p_y\bigg(\frac{\mc{I}^{(r,q)}(x-y)}{\mc{I}^{(r,q)}(z-y)}\bigg)=&\frac{\mc{I}^{(r,q)}(x-y)}{\mc{I}^{(r,q)}(z-y)}\bigg(\frac{\mc{I}^{(r,q),\prime}(x-y)}{\mc{I}^{(r,q)}(x-y)}-\frac{\mc{I}^{(r,q),\prime}(z-y)}{\mc{I}^{(r,q)}(z-y)}\bigg)\nn\\
=&\frac{\mc{I}^{(r,q)}(x-y)}{\mc{I}^{(r,q)}(z-y)}\left(\ol{\Lambda}(x-y)-\ol{\Lambda}(z-y)\right)\le 0.\end{align*}
So $\ol{\Lambda}(\cdot)$ is non-increasing over $[0,\infty)$. Suppose there is a non-empty interval $(l,u)\subset[0,\infty)$ such that $\ol{\Lambda}(l)=\ol{\Lambda}(u)$. If not, there is a $y_1<x$ such that $\ol{\Lambda}(x-y_1)=\ol{\Lambda}(z-y_1)$. Setting $m=(u+l)/2$, then we have 
\[\p_y\bigg(\frac{\mc{I}^{(r,q)}(m-y)}{\mc{I}^{(r,q)}(u-y)}\bigg)=0,\quad\forall y\in(0,\frac{u-l}{2})\]
But this means that $y\mapsto \ex_m[\exp(-A_{T_{u}^+}^y)]$ is not strictly decreasing. This is a contradiction.  Therefore $\ol{\Lambda}(\cdot)$ must be strictly decreasing over $[0,\infty)$.

\end{proof}

\vspace{3pt}
\begin{proof}[Proof of Lemma \ref{lem:Hsign2q}]
Using the definition of $\psi(\cdot)$ in \eqref{decomp} and $\prq$, we have
\[\psi(\prq)=r+q=\frac{\sigma^2}{2}(\prq)^2+\mu\prq+\int_{(-\infty,0)}(\et^{\prq z}-1-\prq z1_{\{z>-1\}})\Pi(\diff z).\]
Combining this with $\mc{H}(0+)$ from \eqref{eqHv} when $\sigma>0$ (i.e. $W^{(r)}(0)=0$ and $W^{(r),\prime}(0+) = 2/\sigma^2$ by Lemma \ref{lem W}), which is given by   
\[
\mc{H}(0+)
% =\prq\left(\prq-1\right)-q \big[ W^{(r),\prime}(0+) + (\Phi(r+q)-1) W^{(r)}(0) \big]
=\prq\left(\prq-1\right) - \frac{2q}{\sigma^2}\,,
\]
we get   
\begin{align}
\frac{\sigma^2}{2}\mc{H}(0+)
= \frac{\sigma^2}{2}\prq(\prq-1)-q %\nn\\
= r-\bigg(\mu+\frac{\sigma^2}{2}\bigg)\prq-F(\prq)\,,\label{eq:H0phii}
\end{align}
where 
\begin{align}
F(x)=\int_{(-\infty,0)}(\et^{x z}-1-x z \ind_{\{z>-1\}})\Pi(\diff z)
.\nn
\end{align}
Notice that $F(\cdot)$ is (strictly if $\Pi\not\equiv0$) convex over $(0,\infty)$ and $F(0)=0$, hence $F(x)\ge F(1)x$ for all $x\ge1$. In particular, for $\prq>1$ we have  
\[\frac{\sigma^2}{2}\mc{H}(0+)\le r-\bigg(\frac{\sigma^2}{2}+\mu\bigg)\prq-F(1)\prq=r-\psi(1)\prq.\]
Hence, if $\bar{u}=-\infty$ holds, then necessarily $g'(0+)=\frac{\mc{H}(0+)}{\prq^2}\ge0$ and therefore $r\ge \psi(1)\prq$. On the other hand, if $\psi(1)>0$ and $\prq>r/\psi(1)$ hold, then $g'(0+)=\frac{\mc{H}(0+)}{\prq^2}<0$ and therefore $\bar{u}>0$. 
\end{proof}

\vspace{3pt}
\begin{proof}[Proof of Lemma \ref{func}]  
Straightforward calculation using \eqref{eq:hazard} yields that
\begin{align}
&\ex_x[\et^yg( \overline{X}_{\zeta}-y)]\nn\\=&
\int_{(x,\infty)} \et^z\bigg(1-\frac{1}{\overline{\Lambda}(z-y)}\bigg) \,\pr_x( \overline{X}_{\zeta}\in\diff z)\nn\\
=&\int_{(x,\infty)} \et^z\,\pr_x( \overline{X}_{\zeta}\in \diff z)-\int_{(x,\infty)} \et^z\,\pr_x( \overline{X}_{\zeta}>z)\diff z\nn\\
=&\int_{(x,\infty)} \et^z\pr_x( \overline{X}_{\zeta}\in \diff z)+\et^x\pr_x( \overline{X}_{\zeta}>x)-\lim_{z\uparrow\infty}\left(\et^z\ex_x[\exp(-A_{T_z^+}^y)]\right)-\int_{(x,\infty)} \et^z\pr_x( \overline{X}_{\zeta}\in\diff z)=\et^x,\nn
\end{align}
where we used integration by parts in the third equality and the regularity of $(x,\infty)$ for $x$, as well as Proposition \ref{prop:lap11} and Lemma \ref{lem W} in the last equality.
This completes the proof.
\end{proof}

\vspace{3pt}
\begin{proof}[Proof of Lemma \ref{lem:ffun00}]
Let $w'>w\ge\underline{k}$, then we have by the definition of $f$ in \eqref{eq:ffun} that
\begin{align}
f(w)-f(w')=&\int_{(-\infty,\underline{k}-w)}h(z+w)\Pi(\diff z)-\int_{(-\infty,\underline{k}-w')}h(z+w')\Pi(\diff z)\nn\\
=&\int_{[\underline{k}-w',\underline{k}-w)}h(z+w)\Pi(\diff z)+\int_{(-\infty,\underline{k}-w')}(h(z+w)-h(z+w'))\Pi(\diff z).\label{feqdiff}
\end{align}
It can be easily verified that $h(\cdot)$ defined in \eqref{h} is a strictly decreasing, positive function over $(-\infty,\underline{k})$, so $f(w)-f(w')>0$. That is, $f(\cdot)$ is strictly decreasing. Moreover, since $0\le h(\underline{k}-\e)=C\e^2$ for some fixed constant $C>0$ and all sufficiently small $\e>0$, we can use the dominated convergence theorem to show the continuity of $f(\cdot)$. 

In order to prove the continuous differentiability of $f(\cdot)$, we will first show the right-differentiability by considering the expression $(f(w')-f(w))/(w'-w)$ using \eqref{feqdiff}. To this end, we fix a $\delta>0$ and let $\underline{k}+\delta\le w<w'$. Then notice that ${\tilde h}(S):=h(\log S)$ is strictly decreasing and convex over $(0,\underline{K}\et^{-\delta}]$, with ${\tilde h}(\underline{K})={\tilde h}'(\underline{K})=0$ and $0>{\tilde h}'(S)\ge {\tilde h}'(0+)=-1$ for all $S\in(0,\underline{K}\et^{-\delta}]$. It thus follows that 
\begin{align*}
0&\ge-\int_{[\underline{k}-w',\underline{k}-w)}h(z+w)\,\Pi(\diff z)
\ge -{\tilde h}(\underline{K}\et^{w-w'})\,\Pi[\underline{k}-w',\underline{k}-w)\\
&\ge \underline{K}\,(1-\et^{w-w'})\,{\tilde h}'(\underline{K}\et^{w-w'})\,\Pi[\underline{k}-w',\underline{k}-w)
\ge \underline{K}\,(w'-w)\,{\tilde h}'(\underline{K}\et^{w-w'})\,\Pi[\underline{k}-w',\underline{k}-w).
\end{align*}
Similarly, for any $z<\underline{k}-w'$, using ${\tilde h}'(S)=h'(\log S)/S$ and the convexity of ${\tilde h}$ we have
\begin{align*}
h(z+w')-h(z+w) &\ge {\tilde h}'(\et^{z+w})\et^z(\et^{w'}-\et^{w}) = h'(z+w)(\et^{w'-w}-1), \nn\\
h(z+w')-h(z+w) &\le {\tilde h}'(\et^{z+w'})\et^z(\et^{w'}-\et^{w}) = h'(z+w')(1-\et^{w-w'}).
\end{align*}
% \begin{align*}
% 0\le&h'(z+w')(\et^{w-w'}-1)={\tilde h}'(\et^{z+w'})\et^z(\et^w-\et^{w'})\nn\\
% \le&h(z+w)-h(z+w')\le {\tilde h}'(\et^{z+w})\et^z(\et^w-\et^{w'})=h'(z+w)(1-\et^{w'-w}).
% \end{align*}
Taking into account the above inequalities, the fact that $h(\cdot)$ is decreasing, we can conclude from \eqref{feqdiff} that
\begin{align*}
\frac{f(w')-f(w)}{w'-w}
&\geq \underline{K}\,{\tilde h}'(\underline{K}\et^{w-w'})\,\Pi(\underline{k}-w',\underline{k}-w)+\frac{1-\et^{w'-w}}{w-w'}\int_{(-\infty,\underline{k}-w')}h'(z+w)\Pi(\diff z)\nn\\
&\geq \underline{K}\,{\tilde h}'(\underline{K}\et^{w-w'})\,\Pi(-\infty,-\delta)+\frac{\et^{w'-w}-1}{w'-w}\int_{(-\infty,\underline{k}-w')}h'(z+w)\Pi(\diff z),\nn\\
\frac{f(w')-f(w)}{w'-w}
&\le \frac{1-\et^{w-w'}}{w'-w}\int_{(-\infty,\underline{k}-w')}h'(z+w')\Pi(\diff z).
\end{align*}
% \begin{align*}
% &-{\tilde h}'(\underline{K}\et^{w-w'})\underline{K} \Pi(-\infty,-\delta)+\frac{1-\et^{w'-w}}{w-w'}\int_{(-\infty,\underline{k}-w')}h'(z+w)\Pi(\diff z)\nn\\
% \le&-{\tilde h}'(\underline{K}\et^{w-w'})\underline{K}\Pi(\underline{k}-w',\underline{k}-w)+\frac{1-\et^{w'-w}}{w-w'}\int_{(-\infty,\underline{k}-w')}h'(z+w)\Pi(\diff z)\nn\\
% \le &\frac{f(w')-f(w)}{w'-w}\le \frac{\et^{w-w'}-1}{w-w'}\int_{(-\infty,\underline{k}-w')}h'(z+w')\Pi(\diff z).
% \end{align*}
By the dominated convergence theorem and the fact that ${\tilde h}'(\underline{K})=0$, we know that 
\[\lim_{w'\downarrow w}\frac{f(w')-f(w)}{w'-w}=\int_{(-\infty,\underline{k}-w)}h'(z+w)\Pi(\diff z).\]
Using similar arguments one can prove that $f(w)$ is left-differentiable at $w$. The continuity of $f'(\cdot)$ follows from dominated convergence theorem and the fact that $h'(\cdot)$ is uniformly bounded. 

Finally, the expressions in \eqref{eq4f} are a consequence of staightforward computations, using the definition of the Laplace exponent $\psi$ in \eqref{decomp} and $f$ in \eqref{eq:ffun}, and the expressions coming from these definitions:
\begin{align}
\int_{(-\infty,0)}(\et^{\prq z}-1-\prq z&\ind_{\{z>-1\}})\,\Pi(\diff z)=r+q-\mu\prq-\frac{1}{2}\sigma^2(\prq)^2,\nn\end{align}
\be
\int_{(-\infty,0)}(\et^z-1-z\ind_{\{z>-1\}})\,\Pi(\diff z)=\psi(1)-\mu-\frac{1}{2}\sigma^2,\nn\ee
\be
f(\underline{k}) = \int_{(-\infty,0)}\bigg(\frac{K}{\prq-1}\et^{\prq z}-\frac{\prq K}{\prq-1}\et^{z}+K\bigg)\Pi(\diff z).\nn
\ee
%\rd{(How do we use DCT in this proof?)\bf[$h'$ is uniformly bounded. If $w>\underline{k}$, then $\Pi(\diff z)$ is integrable over $(-\infty,\underline{k}-w)$ so we may use bounded convergence theorem to see that $w\mapsto\int_{(-\infty,0)}\ind_{\{z<\underline{k}-w\}}h'(z+w)\Pi(\diff z)$ is continuous.]}
\end{proof}

\vspace{3pt}
\begin{proof}[Proof of Proposition \ref{prop:V2value}]
We focus on proving \eqref{V2V22Fab}, since the equality \eqref{V2V22Fa} is derived directly from the former and \eqref{uphit}.  
Notice that for $a=y$, using the definition given in \eqref{newW}, we know that $W^{(r,q)}(x,y)=W^{(r)}(x-y)$, therefore the result follows directly from  an application of \eqref{eq:loeffen}. Hence, we only need to prove the result for $a\in[\underline{k},y)$.
Let us first assume that $a>\underline{k}$. The result for $a=\underline{k}$ can then be obtained in the limit. 

We will use similar techniques as in \cite{OccupationInterval}. In particular, in the same probability space, for a given $n\ge 1$, we let $X^n$ be a spectrally negative L\'evy process with L\'evy triplet $(\mu,0,\Pi_n)$ where
\[\Pi_n(\diff z):=\ind_{\{z\le-\frac{1}{n}\}}\Pi(\diff z)+\sigma^2n^2\delta_{-\frac{1}{n}}(\diff z),\]
with $\delta_{-1/n}(\diff z)$ being the Dirac measure at $-1/n$. The process $X^n$ has paths of bounded variation, with the drift given by $\gamma_n:=\mu+\int_{(-1,-1/n]}|z|\Pi(\diff z)+\sigma^2n^2$. 
If $X$ has paths of unbounded variation,
we know that, for all sufficiently large $n\ge 1$, we will have $\gamma_n>0$. Without loss of generality, we will assume that $n\ge 1$ is sufficiently large and $a-\underline{k}>1/n$. 

Let us denote by $\mc{L}_n$ and $\psi_n(\cdot)$ the infinitesimal generator and the Laplace exponent of $X^n$, respectively. Moreover, introduce 
\[\mc{G}^n(x):=(\mc{L}_n-r-q\ind_{\{x<y\}})\underline{v}(x).\]
By the construction of $\Pi_n$, we know that, for any $x\ge a$, 
\[\mc{L}_n\underline{v}(x)=\psi_n(1)\et^x+\mc{L}_n h(x),\]
where 
\begin{align}
\psi_n(1)=&\mu+\sigma^2n^2\bigg(\et^{-\frac{1}{n}}-1+\frac{1}{n}\bigg)+\int_{(-\infty,-\frac{1}{n})}(\et^z-1-z\ind_{\{z>-1\}})\Pi(\diff z),\label{eq:psin}\\
\mc{L}_nh(x)=&\int_{(-\infty,-\frac{1}{n})}h(x+z)\Pi(\diff z)=\int_{(-\infty,\underline{k}-x)}h(x+z)\Pi(\diff z)=f(x)=\mc{L}h(x).\label{eq:Lnh}
\end{align}
where the equalities in \eqref{eq:Lnh} are due to the support of $h(\cdot)$. We notice that, $\mc{G}^n(x)\to(\mc{L}-r-q\ind_{\{x<y\}})\underline{v}(x)$ uniformly over $[a,b]$.

Let us denote by $W^{(r),n}(\cdot)$ the $r$-scale function of $X^n$, and let $T_{a}^{-,n}$, $T_b^{+,n}$ and $A_t^{y,n}$ be the first passage times of $a$ (from above), $b$ (from below), and the occupation time below $y$ for $X^n$, respectively. Define 
\[
V^n(x;y):=\ex_x\Big[\exp\big(-A_{T_{a}^{-,n}}^{y,n}\big) \underline{v}\big(X_{T_a^{-,n}}^n\big)\ind_{\{T_a^{-,n}<T_b^{+,n}\}}\Big] + \ex_x\Big[\exp\big(-A_{T_{b}^{+,n}}^{y,n}\big) \underline{v}\big(X_{T_b^{+,n}}^n\big)\ind_{\{T_b^{+,n}<T_a^{-,n}\}}\Big],\quad\forall x\in\R.
\]
We will study this function in four distinct intervals of $x$. We first observe that, for $x\in(-\infty,a)\cap(b,\infty)$, we have $V^n(x;y)=\underline{v}(x)$.

Now, for $x\in[a,y)$, we use the strong Markov property of $X^n$ and \eqref{uphit}, followed by an application of \eqref{eq:loeffen} for the test function $\underline{v}(\cdot)$, to obtain that 
\begin{multline}
V^n(x;y)=\underline{v}(x)+\frac{W^{(r+q),n}(x-a)}{W^{(r+q),n}(y-a)}\bigg(V^n(y;y)-\underline{v}(y)+\int_{(a,y)}W^{(r+q),n}(y-w)\mc{G}_n(w)\diff w\bigg)\\
-\int_{(a,x)}W^{(r+q),n}(x-w)\mc{G}_n(w)\diff w,\label{eq:V2nxy}
\end{multline}
%where $f(w)$ is given by \eqref{eq:ffun} 
% \[f(w)=\int_{(-\infty,\underline{k})} h(z)\Pi(-w+\diff z)=\int_{(-\infty,\underline{k})} h(z)\Pi_n(-w+\diff z),\quad\forall w>a.\] The second equality holds 
%since, due to its support, $h(\cdot)$ cannot distinguish $\Pi$ and $\Pi_n$.  

Finally, for $x\in[y,b)$, by the strong Markov property of $X^n$, and using \eqref{eq:V2nxy}, we have
\begin{align}
&V^n(x;y)
=\ex_x\Big[\et^{-rT_y^{-,n}}V^n\big(X_{T_y^{-,n}}^n;y\big)\ind_{\{T_y^{-,n}<T_{b}^{+,n}\}}\Big] + \ex_x\Big[\et^{-rT_b^{+,n}}\underline{v}\big(X_{T_b^{+,n}}^n\big)\ind_{\{T_b^{+,n}<T_{y}^{-,n}\}}\Big]\nn\\
&=\ex_x\Big[\et^{-rT_y^{-,n}}\underline{v}\big(X_{T_y^{-,n}}\big)\ind_{\{T_y^{-,n}<T_b^{+,n},X_{T_y^{-,n}}<a\}}\Big] + \ex_x\Big[\et^{-rT_y^{-,n}}\underline{v}\big(X_{T_y^{-,n}}^n\big)\ind_{\{T_y^{-,n}<T_b^{+,n},X_{T_y^{-,n}}\ge a\}}\Big]\nn\\
&\;\;+\frac{\ex_x\Big[\et^{-rT_y^{-,n}}W^{(r+q),n}\big(X_{T_y^{-,n}}^n-a\big)\ind_{\{T_y^{-,n}<T_{b}^{+,n}, X_{T_y^{-,n}}\ge a\}}\Big]}{W^{(r+q),n}(y-a)}\bigg(V^n(y;y)-\underline{v}(y)+\int_{(a,y)}W^{(r+q),n}(y-w)\mc{G}_n(w)\diff w\bigg)\nn\\
&\;\;-\int_{(a,y)}\ex_x\Big[\et^{-rT_y^{-,n}}W^{(r+q),n}\big(X_{T_y^{-,n}}^n-w\big)\ind_{\{T_y^{-,n}<T_{b}^{+,n}, X_{T_y^{-,n}}\ge w\}}\Big]\mc{G}_n(w)\diff w+\frac{W^{(r),n}(x-y)}{W^{(r),n}(b-y)}\underline{v}(b),\label{eq:V2nxy1}
\end{align}
where we used Fubini theorem in the last line. Observe that, by another application of \eqref{eq:loeffen}, the first line of the right hand side of \eqref{eq:V2nxy1} is given by 
\begin{multline*}
\ex_x\Big[\et^{-rT_y^{-,n}}\underline{v}\big(X_{T_y^{-,n}}^n\big)\ind_{\{T_y^{-,n}<T_{b}^{+,n}\}}\Big]\\
=\underline{v}(x)-\frac{W^{(r),n}(x-y)}{W^{(r),n}(b-y)}\underline{v}(b)
+\int_{[y,b)}\bigg(\frac{W^{(r),n}(x-y)W^{(r),n}(b-w)}{W^{(r),n}(b-y)}-W^{(r),n}(x-w)\bigg)\mc{G}_n(w)\diff w.\end{multline*}
Moreover, using  \cite[Lemma \ref{lem:equivalent} and (19)]{OccupationInterval}, and the fact that $W^{(r+q),n}$ vanishes on $(-\infty,0)$, we have for any fixed $u<y$ that
\begin{align}
\ex_x\Big[\et^{-rT_y^{-,n}}W^{(r+q),n}\big(X_{T_y^{-,n}}^n-u\big)\ind_{\{T_y^{-,n}<T_{b}^{+,n}, X_{T_y^{-,n}}\ge u\}}\Big]=&\ex_x\Big[\et^{-rT_y^{-,n}}W^{(r+q),n}\big(X_{T_y^{-,n}}^n-u\big)\ind_{\{T_y^{-,n}<T_{b}^{+,n}\}}\Big]\nn\\
=&W^{(r,q),n}(x,u)-\frac{W^{(r),n}(x-y)}{W^{(r),n}(b-y)}W^{(r,q),n}(b,u),\nn\end{align}
where $W^{(r,q),n}(x,u)$ is defined in view of \eqref{eq:bigW1} as
\be
W^{(r,q),n}(x,u):=W^{(r+q),n}(x-u)-q\int_{(y,x\vee y)}W^{(r),n}(x-z)W^{(r+q),n}(z-u)\diff z.\nn
\ee
Therefore, we conclude from the above analysis and the fact that $W^{(r),n}(x,u)$ vanishes on $(-\infty,0)$ that \eqref{eq:V2nxy1} becomes
\begin{align}
V^n(x;y)=&\underline{v}(x)
+\int_{[y,b)}\bigg(\frac{W^{(r),n}(x-y)W^{(r),n}(b-w)}{W^{(r),n}(b-y)}-W^{(r),n}(x-w)\bigg)\mc{G}_n(w)\diff w\nn\\
&+\frac{W^{(r,q),n}(x,a)-\frac{W^{(r),n}(x-y)}{W^{(r),n}(b-y)}W^{(r,q),n}(b,a)}{W^{(r+q),n}(y-a)}\bigg(V^n(y;y)-\underline{v}(y)+\int_{(a,y)}W^{(r+q),n}(y-w)\mc{G}_n(w)\diff w\bigg)\nn\\
&-\int_{(a,y)}\bigg(W^{(r,q),n}(x,w)-\frac{W^{(r),n}(x-y)}{W^{(r),n}(b-y)}W^{(r,q),n}(b,w)\bigg)\mc{G}_n(w)\diff w,\label{eq:key}
\end{align}
for all $x\in(a,b)$. Now, letting $x=y$ and using the facts that $W^{(r,q),n}(y,w)=W^{(r+q),n}(y-w)$ for all $w\in[a,y]$ and  $W^{(r),n}(0)\neq 0$ since $X^n$ has bounded variation, we obtain that 
\begin{align*}
V^n(y;y)=&\underline{v}(y)+\int_{(a,y)}\bigg(\frac{W^{(r+q),n}(y-a)}{W^{(r,q),n}(b,a)}W^{(r,q),n}(b,w)-W^{(r+q),n}(y-w)\bigg)\mc{G}_n(w)\diff w\\
&+\frac{W^{(r+q),n}(y-a)}{W^{(r,q),n}(b,a)}\int_{[y,b)}W^{(r),n}(b-w)\mc{G}_n(w)\diff w.
\end{align*}
Plugging the above expression of $V^n(y;y)$ into \eqref{eq:key}, we obtain that 
\begin{align*}
V^n(x;y)=&\underline{v}(x)+\int_{(a,y)}\bigg(\frac{W^{(r,q),n}(x,a)}{W^{(r,q),n}(b,a)}W^{(r,q),n}(b,w)-W^{(r,q),n}(x,w)\bigg)\mc{G}_n(w)\diff w\\
&+\int_{[y,b)}\bigg(\frac{W^{(r,q),n}(x,a)}{W^{(r,q),n}(b,a)}W^{(r),n}(b-w)-W^{(r),n}(x-w)\bigg)\mc{G}_n(w)\diff w.
\end{align*}
As $X^n$ converges to $X$ uniformly on compact time intervals $\pr_x$-a.s. (see  \cite[p. 210]{Bertoin1996}), and $\underline{v}(\cdot)$ is bounded over $(-\infty,b]$, we may use the dominated convergence theorem and L\'evy's extended continuity theorem (see e.g.  \cite[Theorem 5.22]{Kall02Book}) as in the proof of  \cite[Theorem 1]{OccupationInterval} to complete the proof. If $X$ has paths of bounded variation, then we obtain the final result without taking the limit. 
%\begin{align}
%V_2(y;y)=&\lim_{n\uparrow\infty}V_2^n(y;y)\nn\\
%=&\int_{(a,y)}\bigg(\frac{W^{(r+q)}(y-a)}{W^{(r,q)}(b,a)}W^{(r,q)}(b,w)-W^{(r+q)}(y-w)\bigg)f(w)\diff w\nn\\
%&+\frac{W^{(r+q)}(y-a)}{W^{(r,q)}(b,a)}\int_{[y,b)}W^{(r)}(b-w)f(w)\diff w.\label{eq:key2}\end{align}
%Taking limit in \eqref{eq:key} and plugging \eqref{eq:key2} into the limit
%completes the proof.
\end{proof}

\vspace{3pt}
\begin{proof}[Proof of Proposition \ref{prop:sfit}]
For ease of notation, we let $a^*\equiv a^\star(y)$ and $b^\star\equiv b^\star(y)$. 

\ul{\it Part $(i)$:} We prove the desired claim by exploiting the optimality of $b$ among all up-crossing thresholds larger than $y$ (recall that the optimal threshold $b^*\ge y_m>y$). More specifically, for each pair $(x,b)$ such that $x\in(y,b^\star)$ and $b>x$, we consider the mapping $b\mapsto V(x;y,a^*,b)$, where $V(\cdot;\cdot,\cdot,\cdot)$ is defined in \eqref{V2V22}. Then we know that $V(x;y,a^*,b^*)=\sup_{b\ge x}V(x;y,a^*,b)$ for every fixed $x\in(y,b^\star)$. 
On one hand, by Proposition \ref{prop:V2value}, we have 
%In order to avoid circular arguments, we construct the function $\tilde{U}(x,b;y)$, representing the value of a two-sided exit strategy, by either stopping at the up-crossing time $T_b^+$ to receive $\et^{b}-K$, or stopping at the down-crossing time $T_y^-$ to receive the optimal value $v(X_{T_y^-};y)$:
\begin{multline}
V(x;y,a^*,b)=\underline{v}(x)+\int_{(a^*,y)}\bigg(\frac{W^{(r,q)}(x,a^*)}{W^{(r,q)}(b,a^*)}W^{(r,q)}(b,w)-W^{(r,q)}(x,w)\bigg)\left[\chi(w)-q\underline{v}(w)\right]\diff w\nn\\
+\int_{[y,b)}\bigg(\frac{W^{(r,q)}(x,a^*)}{W^{(r,q)}(b,a^*)}W^{(r)}(b-w)-W^{(r)}(x-w)\bigg)\chi(w)\diff w
,\nn
\end{multline}  
for all $x\in(y,b^\star)$ and $b\ge x$. As discussed above, for all fixed $x\in(y,b^\star)$, the mapping $b\mapsto V(x;y,a^*,b)$ is maximized at $b^\star$, a fact that will be exploited in the subsequent analysis.
%By \eqref{up}, \eqref{downtouch} and \eqref{downjump}, we can explicitly derive that
%\begin{alignat}{2}
%\tilde{U}(x,b;y)=&\frac{W^{(r)}(x-y)}{W^{(r)}(b-y)}(\et^b-K)+\frac{\sigma^2}{2}\bigg(W^{(r)\prime}&&(x-y)-\frac{W^{(r)\prime}(b-y)}{W^{(r)}(b-y)}W^{(r)}(x-y)\bigg)v(y;y)\nn\\
%&+\int_{(y,b)}\bigg(\frac{W^{(r)}(x-y)W^{(r)}(b-w)}{W^{(r)}(b-y)}&&-W^{(r)}(x-w)\bigg)\int_{(0,\infty)} v(y-h;y)\,\Pi(y-w-\diff h)\,\diff w.\nn
%\end{alignat}
This function is continuously differentiable in $b$ ($x$, resp.).\footnote{The only nontrivial part is the differentiability of $b\mapsto\int_{[y,b)}W^{(r)}(b-w)\chi(w)\diff w=\int_{(0,b-y]}W^{(r)}(w)\,\chi(b-w)\diff w$, which follows from that of $\chi(\cdot)$.}
Hence, $b^\star$ satisfies the first order condition:
\[\p_b V(x;y,a^\star,b)|_{b=b^\star}=0,\quad\forall x\in(y,b^\star).\]
Using the above condition and some straightforward calculations, for all $x\in(y,b^\star)$, we get 
\begin{multline}
0=-\frac{W_1^{(r,q)}(b^\star,a^\star)\,W^{(r,q)}(x,a^\star)}{\left(W^{(r,q)}(b^\star,a^\star)\right)^2}\bigg(\int_{(a^\star,y)}W^{(r,q)}(b,w)\,[\chi(w)-q\underline{v}(w)]\diff w+\int_{[y,b^\star)}W^{(r)}(b^\star-w)\chi(w)\diff w\bigg)\nn\\
+\frac{W^{(r,q)}(x,a^\star)}{W^{(r,q)}(b^\star,a^\star)}\bigg(\int_{(a^\star,y)}W_1^{(r,q)}(b,w)[\chi(w)-q\underline{v}(w)]\diff w+\int_{[y,b^\star)}W^{(r)}(b^\star-w)\chi'(w)\diff w+W^{(r)}(b^\star-y)\chi(y)\bigg),\nn
\end{multline}
where $W_1^{(r,q)}(b,a)=\p_bW^{(r,q)}(b,a)$. 
Since the factor $W^{(r,q)}(x,a^\star)/W^{(r,q)}(b^\star,a^\star)>0$, we know that 
\begin{align}
0=&-\frac{W_1^{(r,q)}(b^\star,a^\star)}{W^{(r,q)}(b^\star,a^\star)}\bigg(\int_{(a^\star,y)}W^{(r,q)}(b,w)\,[\chi(w)-q\underline{v}(w)]\diff w+\int_{[y,b^\star)}W^{(r)}(b^\star-w)\chi(w)\diff w\bigg)\nn\\
&+\int_{(a^\star,y)}W_1^{(r,q)}(b,w)[\chi(w)-q\underline{v}(w)]\diff w+\int_{[y,b^\star)}W^{(r)}(b^\star-w)\chi'(w)\diff w+W^{(r)}(b^\star-y)\chi(y).\label{eq:spast1}
\end{align}
On the other hand, for all $x\in(y,b^\star)$, we have
\begin{align}
\p_xv(x;y)&=\p_x V(x;y,a^\star,b^\star)\nn\\
&=\et^x+\frac{W_1^{(r,q)}(x,a^\star)}{W^{(r,q)}(b^\star,a^\star)}\bigg(\int_{(a^\star,y)}W^{(r,q)}(b^\star,w)[\chi(w)-q\underline{v}(w)]\diff w+\int_{[y,b^\star)}W^{(r)}(b^\star-w)\chi(w)\diff w\bigg) \nn\\
&\quad-\int_{(a^\star,y)}W_1^{(r,q)}(x,w)[\chi(w)-q\underline{v}(w)]\diff w-\int_{[y,x)}W^{(r)}(x-w)\chi'(w)\diff w-W^{(r)}(x-y)\chi(y). \label{eq:spast2}
\end{align}
Notice that the domain for the last integral in  \eqref{eq:spast2} can be replaced by $[y,b^\star)$ without affecting its value. Combining \eqref{eq:spast1} and \eqref{eq:spast2}, we know that $\p_xV(x;y)\to \et^{b^\star}$ as $x\to b^\star$.

\ul{\it Part $(ii)$:}
Let us consider $V(y;y,a,y_m)$, the value of the two-sided strategy $T_a^-\wedge T_{y_m}^+$ for $a<y$ when starting from $x=y$. 
Since $X$ is assumed to have unbounded variation, we know that $V(y;y,y-,y_m)=\et^y-K$. 
Moreover, straightforward calculations using \eqref{V2V22} show that the mapping $a\mapsto V(y;y,a,y_m)$ is continuously differentiable over $(\underline{k},y)$. 
Indeed, we have
\begin{align}
%\p_aV_1(y;y,a,y_m)=&f_1(a;y,y_m)W^{(r+q)\prime}(y-a)+f_2(a;y,y_m)W^{(r+q)}(y-a),\nn\\
\p_aV(y;y,a,y_m)=&f_1(a;y,y_m)W^{(r+q)\prime}(y-a)+f_2(a;y,y_m)W^{(r+q)}(y-a).\nn
\end{align}
where 
\begin{align}
%f_1(a;y,y_m)&:=\frac{\et^{y_m}Z_1^{(r-\psi(1),q)}(y_m,a)-KZ^{(r,q)}(y_m,a)-\et^{y_m}+K}{W^{(r,q)}(y_m,a)},\nn\\
%f_2(a;y,y_m)&:=(r+q)K-(r+q-\psi(1))\et^a+\p_a\bigg(\frac{KZ^{(r,q)}(y_m,a)-\et^{y_m}Z_1^{(r-\psi(1),q)}(y_m,a)+\et^{y_m}-K}{W^{(r,q)}(y_m,a)}\bigg),\nn\\
f_1(a;y,y_m)&:=-\int_{(a,y_m)}\frac{W^{(r,q)}(y_m,w)}{W^{(r,q)}(y_m,a)}[\chi(w)-q\ind_{\{w<y\}}\underline{v}(w)]\diff w,\nn
% \end{align}
% \begin{align}
\\
f_2(a;y,y_m)&:=-\frac{\p_aW^{(r,q)}(y_m,a)}{W^{(r,q)}(y_m,a)^2}\int_{(a,y_m)}W^{(r,q)}(y_m,w)[\chi(w)-q\ind_{\{w<y\}}\underline{v}(w)]\diff w\nn.
\end{align}
It is easily seen that $f_2(y-;y,y_m)$ is finite, hence by $W^{(r+q)}(0)=0$ (since $X$ has unbounded variation), we have
\[\lim_{a\uparrow y}f_2(a;y,y_m)W^{(r+q)}(y-a)=0.\]
On the other hand, using 
% \eqref{eq:bigW1} 
\eqref{newW} we have
\begin{align}
\lim_{a\uparrow y}f_1(a;y,y_m)=&-\frac{1}{W^{(r)}(y_m-y)}\bigg(\int_{(0,y_m-y)}\chi(y_m-z)W^{(r)}(z)\diff z\bigg)\nn.
\end{align}
Recall from %\eqref{eq:56} and 
\eqref{ym} that $\chi(x)>0$ if and only if $x<y_m$, therefore 
% $f_1(y-;y,y_m)+f_3(y-;y,y_m)\in(-\infty,0)$. 
$f_1(y-;y,y_m)\in(-\infty,0)$. 
Moreover, recall from Lemma \ref{lem W} that 
$W^{(r+q)\prime}(y-a)>0$ for all $a<y$ and it converges to either $2/\sigma^2$ or $\infty$ as $a\uparrow y$, depending on whether $\sigma>0$ or not. Consequently, we have $\p_a V(y;y,a,y_m)|_{a=y-}<0$, which implies that $v(y;y)\ge V(y;y,y-\e,y_m)>V(y;y,y,y_m)=\et^y-K$ for all sufficiently small $\e>0$, where $v$ is given by \eqref{eqeqeqeqeq} and the first inequality follows from the fact that $V(y;y,a,y_m)$ is clearly suboptimal for all $a<y$, thus $V(y;y,a,y_m)\leq v(y;y)$. This proves that $y\notin\mc{S}^y$.

We now prove that smooth fit holds at $a^\star(y)$:

 {\bf Step 1: }We first treat the case when $a^\star(y)=\underline{k}$. To this end, notice that for any $\e\in(0,y-a^\star(y))$, we have $\et^{\,\underline{k}+\e}-K\le v(\underline{k}+\e;y)\le v(\underline{k}+\e;\tilde{y})=(\et^{\,\underline{k}+\e}-K)/R(\underline{k}+\e;\tilde{y})$ (due to Proposition \ref{prop:compare1} and the definition of $R(\cdot;\cdot)$ in \eqref{eq:ratio}). Hence, if $X$ has unbounded variation,  
\begin{align}
\et^{\underline{k}}=\lim_{\e\downarrow 0}\frac{(\et^{\underline{k}+\e}-K)-(\et^{\underline{k}}-K)}{\e}\le&\lim_{\e\downarrow0}\frac{v(\underline{k}+\e;y)-v(\underline{k};y)}{\e}\le\lim_{\e\downarrow0}\frac{(\et^{\underline{k}+\e}-K)/R(\underline{k}+\e;\tilde{y})-(\et^{\underline{k}}-K)}{\e}=\et^{\underline{k}},\nn\end{align}
where we used the facts that $\p_xR(x,y)=\overline{\Lambda}(x-y)(K-\et^yg(x;y))/v(x;y)$ and that $x=\underline{k}$ is a root of $\et^yg(x;y)=K$ for every $y\ge\underline{k}$. %Moreover, if $X$ has bounded variation, we similarly have
%\[0\le\lim_{\e\downarrow 0}\left((\et^{\underline{k}+\e}-K)-(\et^{\underline{k}}-K)\right)\le \lim_{\e\downarrow0}\left(v(\underline{k}+\e;y)-v(\underline{k};y)\right)\le\lim_{\e\downarrow0}\bigg(\frac{\et^{\underline{k}+\e}-K}{R(\underline{k}+\e;\tilde{y})}-(\et^{\underline{k}}-K)\bigg)=0.\]

{\bf Step 2:} We treat the case when $a^\star(y)\in(\underline{k},y)$. This is the only case left, thanks to the proven fact that $y\notin\mc{S}^y$. Hence, without loss of generality, we assume that, there is a $\delta>0$ sufficiently small, such that $a^\star\equiv a^\star(y)\in(\underline{k}+\delta,y-\delta)$. In the proof below we use similar arguments to those of Proposition \ref{prop:sfit}. To be more precise, we consider the mapping
\begin{align}
(x,a)\mapsto {\bar U}(x,a;y):=&\ex_x\big[\et^{-(r+q)T_y^+}\ind_{\{T_y^+<T_a^-\}}\big] v(y;y) + \ex_x\big[\et^{-(r+q)T_a^-}\underline{v}(X_{T_a^-})\ind_{\{T_a^-<T_y^+\}}\big]\nn
%&+\ex_x[\et^{-(r+q)T_a^-}h(X_{T_a^-})\ind_{\{T_a^-<T_y^+\}}]\nn
% \\ =:&\Gamma_1(x,a;y)+\Gamma_2(x,a;y)+\Gamma_3(x,a;y),\nn
\end{align}
for all $x\in(a^\star,y)$ and $a\in(\underline{k}-\delta, x]$. %, where
%$h(x)$ is defined in \eqref{h}. %=\underline{v}(x)-(\et^x-K).$
By using \eqref{up} and \eqref{eq:loeffen} (with strictly increasing, continuously differentiable testing function $\underline{v}(\cdot)$), we obtain that (see also %\eqref{eq:56})
\eqref{eq:57})
\begin{multline}
{\bar U}(x,a;y)=
% \Gamma_1(x,a;y)=&
\underline{v}(x)+\frac{W^{(r+q)}(x-a)}{W^{(r+q)}(y-a)}\left(v(y;y)-\underline{v}(y)\right)\\
%&-KZ^{(r+q)}(x-a)+K\frac{Z^{(r+q)}(y-a)}{W^{(r+q)}(y-a)}W^{(r+q)}(x-a)
%\nn\\
% \end{align}
% For $\Gamma_3$, 
% \begin{align}
% \Gamma_3(x,a;y)=&
+\int_{(a,y)}\bigg(\frac{W^{(r+q)}(x-a)W^{(r+q)}(y-w)}{W^{(r+q)}(y-a)}-W^{(r+q)}(x-w)\bigg)\,\left(\chi(w)-q\,\underline{v}(w)\right)\,\diff w.\nn
\end{multline}
%where
%\benn
%Z_1^{(r+q-\psi(1))}(x) = 1+(r+q-\psi(1))\int_{(0,x\vee0)}\et^{-z}W^{(r+q)}(z)\diff z
%\eenn
%where $f$ and $Z_1^{(r+q-\psi(1))}$ are defined in \eqref{eq:ffun} and \eqref{eq:bigW3}. 
The function ${\bar U}(x,a;y)$ is continuously differentiable in $a$ ($x$, resp.) for all $a\in(\underline{k}+\delta, x)$ ($x\in(a,y)$, resp.). 
Indeed, the first line of ${\bar U}(x,a;y)$ is obviously continuously differentiable in $a$ ($x$, resp.). 
Moreover, we know 
from \eqref{eq:57}  and \eqref{eq:underlinev} that $\chi(\cdot)-q\,\underline{v}(\cdot)$ is decreasing and hence negative continuous function that is uniformly bounded over $[\underline{k},y)$,
%\rd{NO - from Lemma \ref{lem:ffun00} that $f(\cdot)$ is decreasing and hence a non-negative, decreasing continuous function that is uniformly bounded over $[\underline{k},\infty)$,} 
hence %from Lemma \ref{lem:diffV3} in the Appendix 
% we know that $\Gamma_3(x,a;y)$ 
the integral part is also continuously differentiable in $a$ ($x$, resp.). 
% for all $a\in(\underline{k}+\delta, x)$ ($x\in(a,y)$, resp.). 
Thus, for all $x>a^\star$, we have by the optimality of $a^\star$, that the following first order condition for ${\bar U}(x,a;y)$ holds at $a=a^\star$:  
\begin{multline}
0=\p_a{\bar U}(x,a;y)|_{a=a^\star}=\frac{W^{(r+q)\prime}(y-a^\star)W^{(r+q)\prime}(x-a^\star)}{\left(W^{(r+q)}(y-a^\star)\right)^2}\bigg(\frac{W^{(r+q)}(x-a^\star)}{W^{(r+q),\prime}(x-a^\star)}-\frac{W^{(r+q)}(y-a^\star)}{W^{(r+q)\prime}(y-a^\star)}\bigg)\\
\times\bigg(v(y;y)-\underline{v}(y)+\int_{(a^\star,y)}W^{(r+q)}(y-w)\left(\chi(w)-q\,\underline{v}(w)\right)\diff w\bigg).\label{eq:kmn1}
\end{multline}
Because $W^{(r+q),\prime}(x)>0$ for all $x>0$, the pre-factor of the above expression is positive. Moreover, due to the fact that the mapping $x\mapsto W^{(r+q),\prime}(x)/W^{(r+q)}(x)$ is strictly decreasing over $(0,\infty)$ (see Lemma \ref{lem W}), we know that 
\[\frac{W^{(r+q)}(x-a^\star)}{W^{(r+q),\prime}(x-a^\star)}-\frac{W^{(r+q)}(y-a^\star)}{W^{(r+q)\prime}(y-a^\star)}<0,\]
since $0<x-a^\star<y-a^\star$. Hence, necessarily, we have 
\be
v(y;y)-\underline{v}(y)+\int_{(a^\star,y)}W^{(r+q)}(y-w)\left(\chi(w)-q\,\underline{v}(w)\right)\diff w=0.\label{fiteq1}
\ee
On the other hand,  we have $v(x;y)\equiv {\bar U}(x,a^\star;y)$ for $x\in(a^\star,y)\subset(\underline{k},\infty)$, therefore $\underline{v}'(x)=\et^x$ and using \eqref{fiteq1} we obtain that
\begin{align}
\p_x v(x;y)=\p_x{\bar U}(x,a^\star;y)=\et^x-\int_{(a^*,x)}W^{(r+q),\prime}(x-w)\left(\chi(w)-q\underline{v}(w)\right)\diff w.
%\quad+(r+q-\psi(1))\et^{a^\star}W^{(r+q)}(x-a^\star)-\frac{Z_1^{(r+q-\psi(1))}(y-a^\star)}{W^{(r+q)}(y-a^\star)}\et^yW^{(r+q)\prime}(x-a^\star)
%\nn\\&
%\quad+K\frac{Z^{(r+q)}(y-a^\star)}{W^{(r+q)}(y-a^\star)}W^{(r+q)\prime}(x-a^\star)-\int_{(a^\star,x)}W^{(r+q)\prime}(x-w)f(w)\diff w
%\nn\\&
%\quad+\frac{W^{(r+q)\prime}(x-a^\star)}{W^{(r+q)}(y-a^\star)}\int_{(a^\star,y)}W^{(r+q)}(y-w)f(w)\diff w.
\label{eq:kmn2}
\end{align}
From \eqref{fiteq1} and \eqref{eq:kmn2} we have for all $x\in(a^\star,y)$ that
\begin{align}
\big|\p_xv(x,y)-\et^{a^\star}\big|
%=&\;\big|\p_xv(x,y)+\p_a{\bar U}(x,a;y)|_{a=a^\star}-\et^{a^\star}\big|\nn\\
\le&\;\big|\et^x-\et^{a^\star}\big|+ W^{(r+q)}(x-a^\star)\cdot\sup_{w\in[\underline{k}+\delta,y]}\big|\chi(w)-q\underline{v}(w)\big|,\nn
\end{align}
where we have
used the absolute bound for $\chi(w)-q\underline{v}(w)$ and the fact that $W^{(r+q)}(0)=0$ in the case of unbounded variation. 
%\begin{align}
%0\le&\int_{(a^\star,x)}W^{(r+q)\prime}(x-w)f(w)\diff w\le\sup_{w\ge\underline{k}+\delta}f(w)\int_{(a^\star,x)}W^{(r+q)\prime}(x-w)\diff w\nn\\
%=&W^{(r+q)}(x-a^\star)\cdot \sup_{w\ge\underline{k}+\delta}f(w).\nn
%\end{align}
As a consequence, we see that $|\p_xv(x,y)-\et^{a^\star}|$ converges to 0 as $x\downarrow a^\star$ and this completes the proof.
%
%If $X$ has bounded variation, we similarly have
%\begin{align}
%&\lim_{x\downarrow a^\star}\ol{U}(x,a;y)\nn\\
%=&\et^{a^\star}-K+\frac{W^{(r+q)}(0)}{W^{(r+q)}(y-a^\star)}\bigg(v(y,y)-Z_1^{(r+q-\psi(1))}(y-a^\star)\et^y+KZ^{(r+q)}(y-a^\star)+\int_{a^\star}^yW^{(r+q)}(y-w)f(w)\diff w\bigg)\nn\\
%=&\et^{a^\star}-K.
%\end{align}
%
%{\bf Step 3:} If $X$ has bounded variation and $a^\star(y)=y$, then the mapping  $(x,a)\mapsto\ol{U}(x;a;y)$ is 
%\rd{
%\[v(y;y)-\et^yZ_1^{(r+q-\psi(1))}(y-a^\star)+KZ^{(r+q)}(y-a^\star)+\int_{(a^\star,y)}W^{(r+q)}(y-w)f(w)\diff w=0\]
%}
%\rd{As $x\downarrow a^\star$,
%\begin{align}
%\ol{U}(x,a;y)=&\frac{1/d}{W^{(r+q)}(y-a^\star)}v(y,y)+\et^{a^\star}-\frac{Z_1^{(r+q-\psi(1))}(y-a^\star)}{W^{(r+q)}(y-a^\star)}\et^{y}\frac{1}{d}-K+K\frac{Z^{(r+q)}(y-a^\star)}{W^{(r+q)}(y-a^\star)}\frac{1}{d}\nn\\
%&+\frac{1}{d}\int_{a^\star}^y\frac{W^{(r+q)}(y-w)}{W^{(r+q)}(y-a^\star)}f(w)\diff w\nn\\
%=&\et^{a^\star}-K+\frac{1/d}{W^{(r+q)}(y-a^\star)}\bigg(v(y,y)-Z_1^{(r+q-\psi(1))}(y-a^\star)\et^y+KZ^{(r+q)}(y-a^\star)+\int_{a^\star}^yW^{(r+q)}(y-w)f(w)\diff w\bigg)
%\end{align}
%}
\end{proof}

\section{Preliminaries on scale functions}\label{sec:pre}

In this appendix, we briefly review a collection of useful results on spectrally negative L\'evy processes and their scale functions. 

The $r$-scale function $W^{(r)}(\cdot)$ is closely related to exit problems of the spectrally negative L\'{e}vy process $X$ with respect to first passage times of the form \eqref{firstpass}.
% \[
% T_{x}^{\pm}=\inf\{  t\geq0:X_{t} \gtrless x\},\quad  x\in%
% \mathbb{R}.
% \]
A well-known fluctuation identity of spectrally negative L\'{e}vy
processes (see e.g. \cite[Theorem 8.1]{Kyprianou2006}) is given, for
$r\geq0$ and $x\in[a, b]$, by 
\begin{align}
\mathbb{E}_{x}[  \et^{-rT_{b}^{+}} \ind_{\{  T_{b}^{+}<T_{a}^{-}\}
}]  =\frac{W^{(r)}(x-a)}{W^{(r)}(b-a)} \,. \label{up} 
\end{align}
Moreover, letting  
$F(\cdot)$ be a positive, non-decreasing, continuously differentiable function on $\R$, 
and further supposing that $F(\cdot)$ has an absolutely continuous derivative with a bounded density over $(-\infty,b]$ 
if $X$ has paths of unbounded variation,
it is known from \cite[Theorem 2]{loeffen_outshoot} that, for any fixed $a,b$ such that $-\infty<a<b<\infty$, we have
 \begin{multline}
\ex_x[\et^{-r T_a^-}F(X_{T_a^-})\ind_{\{T_a^-<T_b^+\}}]\\
=F(x)-\frac{W^{(r)}(x-a)}{W^{(r)}(b-a)}F(b)+\int_{(a,b)}\bigg(\frac{W^{(r)}(x-a)}{W^{(r)}(b-a)}W^{(r)}(b-w)-W^{(r)}(x-w)\bigg)\cdot(\mc{L}-r)F(w)\diff w.\label{eq:loeffen}
\end{multline}

The following lemma gives the behavior of scale functions at $0+$ and $\infty
$; see, e.g.,   
% Kuznetsov et al. 
\cite[Lemmas 3.1, 3.2, 3.3]{Kuznetsov_2011}, and 
% Egami et al. 
\cite[(3.13)]{Leung_Yamazaki_2011}.

\begin{lem} \label{lem W}
For any $r>0$,
\begin{align*}
W^{(r)}(0)    =\left\{
\begin{array}
[c]{ll}%
0, & \text{unbounded variation},\\
\frac{1}{\gamma}, & \text{bounded variation},%
\end{array}
\right. 
W^{(r)^{\prime}}(0+)    =\left\{
\begin{array}
[c]{ll}%
\frac{2}{\sigma^{2}}, & \text{if }\sigma>0,\\
\infty, & \text{if }\sigma=0\text{ and }\Pi(-\infty,0)=\infty,\\
\frac{r+\Pi(-\infty,0)}{\gamma^{2}}, & \text{if }\sigma=0\text{ and }\Pi
(-\infty,0)<\infty,
\end{array}
\right.
\end{align*}
Moreover, $\et^{-\Phi(r) x}W^{(r)}(x)$ converges to $\frac{1}{\psi'(\Phi(r))}\in(0,\infty)$ as $x\to\infty$, 
The function $x\mapsto\frac{W^{(r)\prime}(x)}{W^{(r)}(x)}$ is strictly decreasing, and converges to
$\Phi(r)$ as $x\to\infty$.
\end{lem}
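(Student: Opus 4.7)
The four assertions are classical facts about scale functions of spectrally negative Lévy processes. My plan rests on two complementary tools: an initial-value/Tauberian argument applied to the defining Laplace identity \eqref{laplace} for the boundary behavior at $0+$, and the Esscher transform with parameter $\Phi(r)$ for the large-$x$ asymptotics and the monotonicity of the logarithmic derivative.

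For the boundary values, the initial-value theorem applied to \eqref{laplace} yields $W^{(r)}(0+)=\lim_{\beta\to\infty}\beta/(\psi(\beta)-r)$. Using \eqref{decomp} and \eqref{psi bv}, one checks that $\psi(\beta)/\beta\to\infty$ in the unbounded-variation case (by Fatou's lemma on the jump integral when $\sigma=0$, since $\et^{\beta z}-1-\beta z\ge 0$ for $z<0$) and $\psi(\beta)/\beta\to\gamma$ in the bounded-variation case (by dominated convergence using $\int_{(-1,0)}|z|\Pi(\diff z)<\infty$); the two values of $W^{(r)}(0)$ follow. For the right derivative, integration by parts gives $\int_0^\infty\et^{-\beta x}W^{(r)\prime}(x)\diff x=\beta/(\psi(\beta)-r)-W^{(r)}(0)$, and the initial-value theorem yields
\[
W^{(r)\prime}(0+)=\lim_{\beta\to\infty}\beta\big[\beta/(\psi(\beta)-r)-W^{(r)}(0)\big].
\]
When $\sigma>0$, the leading term $\psi(\beta)\sim\half\sigma^2\beta^2$ produces $2/\sigma^2$. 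When $\sigma=0$, I would rewrite the bracket as $(\gamma\beta-\psi(\beta)+r)/(\gamma(\psi(\beta)-r))$, observe from \eqref{psi bv} that $\gamma\beta-\psi(\beta)+r=r+\int_{(-\infty,0)}(1-\et^{\beta z})\Pi(\diff z)\to r+\Pi(-\infty,0)$ by monotone convergence, and combine with $\psi(\beta)-r\sim\gamma\beta$; this gives $(r+\Pi(-\infty,0))/\gamma^2$ when $\Pi(-\infty,0)<\infty$ and $\infty$ otherwise.

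For the behavior at $\infty$, I would introduce the Esscher change of measure $\diff\pr^{\prr}/\diff\pr|_{\mathcal{F}_t}=\exp(\prr X_t-rt)$, under which $X$ is still a spectrally negative Lévy process, with Laplace exponent $\tilde\psi(\beta)=\psi(\beta+\prr)-r$. Since $\tilde\psi'(0)=\psi'(\prr)>0$ (by strict convexity of $\psi$), the process drifts to $+\infty$ under $\pr^{\prr}$. Denoting by $W_{\prr}$ the associated $0$-scale function, the standard identity $W^{(r)}(x)=\et^{\prr x}W_{\prr}(x)$ reduces both remaining claims to the corresponding facts for $W_{\prr}$: the limit $W_{\prr}(\infty)=1/\psi'(\prr)$ follows from the Laplace identity $\int_0^\infty\et^{-\beta x}W_{\prr}(x)\diff x=1/\tilde\psi(\beta)$ evaluated as $\beta\downarrow 0$ via monotone convergence, and then $W^{(r)\prime}(x)/W^{(r)}(x)=\prr+W_{\prr}'(x)/W_{\prr}(x)\to\prr$ as $x\to\infty$.

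The main obstacle is the strict monotonicity of $W^{(r)\prime}/W^{(r)}$, which amounts to strict log-concavity of $W_{\prr}$. Since $X$ drifts to $+\infty$ under $\pr^{\prr}$, I would invoke the renewal representation $W_{\prr}(x)=\pr^{\prr}(-\underline{X}_\infty\le x)/\psi'(\prr)$, with $\underline{X}_\infty:=\inf_{t\ge 0}X_t$, so the claim reduces to log-concavity of the distribution function of $-\underline{X}_\infty$. This follows from the Wiener--Hopf factorization, which expresses $-\underline{X}_\infty$ as a geometrically-stopped sum of descending ladder-height increments, together with the continuity assumption on $\Pi(\diff x)$ imposed in the paper that rules out atoms in the ladder-height law; alternatively, the strict monotonicity can be cited directly from \cite[Lemma 3.3]{Kuznetsov_2011}.
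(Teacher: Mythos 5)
The paper does not actually prove this lemma: it is stated in the appendix of preliminaries with a pointer to \cite[Lemmas 3.1--3.3]{Kuznetsov_2011} and \cite[(3.13)]{Leung_Yamazaki_2011}. Your proposal therefore goes further than the paper by reconstructing the arguments, and for three of the four assertions it follows the standard route correctly: the initial-value computation $\lim_{\beta\to\infty}\beta^2/(\psi(\beta)-r)$ (after integration by parts) for the behaviour at $0+$, and the Esscher tilt $W^{(r)}(x)=\et^{\Phi(r)x}W_{\Phi(r)}(x)$ combined with the final-value limit $\beta/\tilde\psi(\beta)\to 1/\psi'(\Phi(r))$ for the asymptotics at infinity. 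Two small points to tighten there: (a) your $\gamma$-rewriting of the bracket only makes sense in the bounded-variation case, so the sub-case $\sigma=0$ with unbounded variation (where also $W^{(r)\prime}(0+)=\infty$) needs the direct estimate $\psi(\beta)/\beta^2\to 0$ instead; (b) passing from the Laplace transform to $W^{(r)\prime}(0+)$ and to $W_{\Phi(r)}(\infty)$ requires a word on why the Abelian/Tauberian step is legitimate ($W^{(r)}$ and $W_{\Phi(r)}$ are monotone, which saves the final-value argument; for $W^{(r)\prime}(0+)$ one should either assume the limit exists, as the smoothness hypotheses of the paper guarantee, or argue as in \cite{Kuznetsov_2011}).

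The genuine weak link is your primary argument for the strict monotonicity of $W^{(r)\prime}/W^{(r)}$. Reducing it to log-concavity of the distribution function of $-\underline{X}_\infty$ under the tilted measure is fine, but the claim that this log-concavity ``follows from the Wiener--Hopf factorization'' because $-\underline{X}_\infty$ is a geometric compound of ladder heights is not a theorem: log-concavity of distribution functions is not preserved under compound-geometric mixing in general, and you give no argument specific to the spectrally negative structure. The standard proof is instead the two-sided exit comparison: for $0\le y\le z$ and $t>0$, spatial homogeneity and monotonicity of exit probabilities give that $t\mapsto W^{(r)}(y+t)/W^{(r)}(z+t)$ is non-decreasing, whence $W^{(r)\prime}/W^{(r)}$ is non-increasing, with strictness handled separately. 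Your fallback of citing \cite[Lemma 3.3]{Kuznetsov_2011} directly is exactly what the paper does and is acceptable, but as written the Wiener--Hopf route should not be presented as a proof.
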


%\bibliographystyle{imsart-nameyear}
%\bibliography{ospbib2}
\def\cprime{$'$}

\end{document}